\definecolor{linkcol}{rgb}{0,0,0.9} 
\definecolor{citecol}{rgb}{0.8,0,0}
\DeclareMathAlphabet{\mathitbf}{OML}{cmm}{b}{it}
\definecolor{gray230}{RGB}{240,240,240}
\newtheorem{theorem}{Theorem}[section]
\newtheorem{definition}{Definition}
\newtheorem{observation}{Observation}
\newtheorem{lemma}[theorem]{Lemma}
\newtheorem{example}{Example}
\renewcommand{\comment}[1]{}
\definecolor{gray230}{RGB}{255,255,255}
\newcommand{\prob}[2][]{\text{\bf Pr}\ifthenelse{\not\equal{}{#1}}{_{#1}}{}\!\left[#2\right]}
\newcommand{\expect}[2][]{\text{\bf E}\ifthenelse{\not\equal{}{#1}}{_{#1}}{}\!\left[#2\right]}
\newcommand{\argmax}{\operatorname{argmax}}
\newcommand{\agind}[1][i]{_{#1}}
\newcommand{\inversed}[1]{#1^{-1}}
\newcommand{\ironed}{\bar}
\newcommand{\differentiated}[1]{#1'}
\newcommand{\fortype}{\tilde}
\newcommand{\estimated}{\hat}
\newcommand{\sampled}{\hat}
\newcommand{\noaccents}[1]{#1}
\newcommand{\composed}[3]{#1{#2{#3}}}
\newcommand{\newindexedvar}[4][\noaccents]{%
\expandafter\newcommand\expandafter{\csname #2\endcsname}{#1{#4}}%
\expandafter\newcommand\expandafter{\csname #2s\endcsname}{#1{\boldsymbol{#4}}}%
\expandafter\newcommand\expandafter{\csname #2sm#3\endcsname}[1][#3]{#1{\boldsymbol{#4}}_{-##1}}%
\expandafter\newcommand\expandafter{\csname #2#3\endcsname}[1][#3]{#1{#4}\agind[##1]}%
\expandafter\newcommand\expandafter{\csname #2#3th\endcsname}[1][#3]{#1{#4}_{(##1)}}%
}
\composed{\differentiated}{\ironed}]{margiwal}{k}{\wal}
\composed{\differentiated}{\ironed}]{margiyal}{k}{\yal}
\composed{\differentiated}{\ironed}]{imumarg}{k}{\murev}
\renewcommand{\[}{\left[}
\renewcommand{\]}{\right]}
\newcommand{\SmallUpperCase}[1]{\textsc{\MakeLowercase{#1}}}
\begin{document}

\title{QUAD: A Quality Aware Multi-Unit Double Auction Framework for IoT-Based Mobile Crowdsensing in Strategic Setting}
\author{ Vikash Kumar Singh\thanks{School of Computer Science and Engineering, VIT-AP University, Amaravati, India, Email: \url{vikash.singh@vitap.ac.in}} \href{https://orcid.org/0000-0003-4221-7622}{\includegraphics[scale=0.0048]{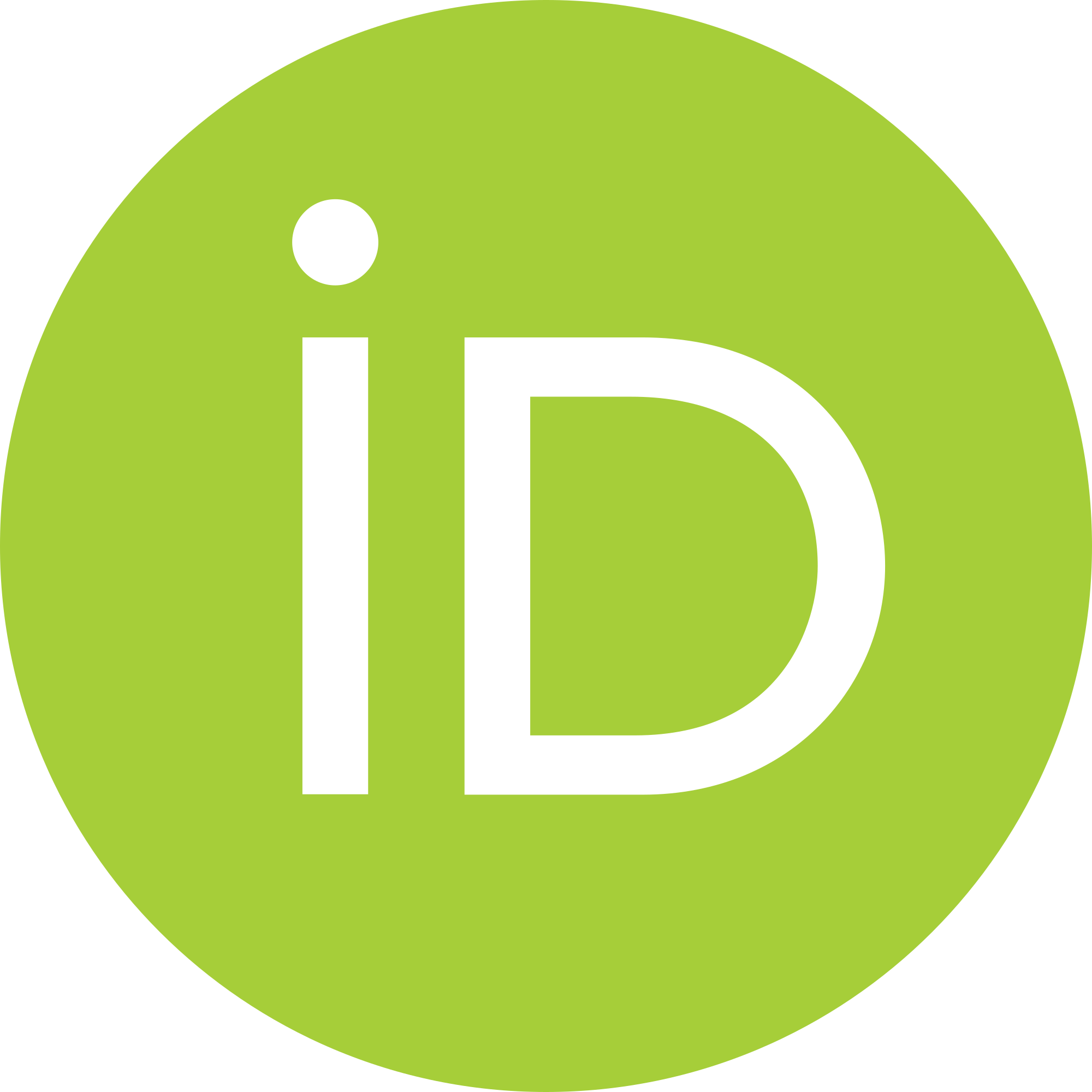}} \and Anjani Samhitha Jasti\thanks{School of Computer Science and Engineering, VIT-AP University, Amaravati, India, Email: \url{samhitha.20bcd7070@vitap.ac.in}} \href{https://orcid.org/0000-0002-1025-1466}{\includegraphics[scale=0.0048]{orcid.png}} \and Sunil Kumar Singh\thanks{School of Computer Science and Engineering, VIT-AP University, Amaravati, India, Email: \url{sunil.singh@vitap.ac.in}} \href{https://orcid.org/0000-0001-5262-6636}{\includegraphics[scale=0.0048]{orcid.png}} \and Sanket Mishra\thanks{School of Computer Science and Engineering, VIT-AP University, Amaravati, India, Email: \url{sanket.mishra@vitap.ac.in}} \href{https://orcid.org/0000-0002-3193-8160}{\includegraphics[scale=0.0048]{orcid.png}} 
}

\maketitle

\begin{abstract}
Crowdsourcing with the intelligent agents carrying smart devices is becoming increasingly popular in recent years. It has opened up meeting an extensive list of real life applications such as measuring air pollution level, road traffic information, and so on. In literature this is known as \emph{mobile crowdsourcing} or \emph{mobile crowdsensing}. In this paper, the discussed set-up consists of multiple task requesters (or task providers) and multiple IoT devices (as \emph{task executors}), where each of the task providers is having multiple homogeneous sensing tasks. Each of the task requesters report bid along with the number of homogeneous sensing tasks to the platform. On the other side, we have multiple IoT devices that reports the \emph{ask} (charge for imparting its services) and the number of sensing tasks that it can execute. The valuations of task requesters and IoT devices are \emph{private} information, and both might act \emph{strategically}. One assumption that is made in this paper is that the bids and asks of the agents (\emph{task providers} and \emph{IoT devices}) follow \emph{decreasing marginal returns} criteria. In this paper, a truthful mechanism is proposed for allocating the IoT devices to the sensing tasks carried by task requesters, that also keeps into account the quality of IoT devices. The mechanism is \emph{truthful}, \emph{budget balance}, \emph{individual rational}, \emph{computationally efficient}, and \emph{prior-free}. The simulations are carried out to measure the performance of the proposed mechanism against the benchmark mechanisms. The code and the synthetic data are available at \textcolor{blue}{\textbf{https://github.com/Samhitha-Jasti/QUAD-Implementation}}. \\

\noindent \textbf{Keywords:} Crowdsensing; Internet of Things; Auctions; Strategic; Quality; Truthfulness           
\end{abstract}



\section{Introduction}
\label{s:intro}
Over the past decades, there has been an unprecented growth of the mobile users with smartphones (mobile phones with embedded-sensors). As per the Ericsson mobility report of 2017 \cite{Heuveldop_2017} the number of worldwide mobile subscriptions is growing at around $4\%$ every year, reaching to 9.1 billion in 2022. As the mobile users (or crowd workers) are equipped with sensing devices, so the researchers thought of utilizing these mobile users for sensing and collecting data for several real world applications and then distributing it to the community or organization. For example, measuring the air pollution level across the cities \cite{Pan2017CrowdsensingAQ, ijgi10020046, 9068628}, giving information about the road traffic \cite{STANIEK2021554, 6815170}, noise pollution assessment \cite{doi:10.1177/2399808320987567, 10.1145/1791212.1791226}, information about the potholes \cite{ENIGO2016316, s20195564}, and many more \cite{Nagatani:2013:ERN:2421033.2421037, Poblet2014}. The process of completion of task(s) by the crowd workers or group of common people equipped with sensing devices in the form of an open call give rise to a new paragmatic field of study termed as \emph{mobile crowdsensing} (a.k.a. \emph{mobile crowdsourcing} or \emph{crowdsensing}) \cite{book1, fi14020049, 10.1145/2746285.2746306, Phuttharak2019ARO, s20072055}. The general framework of the mobile crowdsensing consists of three entities: (1) task requester(s) or task provider(s), (2) platform (or third party), and (3) task executors (crowd workers with smart devices). First the task provider(s) provide the sensing tasks to the platform. Once the platform receives the sensing tasks, it is supplied to the task executor(s) that are present on the other side of the crowdsensing market for execution purpose. The task executors equipped with sensing devices completes the tasks and submits the completed tasks to the platform. The platform gives back the completed tasks to the task provider(s). The task executors receives the incentive in return of their services. In mobile crowdsensing market \cite{Singh2019, s20072055, 8570744}, the two challenges that are of major concern in strategic setting\footnote{By strategic, it is meant that the participating agents will try to maximize their utility by mis-reporting their private information. Here, utility is quasi-linear utility, means the difference between the true valuation and the payment. By private it is meant that the bids and asks are only known to the respective agents and not known to others.} are:
\begin{enumerate}
\item Which set of task executors should be selected for task execution purpose?
\item What incentives is to be given to the task executors in exchange of their services?
\end{enumerate}
Most of the works in mobile crowdsensing is carried out answering the questions raised above in points 1 and 2 \cite{Mukhopadhyay2021, Singh_2020, Singh2019, 10.1145/3371425.3371459}, in strategic setting. Apart from the challenges mentioned above, another challenge that persists in mobile crowdsourcing system is to have a large group of crowd workers in crowdsensing market. But, the question is: \emph{how to drag more number of crowd workers to the mobile crowdsensing market}? One of the plausible solutions is to provide incentives to the crowd workers in exchange of their services. In past, several works have been carried out that design the schemes such that the task executors receive incentives in some terms (may be money \cite{Mukhopadhyay2021, Singh_2020, Phuttharak2019ARO, Singh2019}, or some social recognition \cite{Singh_2022}). Another challenging aspect in crowdsensing environment is to get the \emph{quality} data from the task executors \cite{Singh_2020, 10.5555/3061053.3061148, 10.5555/2832249.2832277}. For this purpose, in past, the mechanisms are proposed that along with truthfulness keeps track of \emph{quality} task executors \cite{Singh2019, Mukhopadhyay2021, Singh_2020,10.5555/2832249.2832277}. In \cite{Singh_2020} a truthful mechanism is proposed for the crowdsourcing set-up in combinatorial environment that also ensure that the task requesters receive quality data. In \cite{Mukhopadhyay2021}, a budget feasible truthful mechanism is proposed for the set-up with single task requester and multiple IoT devices (as task executors) in strategic setting. Here, the task requesters have limited budget along with the constraint that the overall budget is not available apriori but arrive in the system on incremental basis in several rounds. The set-up with multiple task requesters and multiple IoT devices with each of the task requesters is having a single task along with the budget, is discussed in \cite{Singh2019}. For this, a truthful mechanism is proposed that also take into account the quality of IoT devices along with satisfying the constraint that the total payment made to the IoT devices is within the budget of the respective task requesters.
\begin{figure*}
                \centering
                \includegraphics[scale=0.92]{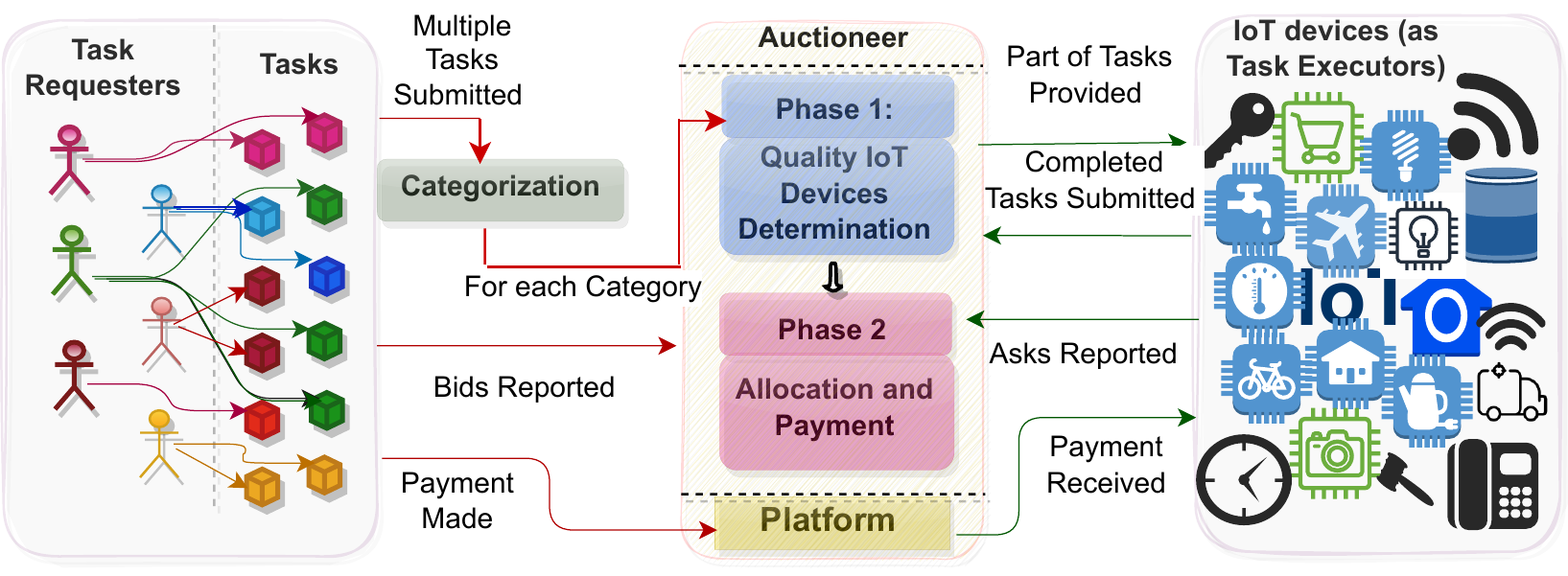}
                \caption{A Multi-unit Double Auction Framework for IoT-based Mobile Crowdsensing}
                \label{fig:2a10}
        \end{figure*}%

\indent Motivated from the above discussed scenarios, in this paper, one of the scenarios of IoT-based mobile crowdsensing is investigated using double auction framework. The detailed overview of the proposed framework is shown in Figure \ref{fig:2a10}. In this, there are multiple task requesters and multiple IoT devices. Each of the task requesters is having multiple homogeneous tasks and each IoT device is capable of performing subset of homogeneous tasks\footnote{For example, one task requester may have tasks of measuring air pollution at different locations, other task requester may have tasks of detecting road traffic through different routes in the city, and so on. On the other hand, a particular set of IoT devices will be capable of measuring the air pollution level, or some may be having capability of getting information about the road traffic and so on. The agents participating in measuring air pollution level belongs to one category and the agents participating in determination of road traffic condition belongs to other category and likewise.}. Based on the type of tasks floated by the task requesters and the type of tasks the IoT devices are capable of performing, the task requesters and IoT devices are categorized into multiple different categories. In each category, the task requesters submit the tasks along with the bids to the platform. On the other side, there are multiple task executors that reveals the number of tasks that they can perform along with the cost (or \emph{ask}) for executing the tasks. The bids and asks of the agents are termed as valuations. The valuation of the agents are \emph{private} information. As there are multiple task requesters and multiple task executors, so the problem can be modeled using double auction. It is to be noted that the proposed framework is a two-phase process. In the first phase, the set of quality IoT devices is determined. For this purpose, the platform gives the part of the available tasks to the IoT devices for execution purpose. Each of the IoT devices execute the tasks and submits the completed tasks back to the platform. Once the platform receives the completed tasks, it distributes the completed tasks to other IoT devices for grading (or ranking) purpose. The IoT devices provide the ranking over the completed tasks and submit the ranks of the tasks to the platform. Based on the ranking received, the set of quality IoT devices is selected among the available IoT devices by the platform. The output of the first phase is a set of quality IoT devices. In the second phase, the platform allocate the tasks to the quality IoT devices for execution purpose. The quality IoT devices execute the tasks and submit it back to the platform. The platform gives the completed tasks to the respective task requesters and the IoT devices get paid in return of their services. The bid and ask values reported by the task requesters and IoT devices respectively are \emph{private} information. \\
\indent In this paper, for the problem under consideration a two-fold mechanism namely \emph{\textbf{qu}ality \textbf{a}ware multi-unit \textbf{d}ouble auction mechanism} (QUAD) is proposed. In the first fold, the set of quality IoT devices is determined using the subroutine \emph{IoT-devices quality determination mechanism} motivated by \cite{NNis_Pre_2007,T.roughgarden_2016}. After that in the second fold from among the set of quality IoT devices, the subset of IoT devices are selected as winners and payment are made using \emph{allocation and pricing rule} motivated by \cite{SegalHalevi2018MUDAAT, T.roughgarden_20141}. In the upcoming subsection the contributions made in this paper are listed.         

\subsection{Contributions}
In the presence of strategic agents, the objective is to design a truthful mechanism that will do the following: (1) determine the quality IoT devices, (2) which IoT devices will execute the tasks?, and (3) what price will be paid to the winning IoT devices?
In particular, our contributions are:
\begin{itemize}
\item One of the scenarios of crowdsensing is investigated in strategic setting. The set-up consists of multiple task requesters and multiple IoT devices with each of the task requesters have multiple sensing tasks that are to be executed. The objective is to assign the floated sensing tasks to the quality IoT devices and to incetivize the IoT devices in return of their services.     

\item A two-fold \emph{truthful} mechanism is proposed namely QUAD for allocating tasks to the IoT devices. In  first fold, a very small part of the tasks from the available tasks is given to the IoT devices for execution purpose for estimating the quality of the IoT devices. It is required as it will help us to assure that the quality sensing data is provided by the IoT devices to the task providers. In second fold, the set of quality IoT devices is considered as winners and the payment is decided by utilizing the concept of mechanism design.

\item Theoretical analysis is carried out in Section \ref{sec:atppm} to show that the proposed mechanism is \emph{computationally efficient} (Lemma \ref{lemma:1a}), \emph{correct} (Lemma \ref{lemma:1b}), \emph{prior free} (Lemma \ref{lemma:1c}) \emph{truthful} (Lemma \ref{lemma:1d}), \emph{individual rational} (Lemma \ref{lemma:IR}), and \emph{budget balance} (Lemma \ref{lemma:BB}). Further probabilistic analysis is carried out to have an estimate on the number of tasks that will be executed by the quality IoT devices for any task provider (Lemma \ref{lemma:1l} and Lemma \ref{lemma:1101}).    

\item In the simulation results it is shown that QUAD is not vulnerable to manipulation and is compared with the benchmark mechanism namely \textbf{p}osted \textbf{p}rice \textbf{m}echanism (PPM) \cite{T.roughgarden_20141} and McAfee double auction (DA) \cite{RPM_The_1992, Jbre_INte_2005}. QUAD is compared with PPM on the ground of \emph{truthfulness} and \emph{budget balance}. Further, QUAD is compared with McAfee DA on the basis of satisfaction level of the agents.
\end{itemize}
\subsection{Paper Organization}
 The remainder of the paper is structured as follows. In Section \ref{sec:rw} related works in the fields of \emph{double auction}, \emph{mobile crowdsourcing}, and IoT is discussed. Section \ref{s:prelim} describes our proposed model and discuss about the game theoretic properties. The proposed mechanism, namely, QUAD is illustrated in section \ref{ref:ehpppm}. In section \ref{sec:atppm} the theoretical and probabilistic analysis of QUAD is carried out. The simulation results are depicted in section \ref{sec:ef}. The paper is concluded with the possible future directions in section \ref{se:conc}. 
\section{Related Prior Works}\label{sec:rw}
In this section, the related works in the areas of double auction, mobile crowdsensing, and Internet of Things (IoT) are discussed one by one in the given order.   

\subsection{Double Auction}
The double auction mechanism is a subroutine in QUAD; therefore, the related works in double auction is discussed in this section. In a double auction, there are multiple sellers who are ready to sell their items,  multiple buyers who are ready to buy the available items, and an auctioneer (or third-party). The double auction gives the flexibility to the two parties (buyers and sellers) to reveal their privately held type simultaneously and can act strategically \cite{RPM_The_1992,NNis_Pre_2007, MYERSON1983265}. A double auction is a mechanism that provide the platform for deciding that \emph{who will buy a particular item and at what price?} It is applicable to several application domains such as \emph{spectrum market} \cite{5062011, 7420720, Leyton-Brown7202}, \emph{Internet advertisement} \cite{DBLP:journals/corr/FeldmanG16}, \emph{emission trading market} \cite{10.1007/s10479-018-2826-y}, etc. For two-sided market, the impossibility result by \cite{MYERSON1983265} states that any mechanism that is incentive compatible (IC), budget balance (BB), and individual rational (IR) cannot maximize Gain-from-trade\footnote{It is defined as the difference between the total valuations of the buyers to the total asks of the sellers.} (or efficiency). \\
\indent In past, several works have been carried out in double auction \cite{PLOTT1990245, RPM_The_1992, 10.1007/3-540-45749-6_34}. In \cite{RPM_The_1992} the set-up is, there are multiple sellers and multiple buyers. Each seller wants to sell a single unit of good and each buyer wants to buy a single unit of good. A truthful mechanism is proposed for matching buyers to sellers. In addition to \emph{truthfulness}, McAfee's mechanism is individual rational, budget balance, and prior-free. By prior-free, it is meant that the mechanism does not make any assumption on the type of valuations the agents reveal. Some double auction mechanism makes assumption on the valuation of the agents such as agent's valuation as \emph{decreasing marginal returns} (DMR) \cite{Blumrosen2014ReallocationM, SegalHalevi2018MUDAAT}, \emph{additive valuation} \cite{doi:10.1287/moor.2021.1124, 45749, DBLP:journals/corr/FeldmanG16}, \emph{gross substitute} \cite{ijcai2018-68} or their valuation is represented by single parameter \cite{Gonen2017DYCOMAD,10.1145/1250910.1250914}. In \cite{Blumrosen2014ReallocationM} a mechanism is proposed for multi parametric agents with DMR valuation but is not asymptotically efficient. The competitive ratio achieved is $\frac{1}{48}$. In \cite{SegalHalevi2018MUDAAT} a truthful double auction mechanism is proposed that allows trading of multiple units of item per agent but with the constraint that the agents have decreasing marginal return valuation. Theoretical analysis shows that the  competitive ratio of the proposed mechanism is at least $1 - O(M\sqrt{\frac{ln~mk}{mk}})$. In \cite{ijcai2018-68} a truthful mechanism is designed for the set-up with multiple sellers and multiple buyers. Each of the sellers have multiple distinct goods and each buyer have infinite amount of money. A double auction based truthful mechanism is proposed that allow trading of multiple units of distinct goods per agents but with the constraint that the agents have gross substitute valuation.\\
\indent In this paper, we have utilized the idea presented in \cite{SegalHalevi2018MUDAAT} for allocating the tasks to the IoT devices and deciding their payment, once the quality IoT devices are determined.        

\subsection{Mobile Crowdsensing}
In order to get an overview on the current trend in mobile crowdsensing the readers can follow \cite{10.1145/3185504, 8570744, s20072055, 8733838, book1}. As discussed above, one of the major challenges in \emph{mobile crowdsensing} is \emph{how to motivate the group of workers carrying smartphones towards the crowdsourcing market?}\\
\indent For this purpose, several incentive based schemes are proposed for incentivizing the crowd workers \cite{Mukhopadhyay2021, 6848055, Yang2012MCN, 10.1145/3371425.3371459, DBLP:journals/tpds/WangGCG18}. In \cite{Mukhopadhyay2021} the set-up with single task requester and multiple task executors in mobile crowdsourcing is investigated in \emph{strategic} setting. Here, the task requester has single task and the limited budget associated with that task. However, the constraint that is preserved in the discussed set-up is that the overall budget is not available apriori but comes in incremental manner in the system in several rounds. Given the above discussed set-up a truthful mechanism is designed that ensures that the total payment made to the task executors is within the avaialble budget. In \cite{6848055} the set-up is such that the platform float the sensing tasks alongwith its location information, and on the other side of the MCS market the smartphone users provide the bids for the sensing tasks falling within their coverage area. The bids of the smartphone users are \emph{private} information. For this purpose, a \emph{reverse auction} based mechanism is proposed namely TRAC. TRAC is a \emph{two-phase mechanism}. In the first phase, a near-optimal approximation algorithm is developed for determining the winners with low computation complexity. The second phase determines the payment of the winners. \cite{DBLP:journals/tpds/WangGCG18} proposed a truthful mechanism that selects the quality crowd workers for completing the tasks. One of the contraint that is preserved in the set-up is that the quality of the crowd workers is changing frequently. In \cite{10.1145/3371425.3371459} auction is coupled with the experience model to have a fair competition among the crowd workers. In \cite{Yang2012MCN} the two separate incentive mechanisms are proposed for user-centric model and platform-centric model respectively. In case of platform-centric model, only one sensing task is floated and the total reward will be shared among the winning mobile phone users. For designing an incentive mechanism Stackelberg game is used, where the utility of the platform is maximized by calculating the unique Stackelberg equilibrium. For the user-centric model an auction-based incentive mechanism is proposed where the mobile phone users can select multiple tasks and can report their bid price to the platform. Based on the reported bid prices, the platform will decide the winners.\\
\indent  Also, in some cases the payment made to the task executors depend on the quality of the work performed by \cite{10.5555/2832249.2832277, 10.5555/3061053.3061148}. In \cite{10.5555/2832249.2832277} the prime focus is to provide the bonuses to the task executors for their exceptional work, so as to improve the overall utility of a task requester. In order to have the information about the impact of bonuses on the quality of work supplied by the task executors, the hidden Markov model is used. In \cite{10.5555/3061053.3061148}, the  output agreement mechanisms are used to have the true answers from the group of task executors. In \cite{7925505} the set-up consists of multiple task consumers and multiple smartphone users, where each of the task consumers have multiple tasks to complete. The objective is to assign the sensing tasks to the smartphone users for execution purpose. For the above discussed set-up a truthful mechanism is proposed that achieves max-min fairness. In \cite{7218592} the focus is on two-sided mobile crowdsourcing market that consists of multiple \emph{service users}, \emph{platform}, and multiple \emph{service providers}. Here, each of the service users can request for single service and each of the service providers can provide single service. The above discussed set-up is modeled using the double auction framework and a \emph{truthful} mechanism is proposed that also satisfies \emph{individual rationality}, and \emph{budget balance}. In \cite{9474925} the focus is on collaborative mobile crowdsourcing, where the tasks are acomplished by the group of IoT devices that communicate among themselves and share their operational activities.\\
\indent From the above discussed literature reviews it can be seen that the set-up discussed in this paper in IoT-based mobile crowdsensing in strategic setting is not considered. In this paper, for the discussed set-up a truthful mechanism is proposed that allocates the tasks provided by the task requesters to the quality IoT devices for execution purpose.                     
\subsection{Internet of Things (IoT)}
For detailed overview in IoT the readers can refer the following research works \cite{10.1145/3185504, ATZORI20102787, RAY2018291, 8614931, 7123563, fi14020049}. The term `Internet of Things' first came into picture in the year 1998 \cite{Ashton1999ThatO}. Later on the International Telecommunication Union (ITU) formally gave the concept of IoT in the year 2005.
\begin{table}[H]
\renewcommand{\arraystretch}{1.3}  
\caption{Several research directions in IoT}
\label{table_example}
\centering
  \begin{tabular}{|c||c|}
\hline
\textbf{Research paper} & \textbf{Research directions in IoT}\\
\hline
\cite{abraham2021ai} & Smart agriculture\\
\hline
\cite{9322470} & Privacy\\
\hline
\cite{9187421} & Smart cities\\
\hline
\cite{8891679} & Privacy and transportation\\
\hline
\cite{siarry2021fusion} & Healthcare\\
\hline
\cite{7823334} & Security and privacy\\
\hline
\cite{8766496} & Healthcare and well being\\
\hline
\cite{TALAVERA2017283} & Industries, agriculture, and environment \\
\hline
\cite{Alavi2018InternetOT} & Transportation, health, and smart cities \\
\hline
\cite{Garca2017AnalysisOS} & Security, privacy, and architecture\\
\hline
\cite{s141019582} & Quality of service, scalability, and iteroperability \\
\hline
\cite{fi7030329} & Security, privacy, and reliability.\\
\hline
\cite{6158307} & Data processing, security, and privacy \\
\hline
\cite{6740844} & Transportation, health, and smart cities\\
\hline
\cite{YAN2014120} & Quality of service (QoS), identification, and authentication \\
\hline
\cite{KHAJENASIRI2017770} & Energy and environment\\
\hline
\end{tabular}
\end{table} 
 It is estimated by the large number of companies and the research professionals that the IoT will contribute 4\%-11\% of global GDP in the year 2025 \cite{Huawei_2015}. Huawei predicts that by 2025 there will be approximatley 100 billion IoT connections \cite{Manyika_2015}. Internet and the advancement of recent technologies have been the catalyst for the research in IoT. The IoT have many application areas but not limited to healthcare \cite{Alavi2018InternetOT}, agriculture \cite{Qiu2013FrameworkAC}, environment \cite{TALAVERA2017283} etc. Table \ref{table_example} depicts the various application areas of IoT.\\   
\indent In the upcoming section the problem of mobile crowdsensing is formulated using double auction framework and is discussed in detailed manner.

\section{Preliminaries}\label{s:prelim}
First the model and notations are discussed. After that game theoretic properties are discussed.
\subsection{Model and Notation}
In this section, the problem is formulated by utilizing the concept of double auction. There are \emph{m} task requesters and \emph{n} IoT devices, such that $m < n$. In this model it is considered that task requesters and IoT devices are heterogenous by nature. By heterogeneity it is meant that the task requesters may float different types of tasks (such as \emph{measuring air pollution}, \emph{road condition}, and so on) and IoT devices may vary in terms of type of tasks they are capable to perform. However, a particular task requester will be endowed with similar type of tasks and particular IoT device will be capable of performing one type of task. Based on the type of tasks the task requesters are floating and the type of tasks the IoT devices can perform, the task requesters and IoT devices are categorized into different categories. Let us say, we have $k$ different categories and is given as $w = \{w_1, w_2, \ldots, w_k\}$, where $w_i$ represents $i^{th}$ category.\\
\indent In any $w_i$ category, say, we have $m_i$ task requesters and $n_i$ IoT devices. The set of task requesters is given as $r^i = \{r_1^i, r_2^i, \ldots, r_{m_i}^i\}$, where $r_j^i$ represents $j^{th}$ task requester in $w_i$ category. The set of IoT devices is given as $\mathcal{I}^i = \{\mathcal{I}_1^i, \mathcal{I}_2^i, \ldots, \mathcal{I}_{n_i}^i\}$, where $\mathcal{I}_k^i$ represents $k^{th}$ IoT device in $i^{th}$ category. The set of task requesters and IoT devices in all the $k$ categories is given as $r = \{r^1, r^2, \ldots, r^k\}$ and $\mathcal{I} = \{\mathcal{I}^1, \mathcal{I}^2, \ldots, \mathcal{I}^k\}$ respectively. The task requesters are endowed with multiple similar tasks  and the bids (maximum price he is ready to pay). Let us say that any $k^{th}$ task requester is endowed with at most $\mathcal{Q}_k^i$ number of similar tasks. On the other side of MCS market, the IoT devices report the number of tasks that they can execute along with the asks (charge for imparting its services). Any $j^{th}$ IoT device can execute at most $\mathcal{Q}_j^i$ number of similar tasks. In our case, the task requesters and IoT devices taken together will be termed as agents and both may behave \emph{strategically}. The bids and asks of the agents will be termed as valuation as and when required.\\
\indent In our set-up the discussed problem is studied as a two-phase process. In the first phase, among the available IoT devices, the set of quality IoT devices is determined for executing the tasks. For this purpose, firstly, a part of available tasks is given to the IoT devices by the platform for execution. On receiving the tasks the IoT devices execute the tasks and submits the completed tasks back to the platform. Now, each of the completed tasks is given to the peers (other \emph{IoT devices}) for the review purpose. In any category $w_i$, if $j^{th}$ IoT device prefers $k^{th}$ IoT device over $l^{th}$ IoT device then it is represented as $\mathcal{I}_k^i \boldsymbol{\succ}_j^i \mathcal{I}_l^i$. For all the IoT devices in $i^{th}$ category the rank profile is represented as $\boldsymbol{\succ}^i = \{\boldsymbol{\succ}_1^i, \boldsymbol{\succ}_2^i, \ldots, \boldsymbol{\succ}_{n_i}^i\}$. For all the $k$ categories, the rank profile is given as $\boldsymbol{\succ} = \{\boldsymbol{\succ}^1, \boldsymbol{\succ}^2, \ldots, \boldsymbol{\succ}^k\}$. After reviewing the completed tasks, the review report is submitted to the platform and the quality IoT devices are determined. Once the quality IoT devices are determined, next, the below mentioned challenges are handled:
\begin{itemize}
\item Which quality IoT devices should be selected for tasks execution purpose?
\item What will be the payment made to the selected quality IoT devices?
\end{itemize}
\indent In the second fold, the above two issues are resolved. In any category $w_i$, each agent \emph{k} has a valuation function $\boldsymbol{\nu}_k^i$ that returns, for every integer \emph{f} ($0 < f \leq \mathcal{Q}_k^i-1$), the agent's value for owing $f$ units. It is to be noted that the valuation for owing \emph{zero} unit is 0. In our discussed setup, all the agents posses DMR valuation function. For any agent \emph{j} in $i^{th}$ category, by DMR valuation function $\boldsymbol{\nu}_j^i$ we mean that $\boldsymbol{\nu}_j^i(f) - \boldsymbol{\nu}_j^i(f-1) \geq \boldsymbol{\nu}_j^i(f+1) - \boldsymbol{\nu}_j^i(f)$. It means that, marginal utility for an agent from having one more task is weakely-decreasing in his current number of tasks. 

\begin{table}[H]
\renewcommand{\arraystretch}{1.3}  
\caption{Notations used}
\label{table_example}
\centering
  \begin{tabular}{c||c}
\hline
\textbf{Symbols} & \textbf{Description}\\
\hline
$m$ & Number of task requesters\\
\hline
$n$ & Number of IoT devices\\
\hline
$w$ & $w = \{w_1, w_2, \ldots, w_k\}$: Set of $k$ different categories\\
\hline
$m_i$ & Number of task requesters in $w_i$ category\\
\hline
$n_i$ & Number of IoT devices in $w_i$ category\\
\hline
$\mathcal{I}$ & $\mathcal{I} = \{\mathcal{I}^1, \mathcal{I}^2, \ldots, \mathcal{I}^k\}$: Set of IoT devices in all the $k$ categories\\
\hline
$\mathcal{I}^i$ & $\mathcal{I}^i = \{\mathcal{I}_1^i, \mathcal{I}_2^i, \ldots, \mathcal{I}_{n_i}^i\}$: Set of IoT devices in $w_i$ category\\
\hline
$\mathcal{I}_k^i$ & $k^{th}$ IoT device in category $w_i$.\\
\hline
$r$ & $r = \{r^1, r^2, \ldots, r^k\}$: Set of task requesters in all the $k$ categories\\
\hline
$r^i$ & $r^i = \{r_1^i, r_2^i, \ldots, r_{m_i}^i\}$: Set of task requesters in $w_i$ category\\
\hline
$r_l^i$ & $l^{th}$ task requester in category $w_i$.\\
\hline
$\boldsymbol{\succ}^i$ & $\boldsymbol{\succ}^i = \{\boldsymbol{\succ}_1^i, \boldsymbol{\succ}_2^i, \ldots, \boldsymbol{\succ}_{n_i}^i\}$: Rank profile of all the IoT devices in $i^{th}$ category.\\
\hline
$\boldsymbol{\succ}_k^i$ & Rank list of $k^{th}$ IoT device in $i^{th}$ category.\\ 
\hline
$\mathcal{Q}_k^i$ & Maximum number of tasks held by any $k^{th}$ agent in $i^{th}$ category\\
\hline
$\boldsymbol{\nu}_j^i$ & Valuation function of agent $j$ in $i^{th}$ category \\
\hline
$\boldsymbol{\nu}_{j,f}^i$ & Agent $j$ marginal value for having $f^{th}$ task in $i^{th}$ category \\
\hline
$p$ & Equilibrium price \\
\hline
$u_j^i (f,~p)$ & Utility of task requester $j$ from buying $f$ units in $i^{th}$ category at price $p$ \\
\hline
$z_j^i (f,~p)$ & Utility of IoT device $j$ from selling $f$ units in $i^{th}$ category at price $p$ \\
\hline
$\boldsymbol{d}_k^i (p)$ & Demand of any task requester $k$ at a given price $p$ in $i^{th}$ category \\
\hline
$\boldsymbol{s}_j^i (p)$ & Supply of any IoT device $j$ at a given price $p$ in $i^{th}$ category \\
\hline
$\boldsymbol{d}^R$ & Total demand at a given price $p$ in right crowdsensing arena in $i^{th}$ category \\
\hline
$\boldsymbol{s}^R$ & Total supply at a given price $p$ in left crowdsensing arena in $i^{th}$ category \\
\hline
\end{tabular}
\end{table}
The reason behind restricting the valuation function to DMR is that, in our discussed set-up the equilibrium price vector\footnote{By \emph{equilibrium price} it is meant that, a price at which the number of tasks floated by the task requesters is equal to the number of tasks executed by the IoT devices. In other words, we say it as supply equals demand.} exists only when we have a DMR valuation. In any category $w_i$, if any agent $k$ have $\mathcal{Q}_k^i$ homogeneous tasks, then it is represented as $\mathcal{Q}_k^i$ single unit virtual agents. The value of virtual-agent $f$ of agent \emph{k} is the agent's marginal value for having the $f^{th}$ unit and is given as $\boldsymbol{\nu}_{k, f}^i = \boldsymbol{\nu}_k^i(f) - \boldsymbol{\nu}_{k}^i (f-1)$ for $f \in \{1, \ldots, \mathcal{Q}_k^i\}$. The ties between the marginal values of the virtual agents are broken randomly. In any category $w_i$, given an equilibrium price $p$ for each task, the utility of task requester $k$ resulted by buying $f$ units of completed task is:
     
\begin{equation}                   
u_k^i(f,p) =
  \begin{cases}
  \boldsymbol{\nu}_k^i(f) - f \cdot p, & \textit{if agent k receives f executed tasks} \\
   0,        & \textit{otherwise}
  \end{cases}
  \end{equation} 
Similarly, the utility of IoT device $j$ by supplying sensed data for $f$ tasks is:
\begin{equation}                   
\boldsymbol{z}_j^{i}(f,p) =
  \begin{cases}
  f \cdot p - (\boldsymbol{\nu}_j^i(\mathcal{Q}_j^i) - \boldsymbol{\nu}_j^i(\mathcal{Q}_j^i-f)), & \textit{if agent j provides sensed data for f tasks} \\
   0,        & \textit{otherwise}
  \end{cases}
  \end{equation}   

\noindent In any category $w_i$, the demand of any task requester $k$ at an equilibrium price $p$ is the set of tasks that maximizes the value $u_k^i(f,p)$ and is given as:
\begin{equation}
\boldsymbol{d}_k^i(p) = \argmax\limits_{f \in [0, \mathcal{Q}_k^i]} u_k^i(f,p)
\end{equation}
Here, if the valuation of any $k^{th}$ task requester is following DMR, then $\boldsymbol{d}_k^i(p)$ is just the number of virtual task requesters whose bid value is greater than equilibrium price $p$ $i.e.$ $\boldsymbol{\nu}_{k,f}^i > p$. The total demand at an equilibrium price $p$ is the sum of the demands of all the task requesters and is given as $\boldsymbol{d}^i = \sum\limits_{k=1}^{m_i} \boldsymbol{d}_k^i(p)$. Similarly, the supply from any $j^{th}$ IoT device at an equilibrium price $p$ is the set of tasks that maximizes the value $z_j^i(f, p)$ and is given as 
\begin{equation}
\boldsymbol{s}_j^i(p) = \argmax\limits_{f \in [0, \mathcal{Q}_j^i]} z_j^i(f,p)
\end{equation}
Here, if the valuation of any $j^{th}$ IoT device is following DMR, then $\boldsymbol{s}_j^i(p)$ is just the number of virtual IoT devices whose ask value is less than equilibrium price $p$ $i.e.$ $\boldsymbol{\nu}_{j,f}^i < p$. The total supply at an equilibrium price $p$ is the sum of the supply from all the IoT devices and is given as $\boldsymbol{s}^i = \sum\limits_{j=1}^{n_i} \boldsymbol{s}_j^i(p)$. Let us try to understand DMR valuation with the help of an example given below.
\begin{mdframed}[backgroundcolor=gray230]
\vspace*{-2mm}
\begin{example}
\label{Example:1}
\emph{In any category $w_i$, the bid value reported by any agent $k$ for 3 units of a task is 10. For $1^{st}$ unit the value is say $5$ $i.e.$ $\boldsymbol{\nu}_{1,1}^i = 5$. Similarly, for $2^{nd}$ unit and $3^{rd}$ unit the value is given as $\boldsymbol{\nu}_{1,2}^i = 4$ and $\boldsymbol{\nu}_{1,3}^i = 1$ respectively. Here, the valuation function $\boldsymbol{\nu}_k^i$ of agent $k$ is DMR, if the equilibrium price is set as $3$ then the demand of $k$ will be $2$ units, as $\boldsymbol{\nu}_{1,1}^i$ and $\boldsymbol{\nu}_{1,2}^i$ values are above equilibrium price. However, if the bid value configuration is such that $\boldsymbol{\nu}_{1,1}^i = 2$, $\boldsymbol{\nu}_{1,2}^i = 5$, and $\boldsymbol{\nu}_{1,3} = 3$ then the demand will be 0 even if the marginal values reported by the virtual agents are above the equilibrium price. The reason is that the valuation function here is not following DMR criteria. Similar argument can be given for IoT devices.}
\end{example}
\end{mdframed}
      
\subsection{Game Theoretic Properties}
In this section, the five game theoretic properties that will be utilized in this paper are discussed.   

  \begin{definition}[\textbf{Prior-free}]
  \label{def:1}
A mechanism is prior-free, if it does not use any statistical information on the valuation of the agents.

  \end{definition}
  \begin{definition}[\textbf{Truthful or Dominant Strategy Incentive Compatible (DSIC)}]
  \label{def:2}
A mechanism is truthful, if no matter how other agents are bidding, no agent j can improve his utility by mis-reporting his valuation $i.e.$ $\hat{u}_j^i(f,p) \leq u_j^i(f,p)$ in case of task requester and $\hat{z}_j^i(f,p) \leq z_j^i(f,p)$ in case of task executor.      
  \end{definition}
    
  \begin{definition}[\textbf{Individual Rationality (IR)}]
  \label{def:IR}
A mechanism is IR, if no winning task executor is paid less than his ask value and no winning task provider  pays more than his bid value. It means that the utility of the agents should be at least 0.     
  \end{definition}

  \begin{definition}[\textbf{Strongly Budget-Balanced (SBB)} ]
  \label{def:3}
A mechanism is SBB, if the total payment of the agents is exactly 0. It means that the mechanism neither incurs surplus nor incurs deficit.        
  \end{definition}

  \begin{definition}[\textbf{Weakly Budget-Balanced (WBB)}]
  \label{def:4}
A mechanism is WBB, if the total payment of the participating agents is positive. It means that the mechanism incurs a surplus.        
  \end{definition}

\section{Proposed DSIC Mechanism}\label{ref:ehpppm}
In this section, a \emph{truthful} mechanism namely QUAD motivated by \cite{NNis_Pre_2007,T.roughgarden_2016, T.roughgarden_20141, SegalHalevi2018MUDAAT} is presented and discussed. QUAD determines the quality of the IoT devices along with deciding which of the quality IoT devices will be hired for task execution? and what will be their payment? The QUAD consists of three components: 
\begin{itemize}
\item \textcolor{blue}{k-category procedure} $-$ The reason for developing \emph{k-category procedure} is to process all the \emph{k} different categories of the task requesters and IoT devices present in MCS.
\item \textcolor{blue}{IoT-devices quality determination mechanism} (motivated by \cite{NNis_Pre_2007,T.roughgarden_2016})$-$ To hire quality IoT devices.
\item \textcolor{blue}{Allocation and pricing rule} (motivated by \cite{SegalHalevi2018MUDAAT, T.roughgarden_20141}). The allocation and pricing rule consists of three subroutine, namely 
\begin{enumerate}
    \item Splitting and equilibrium price determination, 
    \item Demand and supply calculation, and 
    \item Winner determination and payment
\end{enumerate}
\end{itemize}

In the upcoming subsections, the components of the proposed mechanism are discussed.

\subsection{k-category Procedure}\label{subsub:main}
In \emph{k-category procedure}, for each category, in lines 2-5: (a) call to subroutine IoT-QDBC is made $-$ for selecting the quality IoT devices from the available ones, (b) call to \emph{split and equilibrium price determination} subroutine is made $-$ to distribute the agents into two different MCS arenas and determine the equilibrium price in both the arenas, (c) call to \emph{demand and supply calculation} is made $-$ to determine the demand and supply of the agents in the respective MCS arenas by using the equilibrium prices of the opposite MCS arena, and (d) call to \emph{winner determination and payment} is made $-$ to decide the winning agents and their payment. For each category $w_i$, in line 6, the winning IoT devices and winning task requesters are captured in $\mathcal{I}^{w(i)}$ and $r^{w(i)}$ respectively. In line 7 the equilibrium price  of right mobile crowdsourcing arena is stored in $p_f$ and after that $p_R$ is reset to $\phi$. Similarly, the equilibrium price of left crowdsourcing arena can also be calculated and stored.  In line 9, the list of winning task requesters $i.e.$ $r^w$ and list of winning IoT devices $i.e.$ $\mathcal{I}^w$ for all the categories are returned.

 \IncMargin{0.2em}
\begin{algorithm}[H]\label{algo:0}
\DontPrintSemicolon
\SetNoFillComment
    \SetKwInOut{Output}{Output}
\caption{k-category procedure ($\mathcal{I}$, $r$, $\boldsymbol{\succ}$, $w$)}

      \Output{ $\mathcal{I}^w$ $\leftarrow$ $\phi$, $r^w$ $\leftarrow$ $\phi$, $p_f \leftarrow \phi$}
	\ForEach{$w_i$ $\in$ $w$}
	{
 $\tilde{\mathcal{Q}}^i$ $\leftarrow$ IoT-QDBC ($\mathcal{I}^i$, $\boldsymbol{\succ}^i$) \tcp*{\textcolor{blue}{Call to IoT-QDBC subroutine is made and the quality IoT devices are held in $\tilde{\mathcal{Q}}^i$.}}
	($\mathcal{I}_L^i$, $r_L^i$, $p_R$)$\leftarrow$ Splitting and equilibrium price determination ($\tilde{\mathcal{Q}}^i$, $r$) \tcp*{\textcolor{blue}{Call to Splitting and equilibrium price determination subroutine is made.}}
	($\boldsymbol{d}^L$, $\boldsymbol{s}^L$) $\leftarrow$ Demand and supply calculation ($\mathcal{I}_L^i$, $r_L^i$, $p_R$) \tcp*{\textcolor{blue}{Call to Demand and supply calculation subroutine is made.}}
	($\mathcal{I}^{w(i)}$, $r^{w(i)}$) $\leftarrow$ Winner determination and payment ($\boldsymbol{d}^L$, $\boldsymbol{s}^L$, $\mathcal{I}_L^i$, $r_L^i$, $p_R$) \tcp*{\textcolor{blue}{Call to Winner determination and payment subroutine is made.}}
	$\mathcal{I}^{w}$ $\leftarrow$ $\mathcal{I}^{w}$ $\cup$ $\mathcal{I}^{w(i)}$; $r^{w}$ $\leftarrow$ $r^{w}$ $\cup$ $r^{w(i)}$\\
	$p_f \leftarrow p_f \cup p_R$; $p_R \leftarrow \phi$ 
        }
   \Return $\mathcal{I}^w$, $r^w$, $p_f$\\
\end{algorithm}
\IncMargin{0.2em}

\subsection{IoT-devices Quality Determination Mechanism}
\label{sub:QIoTBC}
In this, for hiring the quality IoT devices among the available IoT devices, a subroutine namely \emph{\textbf{IoT}-devices \textbf{q}uality \textbf{d}etermination using \textbf{b}orda \textbf{c}ount} (IoT-QDBC) motivated by \cite{NNis_Pre_2007,T.roughgarden_2016} is proposed. 

\subsubsection{\textbf{Outline of IoT-QDBC}}
In this section, the main idea of the IoT-QDBC is illustrated below.
 \begin{mdframed}[backgroundcolor=gray230]
\begin{center}\textbf{IoT-QDBC}\end{center}
Fix a category $w_i$:
\begin{enumerate}
\item Each time $\gamma$ IoT devices are picked up randomly that provide full rank list over the $\beta$ other IoT devices that are chosen randomly from the $(n-\gamma)$ IoT devices.
\item Now, based on the rank lists of the IoT devices, each IoT device $\mathcal{I}_j^i$ gets $\beta$ points for each first preference, $\beta-1$ points for each second preference, and so on, with 1 point for each last preference.
\item The process iterates until all the IoT devices are not ranked.
\end{enumerate}
\end{mdframed}

\subsubsection{\textbf{Detailing of IoT-QDBC}}
The detailing of IoT-QDBC is depicted in below listing. In line 1 of Alg. \ref{algo:1}, the lists $g$ and $s$ are initialized to $\phi$. In line 2, the copies of the set of IoT devices in $i^{th}$ category are maintained in $\boldsymbol{\mathcal{Z}}$ and $\boldsymbol{\mathcal{J}}$. The \emph{while} loop in line 3-16 determines the quality IoT devices. In line 4, $\gamma$ IoT devices are picked up randomly from the list of IoT devices $\boldsymbol{\mathcal{Z}}$ using $R\SmallUpperCase{SELECT}$ method and held in $g$. $s$ holds randomly selected $\beta$ IoT devices from the list of IoT devices remaining after picking $\gamma$ IoT devices in line 4. In line 6, each of the IoT devices in $g$ provides the full rank list over the IoT devices in $s$. Lines 7-13 determines the total point obtained by each of the IoT devices $\mathcal{I}_k^i \in s$ and held in $c$. In line 14, for each iteration of \emph{while} loop the IoT device having the maximum point is kept in $\mathcal{Q}^i$. The set of IoT devices that are ranked in the current iteration are removed from $\boldsymbol{\mathcal{J}}$ in line 15. The process iterates until $\boldsymbol{\mathcal{J}}$ becomes empty. Line 17 returns the set of quality IoT devices obtained in $w_i$ category.        

\IncMargin{0.2em}
\begin{algorithm}[!htbp]
\DontPrintSemicolon
\SetNoFillComment
    \SetKwInOut{Output}{Output}
\caption{IoT-QDBC ($\mathcal{I}^i$, $\boldsymbol{\succ}^i$)}
\label{algo:1}
      \Output{ $\mathcal{Q}^i$ $\leftarrow$ $\phi$}
$g \leftarrow \phi$, $s \leftarrow \phi$ \tcp*{\textcolor{blue}{$g$ and $s$ lists are initialized to $\phi$.}}
$\boldsymbol{\mathcal{Z}} = \boldsymbol{\mathcal{J}}= \mathcal{I}^i$ \tcp*{\textcolor{blue}{Set of IoT devices in $i^{th}$ category are held in $\boldsymbol{\mathcal{Z}}$ and $\boldsymbol{\mathcal{J}}$.}}
\While{$\boldsymbol{\mathcal{J}} \neq \phi$}
{
  $g$ $\leftarrow$ $R\SmallUpperCase{SELECT}(\boldsymbol{\mathcal{Z}},~\gamma)$ \tcp*{\textcolor{blue}{$\gamma$ IoT devices are picked up from $\boldsymbol{\mathcal{Z}}$ and held in $g$.}}
  $s$ $\leftarrow$ $R\SmallUpperCase{SELECT}(\boldsymbol{\mathcal{J}} \setminus g,~\beta)$ \tcp*{\textcolor{blue}{$\beta$ IoT devices are picked up from $\boldsymbol{\mathcal{J}} \setminus g$ and held in $s$.}}
Each $\mathcal{I}_k^i \in g$ provide a full rank list $i.e.$ $\boldsymbol{\succ}_k^i$ over all the IoT devices in $s$.\\
	\ForEach{$\mathcal{I}_k^i$ $\in$ $s$}
	{
$c_k \leftarrow 0$ \tcp*{\textcolor{blue}{$c_k$ holds the point of each IoT device and is initialized to 0.}}
         \ForEach{$\mathcal{I}_j^i$ $\in$ $g$}
{
$c_k \leftarrow c_k + (\beta - \ell)$ \tcp*{\textcolor{blue}{where $\ell = 0$ for first preference, $\ell = 1$ for second preference, likewise $\ell = (\beta-1)$ for last preference.}}
}
$c \leftarrow c \cup \{c_k\}$ \tcp*{\textcolor{blue}{$c$ holds the point received by each IoT device in $s$.}}
}
$\mathcal{Q}^i \leftarrow \mathcal{Q}^i \cup \argmax\limits_{\mathcal{I}_k^i \in s} \{c\}$ \tcp*{\textcolor{blue}{Selects the IoT device with maximum points and store it in $\mathcal{Q}^i$.}}
$\boldsymbol{\mathcal{J}} \leftarrow \boldsymbol{\mathcal{J}} \setminus s$ \tcp*{\textcolor{blue}{Removes the set of IoT devices from $\boldsymbol{\mathcal{J}}$ that are already ranked.}}
}
       
   \Return $\mathcal{Q}^i$
\end{algorithm}
\IncMargin{0.2em}

\begin{example}
\emph{
Let us understand IoT-QDBC algorithm with the help of an example for $w_2$ category. In our running example, there are 3 task requesters and 9 IoT devices. Here, $\beta$ and $\gamma$ values are taken as 3. Following line 4 of Alg. \ref{algo:1}, 3 IoT devices are picked up randomly for providing the ranking. Next, the remaining 3 IoT devices are chosen that are to be ranked. The full rank lists of the IoT devices for the first iteration are depicted in Figure \ref{fig:1a}. Now, following line 7-13 of Alg. \ref{algo:1} the points gained by each of the IoT devices will be calculated. As IoT device $\mathcal{I}_3^2$ is ranked first by two of the IoT devices ($\mathcal{I}_2^2$ and $\mathcal{I}_6^2$), and ranked second by one of the IoT device ($\mathcal{I}_4^2$), due to this reason it receives 3 points for each first rank and 2 points for second rank. So, IoT device $\mathcal{I}_3^2$ gains a total of 8 points as shown in Figure \ref{fig:2a1}. In the similar fashion, the IoT devices $\mathcal{I}_1^2$ and $\mathcal{I}_5^2$ gains a total of 7 points each depicted in Figure \ref{fig:2a1}. So, from the first iteration $\mathcal{I}_3^2$ is placed in the list of quality IoT devices.\\
\indent In the next iteration, again 3 IoT devices will be picked up randomly for providing the ranking over the other 3 IoT devices as shown in Figure \ref{fig:1b1}. As the IoT device $\mathcal{I}_2^2$ is ranked first by two IoT devices ($\mathcal{I}_1^2$ and $\mathcal{I}_8^2$), and ranked second by one IoT device ($\mathcal{I}_7^2$) due to this it receives 3 points for each first rank and 2 points for a second rank. So, IoT device $\mathcal{I}_2^2$ also gains a total point of 8 as shown in Figure \ref{fig:2b1}. In the similar fashion, the IoT devices $\mathcal{I}_4^2$ and $\mathcal{I}_6^2$ gains a total point of 6 and 4 respectively. So, in the second iteration $\mathcal{I}_2^2$ is placed in the list of quality IoT devices.}

\begin{figure}[H]
\begin{subfigure}[b]{0.33\textwidth}
                \centering
                \includegraphics[scale=0.90]{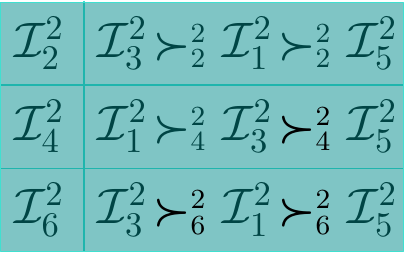}
                \subcaption{Rank list of $\mathcal{I}_2^2$, $\mathcal{I}_4^2$, $\mathcal{I}_6^2$}
                \label{fig:1a}
        \end{subfigure}%
        \begin{subfigure}[b]{0.33\textwidth}
                \centering
                \includegraphics[scale=0.90]{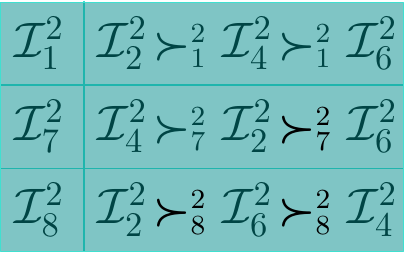}
                \subcaption{Rank list of $\mathcal{I}_1^2$, $\mathcal{I}_7^2$, $\mathcal{I}_8^2$}
                \label{fig:1b1}
        \end{subfigure}%
\begin{subfigure}[b]{0.33\textwidth}
                \centering
                \includegraphics[scale=0.90]{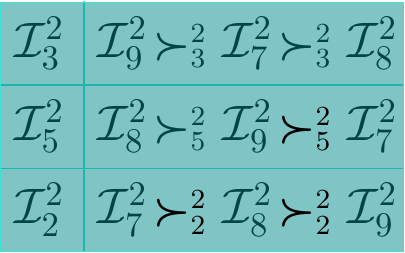}
        \subcaption{Rank list of $\mathcal{I}_3^2$, $\mathcal{I}_5^2$, $\mathcal{I}_2^2$}\label{fig:1c1}
        \end{subfigure}
        \par\bigskip
        \begin{subfigure}[b]{0.33\textwidth}
                \centering
                \includegraphics[scale=0.90]{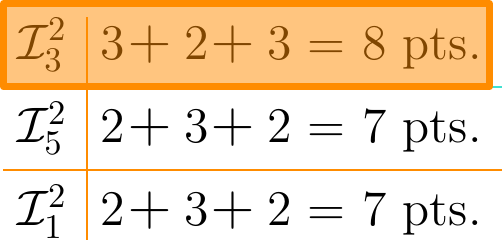}
                \subcaption{Point calculation of $\mathcal{I}_1^2$, $\mathcal{I}_3^2$, $\mathcal{I}_5^2$ }
                \label{fig:2a1}
        \end{subfigure}%
        \begin{subfigure}[b]{0.33\textwidth}
                \centering
                \includegraphics[scale=0.90]{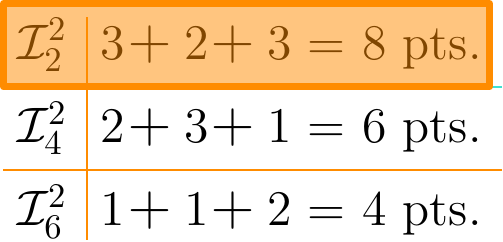}
                \subcaption{Point calculation of $\mathcal{I}_2^2$, $\mathcal{I}_4^2$, $\mathcal{I}_6^2$}
                \label{fig:2b1}
        \end{subfigure}%
\begin{subfigure}[b]{0.33\textwidth}
                \centering
                \includegraphics[scale=0.90]{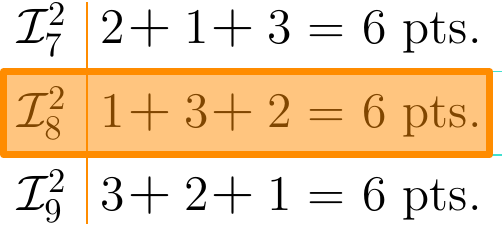}
        \subcaption{Point calculation $\mathcal{I}_7^2$, $\mathcal{I}_8^2$, $\mathcal{I}_9^2$}\label{fig:2c1}
        \end{subfigure}
        \caption{Detailed Illustration of IoT-QDBC mechanism.}
\end{figure}
\indent \emph{In the last iteration, again 3 IoT devices are picked up randomly and rankings are provided. In this iteration, IoT device $\mathcal{I}_8^2$ is ranked first by one IoT device ($\mathcal{I}_5^2$), ranked second by one IoT device ($\mathcal{I}_2^2$), and ranked third by the IoT device $\mathcal{I}_3^2$. Due to this it receives 3 points for a first rank, 2 points for a second rank, and 1 point for a third rank. So, IoT device $\mathcal{I}_8^2$ gains a total points of 6. In the similar fashion, the IoT devices $\mathcal{I}_7^2$ and $\mathcal{I}_9^2$ gains a total points of 6 each. So, in the this iteration, we can select any of the IoT devices as the quality IoT device, say, $\mathcal{I}_8^2$ is considered in the list of quality IoT devices.  Alg. \ref{algo:1} returns $\mathcal{I}_3^2$, $\mathcal{I}_2^2$, and $\mathcal{I}_8^2$ as the list of quality IoT devices for the running example.}
\end{example}

\subsection{Allocation and Pricing Rule}
Once the quality IoT devices are determined in the first phase of the proposed framework, now the goal is to tackle the issues raised in points 1 and 3 in Section \ref{s:intro}. For this purpose, a subroutine of QUAD namely \emph{allocation and pricing rule} motivated by \cite{SegalHalevi2018MUDAAT, T.roughgarden_20141} is proposed. The allocation and pricing rule consists of three subroutines: 
\begin{itemize}
    \item Splitting and equilibrium price determination, 
    \item Demand and supply calculation, and 
    \item Winner determination and payment
\end{itemize}
In the upcoming subsections, each of the subroutines is discussed in a detailed manner one-by-one.
\subsubsection{Splitting and Equilibrium Price Determination}
Using line 1 of Algorithm \ref{algo:truthful1} the set of task requesters and the IoT devices are divided into two arenas called \emph{\textbf{l}eft \textbf{m}obile \textbf{c}rowdsourcing \textbf{a}rena} (LMCA) and \emph{\textbf{r}ight \textbf{m}obile \textbf{c}rowdsourcing \textbf{a}rena} (RMCA). 
 
\begin{algorithm}[H]
\DontPrintSemicolon
\SetNoFillComment
    \SetKwInOut{Input}{Input}
    \SetKwInOut{Output}{Output}
    \tcc{\textcolor{blue}{Split mobile crowdsensing arena}}
    Divide the mobile crowdsensing arena into two sub-arenas namely \textbf{l}eft \textbf{m}obile \textbf{c}rowdsensing \textbf{a}rena (LMCA) and \textbf{r}ight \textbf{m}obile \textbf{c}rowdsensing \textbf{a}rena (RMCA).\\
    With probability $1/2$ place the task requesters and IoT devices into LMCA and RMCA independently. The task requesters and IoT devices in LMCA are held in $r_L^i$ and $\mathcal{I}_L^i$ respectively. For RMCA, $r_R^i$ and $\mathcal{I}_R^i$ used to hold task requesters and IoT devices respectively.  \\
    \tcc{\textcolor{blue}{Equilibrium price determination in RMCA}}
    $p \leftarrow 0$, $\boldsymbol{d}^R = \infty$, $\boldsymbol{s}^R = 0$ \tcp*{\textcolor{blue}{The variables are initialized to 0 and $\infty$.}}
    \While{$\boldsymbol{d}^R \neq \boldsymbol{s}^R$}
        {
              $p \leftarrow p + \epsilon$ \tcp*{\textcolor{blue}{Each time the price $p$ is incremented by $\epsilon$.}}
              \ForEach{$r_j^i \in r_R^i$}
              {
               $\boldsymbol{d}_j^i(p) = \argmax\limits_{f \in [0, \mathcal{Q}_j^i]} u_j^i(f,p)$ \tcp*{\textcolor{blue}{Determines the maximum demand of task requester $r_j^i$ at price $p$ and hold the demand in $\boldsymbol{d}_j^i(p)$.}} 
        }
        \ForEach{$\mathcal{I}_k^i \in \mathcal{I}_R^i$}
              {
$\boldsymbol{s}_k^i(p) = \argmax_{f \in [0, \mathcal{Q}_k^i]} z_k^i(f,p)$ \tcp*{\textcolor{blue}{Determines the maximum supply of IoT device $\mathcal{I}_k^i$ at price $p$ and hold the supply in $\boldsymbol{s}_k^i(p)$.}}
        }
        $\boldsymbol{d}^R = \sum\limits_{j=1}^{n_i} \boldsymbol{d}_j^i(p)$ \tcp*{\textcolor{blue}{Total demand in $i^{th}$ category at price $p$ is calculated for all the $n_i$ task requesters and is stored in $\boldsymbol{d}^R$.} }
        $\boldsymbol{s}^R = \sum\limits_{k=1}^{m_i} \boldsymbol{s}_k^i(p)$ \tcp*{\textcolor{blue}{Total supply  in $i^{th}$ category at price $p$ is calculated for all the $m_i$ IoT devices and is stored in $\boldsymbol{s}^R$.} }
        }   
        $p_R \leftarrow p$ \tcp*{\textcolor{blue}{The equilibrium price of RMCA is stored in $p_R$.}}   
        \Return $\mathcal{I}_L^i$, $r_L^i$, $p_R$  \tcp*{\textcolor{blue}{Returns the list of IoT devices, list of task requesters in LMCA and equilibrium price of RMCA.}}    
    \caption{Splitting and equilibrium price determination ($\mathcal{I}^i$, $r^i$)}
    \label{algo:truthful1}
\end{algorithm}
\noindent Each time a task requester or an IoT device is placed to LMCA/RMCA with probability $1/2$, independent of others, as depicted in line 2. Here, for determining the equilibrium price we are considering RMCA. However, in the similar fashion one can determine the equilibrium price of LMCA. In line 3, initially, the equilibrium price $p$ and supply ${\boldsymbol{s}}^R$ is initialized to 0, and the demand ${\boldsymbol{d}}^R$ at price $p = 0$ is set to $\infty$. The \emph{while} loop in line $4-14$ takes care about determining the demand and supply of each agent and terminates once the total demand becomes equal to total supply. For each iteration of \emph{while} loop the price $p$ is incremented by $\epsilon$. (some small constant value)Line 6-8 calculates the demand of each task requester present in RMCA at price $p$. $r_R^i$ captures the set of task requesters present in RMCA. Line 9-11 calculates the supply of each IoT device present in RMCA at price $p$. $\mathcal{I}_R^i$ captures the set of IoT devices present in RMCA. Line 12 and 13 holds the total demand and total supply of agents in $\boldsymbol{d}^R$ and $\boldsymbol{s}^R$ respectively. In line 15, the equilibrium price $p$ is held in $p_R$. Line 16 returns the list of IoT devices and list of task requesters in LMCA, and equilibrium price $p_R$ of RMCA.
\subsubsection{Demand and Supply Calculation}
In this section, the demand and supply of task requesters and IoT devices respectively are determined in LMCA by using the equilibrium price of RMCA (determined using Algorithm \ref{algo:truthful1}). In line $1-7$, for each task requester $r_j^i \in r_L^i$, the demand is calculated and the task requesters whose demands are positive at price $p_R$ is held in $\tilde{r}_L^i$. In line 5, the total demand in LMCA is calculated and stored in $\boldsymbol{d}^L$. In line $8-14$, for each IoT device $\mathcal{I}_k^i \in \mathcal{I}_L^i$, the supply is calculated and the IoT devices whose supply are positive at price $p_R$ is held in $\tilde{\mathcal{I}}_L^i$. In line 12, the total supply in LMCA is calculated and stored in $\boldsymbol{s}^L$. Line 15 returns total supply and total demand in LMCA at $p_R$.
 
  \begin{algorithm}[H]
\DontPrintSemicolon
\SetNoFillComment
    \SetKwInOut{Input}{Input}
    \SetKwInOut{Output}{Output}
    \Output{$\boldsymbol{d}^L \leftarrow 0$, $\boldsymbol{s}^L \leftarrow 0$}
  \tcc{\textcolor{blue}{Demand of task requesters in LMCA at price $p_R$}}
   \ForEach{$r_j^i \in r_L^i$}
              {
              $\boldsymbol{d}_j^i(p_R) = \argmax\limits_{f \in [0, \mathcal{Q}_j^i]} u_j^i(f,p_R)$ \tcp*{\textcolor{blue}{Calculating the demand of $r_j^i$ at equilibrium price $p_R$.}}            
            \If{$\boldsymbol{d}_j^i(p_R) > 0$}
            {
            $\tilde{r}_L^i \leftarrow \tilde{r}_L^i \cup \{r_j^i\}$ \tcp*{\textcolor{blue}{Each time $\tilde{r}_L^i$ holds the task requester $\tilde{r}_j^i$ if criteria in line 3 is satisfied.}}
            $\boldsymbol{d}^L = \boldsymbol{d}^L + \boldsymbol{d}_j^i(f,p_R)$ \tcp*{\textcolor{blue}{Total demand in LMCA is calculated and held in $\boldsymbol{d}^L$.}} 
           }
           }
           \tcc{\textcolor{blue}{Supply of IoT devices in LMCA at price $p_R$}}
              \ForEach{$\mathcal{I}_k^i \in \mathcal{I}_L^i$}
              {
              $\boldsymbol{s}_k^i(p_R) = \argmax\limits_{f \in [0, \mathcal{Q}_k^i]} z_k^i(f,p_R)$ \tcp*{\textcolor{blue}{Calculating the supply of $\mathcal{I}_j^i$ at equilibrium price $p_R$.}}            
            \If{$\boldsymbol{s}_k^i(p_R) > 0$}
            {
            $\tilde{\mathcal{I}}_L^i \leftarrow \tilde{\mathcal{I}}_L^i \cup \{\mathcal{I}_k^i\}$ \tcp*{\textcolor{blue}{Each time $\tilde{\mathcal{I}}_L^i$ holds the task requester $\tilde{\mathcal{I}}_k^i$ if criteria in line 10 is satisfied.}}
            $\boldsymbol{s}^L = \boldsymbol{s}^L + \boldsymbol{s}_j^i(f,p_R)$ \tcp*{\textcolor{blue}{Total supply in LMCA is calculated and held in $\boldsymbol{d}^L$.}} 
           }
           }
           \Return $\boldsymbol{d}^L,\boldsymbol{s}^L$ \tcp*{\textcolor{blue}{Returns, total demand and supply from LMCA.}}
    \caption{Demand and supply calculation ($\mathcal{I}_L^i$, $r_L^i$, $p_R$)}
\label{algo:truthful2_algo}
\end{algorithm}
 
\subsubsection{Winner Determination and Payment}
In this, the winners and their payment are determined in LMCA. Similarly, we can determine the set of winners and their payment in RMCA by doing notational modifications in Algorithm \ref{algo:truthful3}. On determining the demand and supply of the task requesters and task executors respectively in LMCA based on $p_R$, the three cases may arise: (1) $\boldsymbol{d}_L = \boldsymbol{s}_L$, (2) $\boldsymbol{d}_L > \boldsymbol{s}_L$, and (3) $\boldsymbol{d}_L < \boldsymbol{s}_L$. In line 1-12, the case with $\boldsymbol{d}_L = \boldsymbol{s}_L$ is tackled and the set of winning task requesters $r^{w(i)}$, winning IoT devices $I^{w(i)}$, and their payment are determined. 

\begin{algorithm}[!htbp]
\DontPrintSemicolon
\SetNoFillComment
    \SetKwInOut{Input}{Input}
    \SetKwInOut{Output}{Output}
\If{$\boldsymbol{d}^L == \boldsymbol{s}^L$}
   {
       \ForEach{$r_j^i$ $\in$ $r_L^i$}
       {
        \ForEach{$r_{j,l}^i \in r_j^i$}
        {
           $r^{w(i)} \leftarrow r^{w(i)} \cup \{r_{j,l}^i\}$ \tcp*{\textcolor{blue}{$r^{w(i)}$ holds the winning virtual task requesters (v.r's).}}
       } 
       }
        \ForEach{$\mathcal{I}_k^i$ $\in$ $\mathcal{I}_L^i$}
       {
        \ForEach{$\mathcal{I}_{k,v}^i \in \mathcal{I}_k^i$}
        {
           $\mathcal{I}^{w(i)} \leftarrow \mathcal{I}^{w(i)} \cup \{\mathcal{I}_{k,v}^i\}$ \tcp*{\textcolor{blue}{$\mathcal{I}^{w(i)}$ holds the winning virtual IoT devices (v.I's).}}
       } 
       }
   }    \ElseIf{$\boldsymbol{d}^L > \boldsymbol{s}^L$}
   {
   $\tilde{r}_L^i$ $\leftarrow$ Sort ($r_L^i$) \tcp*{\textcolor{blue}{Sort v.r's in decreasing order of marginal valuation.}}
   \While{$|r^{w(i)}| \neq \boldsymbol{s}^L$}
   {
       $r^*$ $\leftarrow$ Pick ($\tilde{r}_L^i$) \tcp*{\textcolor{blue}{Picks v.r's from sorted ordering of virtual task requesters.}}
       \ForEach{$r_{j,l}^{i}$ $\in$ $r^*$}
       {
           \While{$\boldsymbol{\nu}_{j,l}^i < p_R$ and $|r^{w(i)}| \neq \boldsymbol{s}^L$}
           {
           $r^{w(i)} \leftarrow r^{w(i)} \cup \{r_{j,l}^{i}\}$ \tcp*{\textcolor{blue}{$r^{w(i)}$ holds the winning v.r's in $i^{th}$ category.}}
           }
       }
   }
   \ForEach{$\mathcal{I}_{k,v}^i$ $\in$ $\mathcal{I}_L^i$}
       {
           $\mathcal{I}^{w(i)}$ $\leftarrow$  $\mathcal{I}^{w(i)}$ $\cup$ $\{\mathcal{I}_{k,v}^i\}$ \tcp*{\textcolor{blue}{$\mathcal{I}^{w(i)}$ holds the winning v.I's in $i^{th}$ category.}}
       }  
   }
    \Else
    {
 $\tilde{\mathcal{I}}_L^i$ $\leftarrow$ Sort ($\mathcal{I}_L^i$) \tcp*{\textcolor{blue}{Sort v.I's in increasing order of marginal valuation.}}
   \While{$|\mathcal{I}^{w(i)}| \neq \boldsymbol{d}^L$}
   {
       $\mathcal{I}^*$ $\leftarrow$ Pick ($\tilde{\mathcal{I}}_L^i$) \tcp*{\textcolor{blue}{Picks v.I's from sorted ordering of virtual IoT devices.}}
       \ForEach{$\mathcal{I}_{k,v}^{i}$ $\in$ $\mathcal{I}^*$}
       {
           \While{$\boldsymbol{\nu}_{k,v}^i > p_R$ and $|\mathcal{I}^{w(i)}| \neq \boldsymbol{d}^L$}
           {
           $\mathcal{I}^{w(i)} \leftarrow \mathcal{I}^{w(i)} \cup \{\mathcal{I}_{k,v}^{i}\}$ \tcp*{\textcolor{blue}{$\mathcal{I}^{w(i)}$ holds the winning v.I's in $i^{th}$ category.}}
           }
       }
   }
   \ForEach{$r_{j,l}^{i}$ $\in$ $r_L^i$}
       {
           $r^{w(i)}$ $\leftarrow$  $r^{w(i)}$ $\cup$ $\{r_{j,l}^i\}$ \tcp*{\textcolor{blue}{$r^{w(i)}$ holds the winning v.r's in $i^{th}$ category.}}
       }  
   }
    \Return $r^{w(i)}$, $I^{w(i)}$, $p_R$
    \caption{Winner determination and payment ($\boldsymbol{d}^L$, $\boldsymbol{s}^L$, $\mathcal{I}_L^i$, $r_L^i$, $p_R$)}
    \label{algo:truthful3}
\end{algorithm} 
Using line 3-5 the set of winning virtual task requesters are placed into $r^{w(i)}$. Line 7-11 determines the set of winning virtual IoT devices and are placed into $\mathcal{I}^{w(i)}$. In line 13-25, the case with $\boldsymbol{d}_L > \boldsymbol{s}_L$ is considered. In line 14 the virtual task requesters are sorted in decreasing order based on their valuation. The \emph{while} loop in line 15-22 terminates when the set of selected virtual task requesters becomes equal to the total supply in $w_i$ category. Using line 16, each time a task requester is picked up from the sorted ordering, sequentially. Now for each of the virtual task requesters for the picked up task requester the stopping condition in line 18 is checked. 
If it is satisfied then the virtual task requester is placed in the winning set otherwise not. The for loop in line 23-25 iterates through all the virtual IoT devices in LMCA and place it into $\mathcal{I}^{w(i)}$. Further, the case with  $\boldsymbol{d}_L < \boldsymbol{s}_L$ is considered in line 27-40. In line 28 the virtual IoT devices are sorted in increasing order based on their valuation. The \emph{while} loop in line 29-36 terminates when the set of selected virtual IoT devices becomes equal to the total demand in $w_i$ category. Using line 30, each time an IoT device is picked up from the sorted ordering, sequentially. Now for each of the virtual IoT devices for the picked up IoT device the stopping condition in line 32 is checked. If it is satisfied then the virtual IoT device is placed in the winning set otherwise not. The for loop in line 37-39 iterates through all the virtual task requesters in LMCA and place it into $r^{w(i)}$. Finally in line 41 the winning task requesters, winning IoT devices, and price of the agents in $i^{th}$ category are returned.
\begin{example}
\label{ex:2}
\emph{Let us understand the \emph{allocation and pricing rule} subroutine of QUAD with the help of an example. The category of the task requesters and IoT devices is considered as $w_2$. The $\epsilon$ value (increase in price $p$ in each iteration) is taken as 3. For the running example it is assumed that the agents are already distributed into two different mobile crowdsourcing arenas $i.e.$ LMCA and RMCA. Let us consider each of the MCS arenas one-by-one.}

\begin{itemize}
\item \emph{\textbf{LMCA:} We have 3 task requesters and 3 IoT devices as shown in Figure \ref{fig:1a}. For the given set-up first goal is to determine the equilibrium price $p$. So, let us say the initial price $p$ is set to 0. At this price, all the task requesters wants that their tasks get executed ($i.e.$ $\boldsymbol{d}^L$ is 6) and on the other hand no IoT device wants to execute the tasks ($i.e.$ $\boldsymbol{s}^L$ is 0).}
\begin{figure}[H]
\begin{subfigure}[b]{0.35\textwidth}
                \includegraphics[scale=0.68]{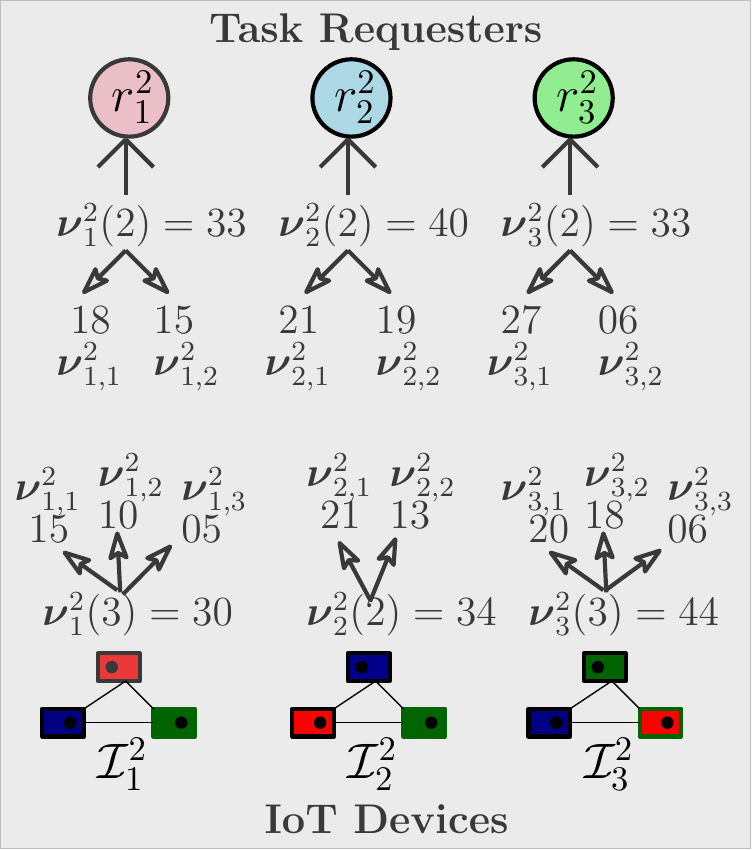}
                \subcaption{Initial set-up}
                \label{fig:1a}
        \end{subfigure}%
        \begin{subfigure}[b]{0.35\textwidth}
                \includegraphics[scale=0.68]{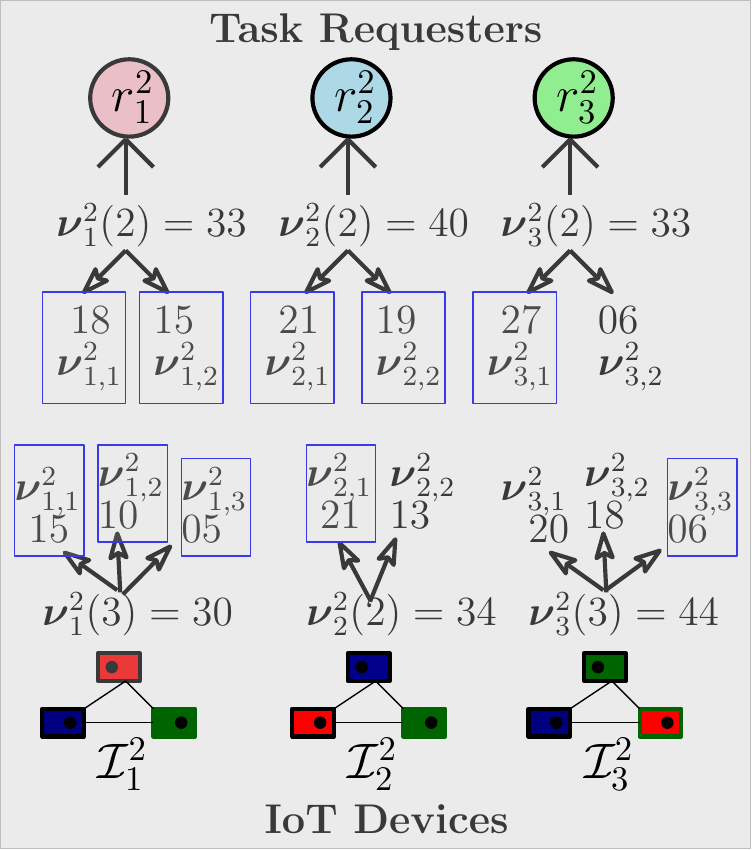}
                 \subcaption{Equilibrium Price Determination}
                \label{fig:1b}
        \end{subfigure}%
\begin{subfigure}[b]{0.37\textwidth}
                \includegraphics[scale=0.68]{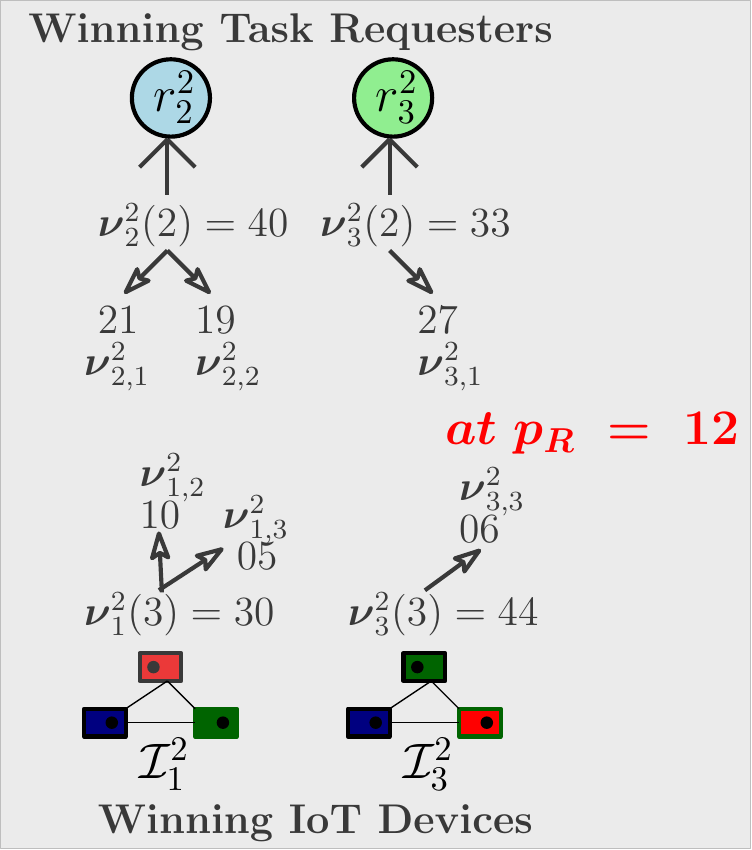}
        \subcaption{Winner and Payment Determination}
        \label{fig:1c}
        \end{subfigure}
        \caption{Detailed Illustration of Allocation and Pricing Rule in LMCA.}
\end{figure}

\emph{ Line 4 of Algorithm \ref{algo:truthful1} is satisfied and $p$ value is increased by 3. Now, at price $p = 3$ the demand from the task requesters $i.e.$ $\boldsymbol{d}^L$ is 6 and the supply of the IoT devices $i.e.$ $\boldsymbol{s}^L$ is 0. Again the condition in line 4 of the Algorithm \ref{algo:truthful1} is satisfied, so $p$ value is increased by 3 again and now $p$ is $6$. At $p = 6$, the $\boldsymbol{d}^L$ of task requesters is 6 and $\boldsymbol{s}^L$ of IoT devices is 1, so demand is not equal to supply still. Again, the price value is increased by 3 and now at $p = 9$, the $\boldsymbol{d}^L$ of task requesters is 5 and $\boldsymbol{s}^L$ of the IoT devices is 1. Similarly, we keep on increasing the price and at $p = 15$ the $\boldsymbol{s}^L$ becomes equal to $\boldsymbol{d}^L$. So the equilibrium price in the LMCA is 15. In Figure \ref{fig:1b}, at $p = 15$, the set of selected task requesters and IoT devices are shown in square box.}

\item \emph{\textbf{RMCA:} In RMCA, we have 2 task requesters and 3 IoT devices as shown in Figure \ref{fig:2a}. Let us first determine the equilibrium price $p$ in RMCA. So, let us say the initial price $p$ is set to 0. At this price, all the task requesters wants that their tasks get executed ($i.e.$ $\boldsymbol{d}^R$ is 4) and on the other hand no IoT device wants to execute the tasks ($i.e.$ $\boldsymbol{s}^R$ is 0). Line 4 of Algorithm \ref{algo:truthful1} is satisfied and $p$ value is increased by 3. Now, at price $p = 3$ the demand from the task requesters $i.e.$ $\boldsymbol{d}^R$ is 4 and the supply of the IoT devices $i.e.$ $\boldsymbol{s}^R$ is 0. Again the condition in line 4 of the Algorithm \ref{algo:truthful1} is satisfied, so $p$ value is increased by 3 again and now $p$ is $6$. At $p = 6$, the $\boldsymbol{d}^R$ of task requesters is 4 and $\boldsymbol{s}^R$ of IoT devices is 1, so demand is not equal to supply still. Again, the price value is increased by 3 and now at $p = 9$, the $\boldsymbol{d}^R$ of task requesters is 4 and $\boldsymbol{s}^R$ of the IoT devices is 2. Similarly, we keep on increasing the price and at $p = 12$ the $\boldsymbol{s}^R = 3$ becomes equal to $\boldsymbol{d}^R = 3$. So the equilibrium price in the RMCA is 12. In Figure \ref{fig:2b}, at $p = 12$, the set of selected task requesters and IoT devices are shown in square box. }

\begin{figure}[!htbp]
\begin{subfigure}[b]{0.35\textwidth}
                \includegraphics[scale=0.68]{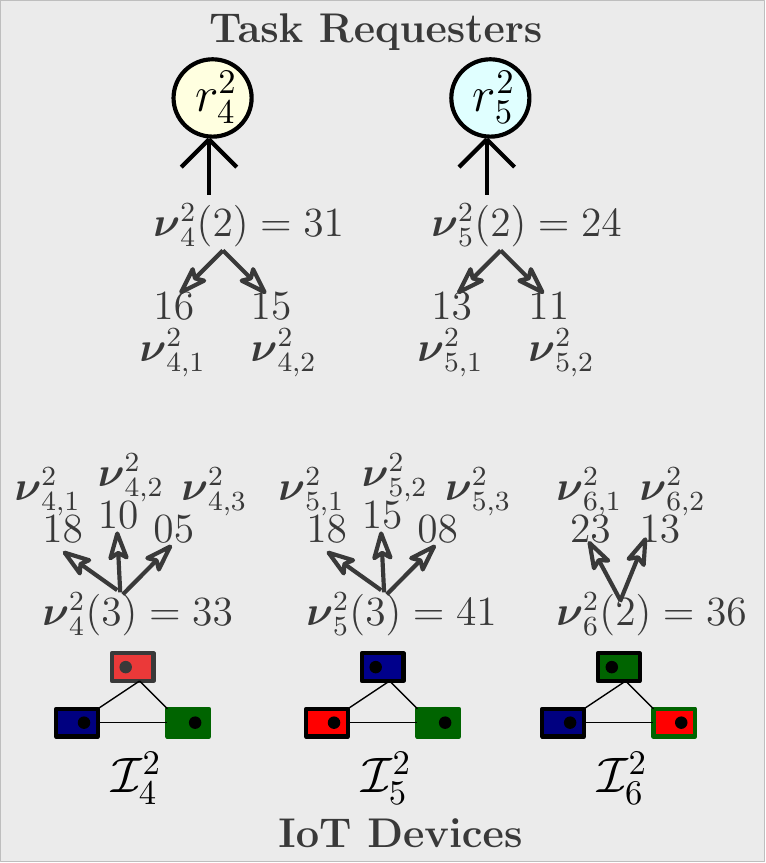}
                \subcaption{Initial set-up}
                \label{fig:2a}
        \end{subfigure}%
        \begin{subfigure}[b]{0.35\textwidth}
                \includegraphics[scale=0.68]{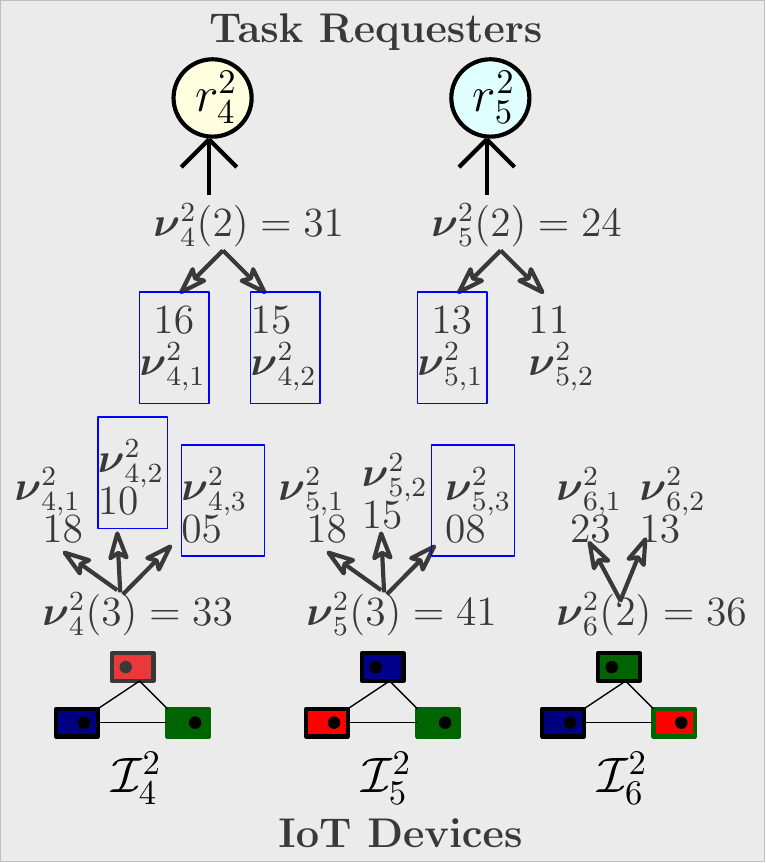}
                \subcaption{Equilibrium Price Determination}
                \label{fig:2b}
        \end{subfigure}%
\begin{subfigure}[b]{0.37\textwidth}
                \includegraphics[scale=0.68]{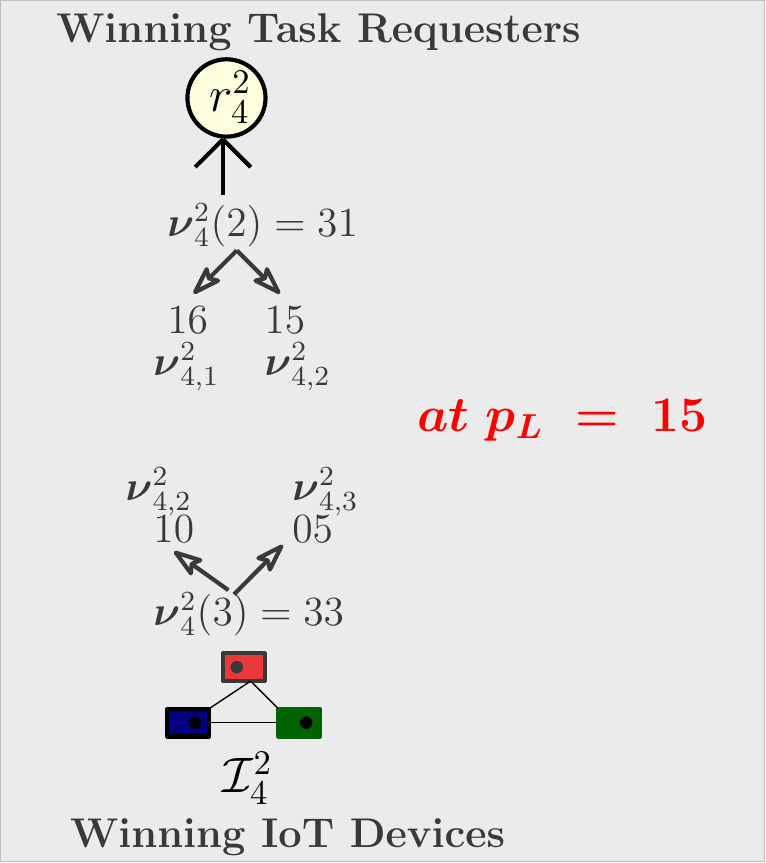}
        \subcaption{Winner and Payment Determination}\label{fig:2c}
        \end{subfigure}
        \caption{Detailed Illustration of Allocation and Pricing Rule in RMCA.}
\end{figure}

\end{itemize}
\emph{From the above discussion, we get the equilibrium price in LMCA as $p_L = 15$ and the equilibrium price in RMCA as $p_R = 12$. Once the equilibrium price in each of the markets is determined after that Algorithm \ref{algo:truthful2_algo} will be applied to our running example. Following Algorithm \ref{algo:truthful2_algo}, in RMCA, at $p_L = 15$, $\boldsymbol{d}^R$ is $3$ and $\boldsymbol{s}^R$ is $5$. On the other hand, in LMCA, at $p_R = 12$, $\boldsymbol{d}^L$ is $5$ and $\boldsymbol{s}^L$ is $3$. Now applying Algorithm \ref{algo:truthful3} to LMCA and RMCA one-by-one. For LMCA, line 13-26 of Algorithm \ref{algo:truthful3} will be activated, as $\boldsymbol{d}^L > \boldsymbol{s}^L$. Following line 14, the virtual task requesters are sorted in decreasing order of their valuation. After that using line 15-22 the winning virtual task requesters are decided and following line 23-25 winning virtual IoT devices are decided as shown in Figure \ref{fig:1b}. The price at which the trading took place is $p_R=12$. For RMCA, line 27-40 of Algorithm \ref{algo:truthful3} will be activated, as $\boldsymbol{d}^R < \boldsymbol{s}^R$. Following line 28, the virtual task requesters are sorted in increasing order of their valuation. After that using line 29-36 the winning virtual IoT devices are decided and following line 37-39 winning virtual task requesters are decided as shown in Figure \ref{fig:2c}. The price at which the trading took place is $p_L=15$.}                  
\end{example}

\section{Analysis of QUAD}\label{sec:atppm}
In this section, the analysis of QUAD is carried out. In Lemma \ref{lemma:1a} it is shown that the QUAD runs in polynomial time. In Lemma \ref{lemma:1b} it is proved that QUAD is correct. In order to prove that QUAD is correct, it is shown that each of the subroutines of QUAD is correct. In Lemma \ref{lemma:1c} it is proved that QUAD is prior-free. It means that QUAD is not using any statistical information on the valuations of the agents. It is shown in Lemma \ref{lemma:1d} that QUAD is truthful, it means that in QUAD, the agents cannot gain by misreporting their true valuation. In Lemma \ref{lemma:IR} it is shown that QUAD is IR. The reason behind proving that QUAD is IR to show that all the participating agents have non-negative utility. Lemma \ref{lemma:BB} proves that QUAD is weakly budget balanced. The reason behind proving QUAD as WBB is to show that the agents pay some amount to the platform. 
         
\begin{lemma}
\label{lemma:1a}
QUAD is computationally efficient. 
\end{lemma}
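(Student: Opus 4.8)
The plan is to bound the running time of QUAD by analysing each of its three components separately — the \emph{$k$-category procedure}, the \emph{IoT-QDBC} subroutine, and the \emph{allocation and pricing rule} with its three sub-subroutines — and then summing these bounds over the $k$ categories. Since the $k$-category procedure (Algorithm \ref{algo:0}) merely invokes the remaining routines once per category inside a single \texttt{foreach} loop, and $k$ is bounded by the total number of agents $m+n$, it suffices to show that each invoked routine terminates in time polynomial in $m_i$, $n_i$, and the task bounds $\mathcal{Q}_k^i$.

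First I would analyse IoT-QDBC (Algorithm \ref{algo:1}). Its \emph{while} loop removes the $\beta$ freshly-ranked devices from $\boldsymbol{\mathcal{J}}$ on every pass, so it runs at most $\lceil n_i/\beta \rceil$ times; each pass selects $\gamma$ and $\beta$ devices in $O(n_i)$ time, the nested loops of lines 7--13 compute the Borda scores with $O(\beta\gamma)$ additions, and the $\argmax$ of line 14 costs $O(\beta)$, giving an $O(n_i\gamma)$ bound. Next, for the allocation and pricing rule I would treat its three sub-subroutines in turn: the random split in Algorithm \ref{algo:truthful1} is $O(m_i+n_i)$; a single sweep of the demand/supply computation (in Algorithms \ref{algo:truthful1} and \ref{algo:truthful2_algo}) evaluates one $\argmax$ over $[0,\mathcal{Q}_k^i]$ per agent, costing $O\bigl((m_i+n_i)\,\max_k \mathcal{Q}_k^i\bigr)$; and the winner determination of Algorithm \ref{algo:truthful3} sorts the virtual agents and traverses them once, costing $O(V\log V)$ with $V=\sum_k \mathcal{Q}_k^i$ the total number of virtual agents. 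Each of these is clearly polynomial.

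The hard part will be bounding the number of iterations of the price-search \emph{while} loop (lines 4--14 of Algorithm \ref{algo:truthful1}): since the price $p$ is raised by the fixed increment $\eps$ on each pass and the loop halts only when total demand equals total supply, a naive count gives $p^\star/\eps$ passes, where $p^\star$ is the equilibrium price — a merely \emph{pseudo}-polynomial bound that depends on the magnitude of the valuations. To convert this into a genuine polynomial bound I would exploit the DMR assumption: by the demand and supply definitions, $\boldsymbol{d}^R(p)$ and $\boldsymbol{s}^R(p)$ are monotone step functions whose values change only at the marginal valuations $\boldsymbol{\nu}_{k,f}^i$ of the virtual agents, so there are at most $V$ distinct breakpoints between $p=0$ and $p=p^\star$. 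Replacing the fixed $\eps$-grid by the sorted list of these marginal valuations (equivalently, assuming the reported valuations lie on an $\eps$-grid so that $p^\star/\eps$ is itself polynomial) bounds the loop by $O(V)$ passes.

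Combining the pieces, the per-category cost is dominated by the equilibrium-price search and is $O\bigl(V(m_i+n_i)\max_k\mathcal{Q}_k^i\bigr)$, and multiplying by the at most $m+n$ categories yields an overall polynomial running time, establishing that QUAD is computationally efficient.
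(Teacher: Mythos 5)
Your proposal follows the same overall strategy as the paper's proof: decompose QUAD into its subroutines, bound each one per category, and multiply by the number of categories. The subroutine bounds also agree in substance (e.g., both get $O(n_i\gamma)$ for IoT-QDBC via the $\lceil n_i/\beta\rceil$ iterations of its \emph{while} loop). Where you genuinely depart from the paper is on the step you correctly flag as "the hard part": the number of iterations of the price-ascending loop in Algorithm \ref{algo:truthful1}. The paper does not solve this; it simply stipulates that "if the number of iterations of the \emph{while} loop is a function of $n_i$" then that block is $O(n_i^3)$, which assumes away the fact that the naive iteration count is $p^\star/\epsilon$ — a pseudo-polynomial quantity depending on the magnitude of the reported valuations and on $\epsilon$, not on $n_i$. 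Your observation that, under DMR, total demand and supply are monotone step functions whose breakpoints are the at most $V=\sum_k \mathcal{Q}_k^i$ marginal valuations, so the grid search can be replaced by a scan over sorted breakpoints (or the valuations assumed to lie on an $\epsilon$-grid), is exactly the missing ingredient that turns the paper's assumption into a theorem; note, though, that this is a (mild) modification of the algorithm as written, and you should also remark that with a fixed $\epsilon$-grid exact equality of demand and supply need not ever occur, which is a second reason the breakpoint formulation is preferable. Two further differences are worth noting: you keep the capacities $\mathcal{Q}_k^i$ and the count $V$ of virtual agents explicit and bound $k\le m+n$, whereas the paper flattens everything by assuming $\mathcal{Q}_j^i$, $m_i$, and the loop counts are "functions of $n_i$" and that $k$ is constant, yielding the cruder $O(n_i^3+m_i^3)$ bound; and your $O(V\log V)$ sort-and-sweep bound for winner determination is tighter than the paper's $O(m_i^3)+O(n_i^3)$. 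In short, your route is the more rigorous one: it buys a genuinely polynomial bound in the input size where the paper's argument only establishes one conditionally on an unjustified assumption.
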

\begin{proof}
The running time of QUAD is the sum of the running time of Algorithm \ref{algo:0}, Algorithm \ref{algo:1}, Algorithm \ref{algo:truthful1}, Algorithm \ref{algo:truthful2_algo}, and Algorithm \ref{algo:truthful3}. As Algorithm \ref{algo:0} will iterate for $k$ times, so let us first determine the time taken by each of the subroutines for single iteration of Algorithm \ref{algo:0}. After that the running time for $k$ different categories will be decided.\\
\indent In Algorithm \ref{algo:1} lines 1 and 2 will take O(1) time. The \emph{while} loop in line 3-16 will terminate only when all the IoT devices are ranked. So, say the \emph{while} loop iterates for $\lfloor \frac{n_i}{\beta} \rfloor$. For each iteration of \emph{while} loop, line 4 and line 5 will take $O(\gamma)$ and $O(\beta)$ respectively. Line 6 is bounded above by $O(\gamma\beta)$. Line 7-13 will take $O(\gamma\beta)$. In line 14, the IoT device with maximum point is selected and will take $O(\beta)$ time. So, line 3-16 is bounded above by $\frac{n_i}{\beta} \cdot O(\gamma\beta) = n_i\gamma$. So, the time taken by Algorithm \ref{algo:1} is $O(n_i\gamma)$. If $\gamma$ is a function of $n_i$, then it will be $O(n_i^2)$.\\
\indent In Algorithm \ref{algo:truthful1}, line 1 and 2 will take $O(m_i+n_i)$ time. Line 3 will take constant time. The execution of Lines 5-13 depends on the condition in line 4. For each iteration of \emph{while} loop, line 5 will take $O(1)$. Line $6-8$ is bounded above by $O(m_i\mathcal{Q}_j^i)$ and line $9-11$ is bounded above by $O(n_i\mathcal{Q}_k^i)$. In the worst case, line 12 will sum the demands of all the $m_i$ task requesters at equilibrium price $p$ and will take $O(m_i)$ time. Similarly, line 13 will take $O(n_i)$ time. So, for each iteration of \emph{while} loop, line $5-13$ is bounded above by $O(1)$ + $O(n_i\mathcal{Q}_j^i)$ + $O(m_i\mathcal{Q}_k^i)$ + $O(m_i)$ + $O(n_i)$ = $O(n_i^2)$, if $\mathcal{Q}_j^i$, $\mathcal{Q}_k^i$, and $m_i$ is a function of $n_i$. If the number of iteration of \emph{while} loop in line 4-14 is a function of $n_i$, then line 4-14 is bounded above by $O(n_i^3)$. Line 15 does assignment so it will take constant time. The return statement in line 16 will take $O(1)$. So, Algorithm \ref{algo:truthful1} takes $O(n_i^3)$.\\  
\indent In Algorithm \ref{algo:truthful2_algo} for each iteration of \emph{for} loop in line 1-7, line 2 will take $O(\mathcal{Q}_j^i)$. Line 3-6 will take $O(1)$. Now, if the \emph{for} loop iterates for the number of times that is a function of $m_i$, then line 1-7 of Algorithm \ref{algo:truthful2_algo}  is bounded above by $O(m_i^2)$, if $\mathcal{Q}_j^i$ is a function of $m_i$. Similar argument can be given for line 8-14 and is bounded above by $O(n_i^2)$. The return statement in line 15 will take constant time. So, Algorithm \ref{algo:truthful2_algo} takes $O(m_i^2)$ + $O(n_i^2)$ time.\\
\indent Now, let us determine the time taken by Algorithm \ref{algo:truthful3} for each iteration of Algorithm \ref{algo:0}. From the construction of Algorithm \ref{algo:truthful3} it can be seen that at a time either line $1-11$ will be executed or line $12-24$ will be executed or line $26-38$ will be executed. Let us say line $1-12$ gets executed. Line 1 will take constant time. Line $2-6$ takes $O(m_i^2)$ in worst case. Similarly, line $7-11$ can take $O(n_i^2)$ in worst case. So line $1-12$ of Algorithm \ref{algo:truthful3} is bounded above by $O(n_i^2)$ + $O(m_i^2)$. Line $12-24$ executes when the demand of the task requesters is more than the supply from the IoT devices. In line 13 sorting takes place and will take $O(m_i\lg m_i)$, where $\lg m_i$ stands for $\log_{10} m_i$. Line $14-21$ will take $O(m_i^3)$ in worst case. Line 22-24 will take $O(n_i)$. Similarly,  line $25-38$ executes when the demand of the task requesters is less than the supply from the IoT devices. In line 26 sorting takes place and will take $O(n_i\lg n_i)$. Line $27-34$ will take $O(n_i^3)$ in worst case. Line 35-37 will take $O(m_i)$. So the running time of Algorithm \ref{algo:truthful3} is $O(m_i^2) + O(m_i \lg m_i) + O(m_i^3) + O(n_i \lg n_i) + O(n_i^3) = O(m_i^3) + O(n_i^3)$.\\
\indent So, the running time of QUAD for any category $w_i$ is $ O(n_i^2) + O(n_i^3) + O(n_i^2) + O(m_i^2) + O(n_i^3) + O(m_i^3)$ = $O(n_i^3) + O(m_i^3)$. So, for $k$ different categories the QUAD will be bounded above by $kn_i^3 + km_i^3 = k(n_i^3 + m_i^3) $ $i.e.$ $O(n_i^3 + m_i^3)$, if $k$ is constant. Hence, QUAD is computationally efficient.                                 
\end{proof}

\begin{lemma}
\label{lemma:1b}
QUAD works correctly. 
\end{lemma}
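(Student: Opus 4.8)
The plan is to reduce the correctness of QUAD to the correctness of its five constituent routines, since Algorithm \ref{algo:0} is nothing but a sequential composition that, for each category $w_i$, feeds the output of one subroutine into the next and then aggregates the per-category winners into $\mathcal{I}^w$ and $r^w$. Thus I would first argue that if each subroutine meets its specification, then the \emph{for-each} loop over $w_i \in \{w_1,\ldots,w_k\}$ in Algorithm \ref{algo:0} visits every category exactly once, and the union operations in line 6 accumulate the global winner sets without loss or duplication while line 7 records each category's equilibrium price. The remaining work is then to verify the four called subroutines one at a time.

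For IoT-QDBC (Algorithm \ref{algo:1}) I would use a loop invariant for the \emph{while} loop of lines 3--16: at the start of each iteration $\boldsymbol{\mathcal{J}}$ holds exactly the IoT devices not yet ranked, and $\mathcal{Q}^i$ holds the quality devices selected so far. Since line 15 removes the $\beta$ freshly ranked devices from $\boldsymbol{\mathcal{J}}$, the set strictly shrinks and the loop terminates after $\lceil n_i/\beta\rceil$ iterations with $\boldsymbol{\mathcal{J}}=\phi$, i.e.\ every device eventually ranked; within an iteration, lines 7--13 accumulate the Borda score $c_k$ (assigning $\beta$ points to a first preference down to $1$ point to a last preference) and line 14 selects the arg-max, which is precisely the intended Borda winner. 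For the demand and supply calculation (Algorithm \ref{algo:truthful2_algo}) correctness is immediate: the two \emph{for-each} loops scan every task requester and IoT device once, test positivity of $\boldsymbol{d}_j^i(p_R)$ and $\boldsymbol{s}_k^i(p_R)$, and accumulate the totals $\boldsymbol{d}^L$ and $\boldsymbol{s}^L$. For winner determination (Algorithm \ref{algo:truthful3}) I would treat each of the three mutually exclusive branches separately and show that each terminates with the winning counts equal to the short side $\min(\boldsymbol{d}^L,\boldsymbol{s}^L)$: the equality branch trades all virtual agents, while the two inequality branches greedily select from the correctly sorted side (decreasing marginal value for requesters, increasing for devices) until the winner count reaches the short side, the sort guaranteeing that exactly the value-maximizing virtual agents are chosen.

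The main obstacle is the correctness and, in particular, the \emph{termination} of the equilibrium-price loop in Algorithm \ref{algo:truthful1} (lines 4--14). Here I would lean on the DMR assumption of Section \ref{s:prelim}: under decreasing marginal returns the per-agent demand $\boldsymbol{d}_j^i(p)$ is weakly non-increasing in $p$ and the per-agent supply $\boldsymbol{s}_k^i(p)$ is weakly non-decreasing in $p$, so the aggregate demand $\boldsymbol{d}^R$ falls and the aggregate supply $\boldsymbol{s}^R$ rises as $p$ is incremented by $\epsilon$. Since the initialisation in line 3 gives $\boldsymbol{d}^R=\infty > \boldsymbol{s}^R=0$ and the reported marginal values are finite, the two curves must meet, which is exactly the stopping condition $\boldsymbol{d}^R=\boldsymbol{s}^R$ defining the equilibrium price $p_R$. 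I would stress that this existence argument genuinely requires DMR (as Example \ref{Example:1} shows, without it demand need not be monotone and the loop need not terminate), and I would flag the mild caveat that $\epsilon$ must be chosen compatibly with the granularity of the reported marginal values, so that the discretised demand and supply actually coincide at a grid point rather than crossing between two consecutive prices.
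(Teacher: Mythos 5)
Your proposal is correct and follows essentially the same route as the paper's proof: both reduce the correctness of QUAD to the correctness of each subroutine invoked by the $k$-category procedure and verify each one via loop invariants, relying on the DMR-induced monotonicity of aggregate demand and supply for the equilibrium-price loop of Algorithm \ref{algo:truthful1}. If anything, your treatment is somewhat more careful than the paper's, which merely asserts (rather than proves) termination of that loop and does not address the $\epsilon$-granularity caveat you flag.
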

\begin{proof}
We prove the correctness of QUAD using loop invariant technique \cite{Coreman_2009}. In order to prove that QUAD is correct, it is to be shown that each of the algorithms acting as subroutine are also correct. Below the correctness of each of the subroutines is discussed one-by-one:\\

\noindent \textbf{Proof of correctness of Algorithm \ref{algo:1} (IoT-QDBC):} In order to prove that IoT-QDBC is correct, the following loop invariant is considered:
\begin{mdframed}[backgroundcolor=gray230]
\textbf{Loop invariant:} In each iteration of \emph{while} loop of lines 3-16, a quality IoT device in $i^{th}$ category is added into the output array $\mathcal{Q}^i$.
\end{mdframed}
\begin{itemize}
\item \textbf{Initialization:} We can start by showing that loop invariant hold before the first iteration of \emph{while} loop, when $\mathcal{Q}^i = \phi$. The output array $\mathcal{Q}^i$ have no quality IoT devices before the first iteration. So, loop invariant holds.  
\item \textbf{Maintenance:} For the loop invariant to be true, it is to be shown that before any $k^{th}$ iteration of the \emph{while} loop and after $k^{th}$ iteration of the \emph{while} loop the loop invariant holds. Before $k^{th}$ iteration, $i.e.$ till $(k-1)^{th}$ iteration there will be $(k-1)$ quality IoT devices in an array $\mathcal{Q}^i$. After $k^{th}$ iteration, the number of quality IoT devices will be $\sum\limits_{i=1}^{k} 1 = k$ in $\mathcal{Q}^i$. So, loop invariant holds.
\item \textbf{Termination:} From the construction of IoT-QDBC, it is clear that the \emph{while} loop will terminate only when there exists no IoT devices whose executed tasks are to be ranked. It means that once \emph{while} loop terminates $\mathcal{Q}^i$ contains all the quality IoT devices.
\end{itemize}
Hence, IoT-QDBC is correct.\\

\noindent \textbf{Proof of correctness of Algorithm \ref{algo:truthful1} (Splitting and Equilibrium Price Determination):} In order to prove that Algorithm \ref{algo:truthful1} is correct, we use the following loop invariant:
\begin{mdframed}[backgroundcolor=gray230]
\textbf{Loop invariant:} At the start of \emph{while} loop in lines 4-14, the price $p$ is increased by $\epsilon$. So, with the increase in price $p$ the demand $\boldsymbol{d}^R$ of the task requesters will be decreasing or remains same and supply $\boldsymbol{s}^R$ of the IoT devices will be increasing or remains same.  
\end{mdframed}   
\begin{itemize}
\item \textbf{Initialization:} Before the $1^{st}$ iteration of \emph{while} loop, the price $p$ is \emph{zero}. In this case, the demand $\boldsymbol{d}^R$ of the task requesters will be very high (say $\infty$) and supply $\boldsymbol{s}^R$ of the IoT devices will be very low (say 0). So, it trivially satisfies the loop invariant.  
\item \textbf{Maintenance:} In the $k^{th}$ iteration of \emph{while} loop, the price $p$ will be some non-zero value. Depending on the price value at $k^{th}$ iteration, the demand of task requester may remain same or may decrease and the supply of IoT devices may remain same or may increase. After $k^{th}$ iteration the price $p$ will be $\epsilon$ greater than it was before $k^{th}$ iteration. Now, with the increase in price the demand may decrease or remains same and supply may increase or remains same. So, the loop invariant may hold.
\item \textbf{Termination:} From the construction of Algorithm \ref{algo:truthful1}, at termination, the demand of task requesters and supply by the IoT devices becomes equal at the equilibrium price. So, the algorithm provide the equilibrium price on termination.
\end{itemize}
Hence, Algorithm \ref{algo:truthful1} is correct.\\

\noindent \textbf{Proof of correctness of Algorithm \ref{algo:truthful2_algo} (Demand and Supply Calculation):} 

\begin{mdframed}[backgroundcolor=gray230]
\textbf{Loop invariant:} In each iteration of for loops in line 1-7 and in line 8-14 the $\boldsymbol{d}^L$ and $\boldsymbol{s}^L$ respectively will be either zero or positive.  
\end{mdframed}  
\begin{itemize}
\item \textbf{Initialization:} Before the $1^{st}$ iteration of \emph{for} loops in line 1-7 and line 8-14, the $\boldsymbol{d}^L$ and $\boldsymbol{s}^L$ values respectively are 0. So, it trivially satisfies the loop invariant.  
\item \textbf{Maintenance:} Before any $k^{th}$ iteration of \emph{for} loops, in line 1-7 and line 8-14, the $\boldsymbol{d}^L$ and $\boldsymbol{s}^L$ values respectively will be either zero or some positive value. After the $k^{th}$ iteration of \emph{for} loops, in line 1-7 and line 8-14 also, the $\boldsymbol{d}^L$ and $\boldsymbol{s}^L$ values respectively will be either zero or some positive value. So, the loop invariant may hold.

\item \textbf{Termination:} From the construction of Algorithm \ref{algo:truthful1}, at termination the demand of task requesters and supply by the IoT devices will be non-negative. So, on termination also the loop invariant holds for Algorithm \ref{algo:truthful2_algo}.
\end{itemize}
Hence, Algorithm \ref{algo:truthful2_algo} is correct.\\

\noindent \textbf{Proof of correctness of Algorithm \ref{algo:truthful3} (Winner Determination and Payment):} The case with $\boldsymbol{d}^L = \boldsymbol{s}^L$ is trivial. In order to prove correctness of Algorithm \ref{algo:truthful3}, the non-trivial cases are considered $i.e.$ $\boldsymbol{d}^L > \boldsymbol{s}^L$ or $\boldsymbol{d}^L < \boldsymbol{s}^L$. Let us $\boldsymbol{d}^L > \boldsymbol{s}^L$ case, similarly $\boldsymbol{d}^L < \boldsymbol{s}^L$ can be handled.  
\begin{mdframed}[backgroundcolor=gray230]
\textbf{Loop invariant:} In each iteration of \emph{while} loop in line 14-21 the task requesters are added in the winning set $r^{w(i)}$.  
\end{mdframed}   

\begin{itemize}
\item \textbf{Initialization:} We can start by showing that loop invariant holds prior to first iteration of \emph{while} loop in line 14-21, when $r^{w(i)} = \phi$. The winners list $r^{w(i)}$ have no task requesters.  So, loop invariant holds.  
\item \textbf{Maintenance:} For the loop invariant to be true, it is to be shown that before any $k^{th}$ iteration of the \emph{while} loop in line 14-21 and after any $k^{th}$ iteration of the \emph{while} loop in line 14-21 the loop invariant holds. Before $k^{th}$ iteration, $i.e.$ till $(k-1)^{th}$ iteration there will be $(k-1)$ winning task requesters in $r^{w(i)}$. After $k^{th}$ iteration, one more task requester will be added to $r^{w(i)}$ leads to $k$ task requesters. So, loop invariant holds.
\item \textbf{Termination:} From the construction of Algorithm \ref{algo:truthful3}, it is clear that the \emph{while} loop in line 14-21 will terminate only when $r^{w(i)}$ becomes equal to $\boldsymbol{s}^L$. It means that once \emph{while} loop terminates $r^{w(i)}$ contains all the possible winning task requesters at equilibrium price. On termination the loop invariant holds for Algorithm \ref{algo:truthful3}.
\end{itemize}
Hence, Algorithm \ref{algo:truthful3} is correct.\\

\noindent As Algorithm \ref{algo:1}, Algorithm \ref{algo:truthful1}, Algorithm \ref{algo:truthful2_algo}, and Algorithm \ref{algo:truthful3} are correct for any category $w_i$ and so as the QUAD. So, QUAD will be correct for all the $k$ categories in the system. Hence QUAD is correct.   
\end{proof}

\begin{lemma}
\label{lemma:1c}
QUAD is prior-free. 
\end{lemma}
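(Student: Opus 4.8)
The plan is to establish prior-freeness by the same modular inspection used for correctness in Lemma \ref{lemma:1b}: since QUAD is the composition of the \emph{k-category procedure} (Algorithm \ref{algo:0}) with its four subroutines, it suffices to verify that each subroutine, on each category $w_i$, consumes only the quantities the agents actually report together with internal randomness, and never any distribution or summary statistic of the valuations. By Definition \ref{def:1} this is exactly what prior-freeness requires, so the argument is essentially one of bookkeeping the inputs read by each line.

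First I would trace the inputs of each component. Algorithm \ref{algo:0} merely loops over the $k$ categories and forwards $\mathcal{I}^i$, $r^i$, and the reported rank profile $\boldsymbol{\succ}^i$, introducing no external data. IoT-QDBC (Algorithm \ref{algo:1}) scores each IoT device purely from the peer-supplied rank lists via the Borda points $(\beta-\ell)$, so it depends only on $\boldsymbol{\succ}^i$ and on the valuation-independent random subset choices made by $R\SmallUpperCase{SELECT}$. In the splitting and equilibrium-price routine (Algorithm \ref{algo:truthful1}), agents are assigned to LMCA/RMCA by independent fair coin flips, and the price is swept upward in steps of $\epsilon$ until $\boldsymbol{d}^R=\boldsymbol{s}^R$, where each $\boldsymbol{d}_j^i(p)$ and $\boldsymbol{s}_k^i(p)$ is computed from the agent's own reported marginal values $\boldsymbol{\nu}_{j,f}^i$ alone. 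The demand-and-supply routine (Algorithm \ref{algo:truthful2_algo}) and the winner-determination routine (Algorithm \ref{algo:truthful3}) then reuse only the equilibrium price of the opposite arena together with the reported marginal valuations, entering through the comparisons $\boldsymbol{\nu}_{j,l}^i<p_R$ and $\boldsymbol{\nu}_{k,v}^i>p_R$ and the sorts by reported value. At no point is a prior distribution, a moment, a quantile, or any other statistic of the valuations read.

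The only step requiring care — and the main, though mild, obstacle — is to separate two things that could superficially look like ``prior'' information but are not. The DMR condition $\boldsymbol{\nu}_j^i(f)-\boldsymbol{\nu}_j^i(f-1)\ge \boldsymbol{\nu}_j^i(f+1)-\boldsymbol{\nu}_j^i(f)$ is a structural restriction on each agent's \emph{own} reported valuation function, not a statistical assumption on how valuations are drawn; hence invoking it (for instance to justify that $\boldsymbol{d}_k^i(p)$ equals the count of virtual agents with $\boldsymbol{\nu}_{k,f}^i>p$) consumes no prior. Likewise, the random splitting and the random selection inside IoT-QDBC draw from internal coin flips that are independent of the reported valuations and carry no information about their distribution. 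Having confirmed both points for every line that touches the valuations, I would conclude that each subroutine is prior-free, and therefore their composition QUAD is prior-free across all $k$ categories.
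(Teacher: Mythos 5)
Your proposal is correct and follows essentially the same approach as the paper, whose entire proof is a one-sentence inspection claim that the construction of QUAD uses no statistical information on the agents' valuations. Your version is simply a more careful, line-by-line elaboration of that inspection (with a useful clarification that DMR is a structural restriction on reports rather than a prior), so it matches and indeed strengthens the paper's argument.
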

\begin{proof} 
From the construction of QUAD it can be seen that it is not using any statistical information on valuations of the agents. So, QUAD is prior-free. 
\end{proof}

\begin{observation}
\emph{QUAD mechanism consists of two components: IoT-QDBC mechanism, and allocation and pricing rule. Here, IoT-QDBC mechanism is independent of the bid values of the IoT devices and will not influence the truthfulness of QUAD.}
\end{observation}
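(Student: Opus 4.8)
The plan is to read the statement as a structural decomposition claim together with an invariance claim, and to prove it by direct inspection of the inputs and control flow of the quality-determination subroutine. By Definition \ref{def:2}, truthfulness of QUAD concerns whether any agent can raise its utility by misreporting its valuation, i.e. its bid or ask $\boldsymbol{\nu}_j^i$. Accordingly, the first step is to isolate exactly where in QUAD a reported ask can enter the computation, and to show that the first fold (IoT-QDBC) is not one of those places.

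First I would examine the signature and body of Algorithm \ref{algo:1}. The subroutine receives only the set of IoT devices $\mathcal{I}^i$ and the peer rank profile $\boldsymbol{\succ}^i$; crucially, no valuation vector is passed to it. Tracing the loop, the only quantity that governs selection is the Borda score $c_k$, which is accumulated purely from the positions assigned to $\mathcal{I}_k^i$ in the rank lists $\boldsymbol{\succ}_j^i$ of the reviewing devices, and the winner of each round is the $\argmax$ over these scores. Hence I would conclude that the output set of quality devices $\mathcal{Q}^i$ is a function of $\boldsymbol{\succ}^i$ alone (modulo the internal randomness of $R\SmallUpperCase{SELECT}$ and the tie-breaking), and is formally independent of every reported ask value.

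From this invariance the incentive claim follows. Since the membership of any device $\mathcal{I}_k^i$ in $\mathcal{Q}^i$ does not depend on its own reported ask, no device can change whether it passes the quality filter by perturbing its valuation; the only lever a strategic device retains over its utility is the allocation and pricing rule acting on the already-fixed set $\mathcal{Q}^i$. Therefore any manipulation analysis reduces to the second fold, and I would record that the truthfulness of QUAD is inherited entirely from the allocation and pricing rule, to be established in Lemma \ref{lemma:1d}, with IoT-QDBC contributing no additional manipulation incentive.

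The hard part is to make the word \emph{independent} watertight against \emph{indirect} channels. A device that is filtered out never reaches the paid second fold, so passing the filter carries positive option value; I must argue that this option value cannot be purchased by an ask deviation, which holds precisely because the filter reads $\boldsymbol{\succ}^i$ (the \emph{other} devices' reviews of executed tasks) rather than the device's own report. I would also flag, but explicitly set aside, the orthogonal question of whether a device can manipulate the \emph{ranks it issues} to its peers: that concerns the integrity of $\boldsymbol{\succ}^i$, not the bid values named in the statement, and so lies outside the scope of this observation.
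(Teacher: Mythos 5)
Your proposal is correct and takes essentially the same approach as the paper, which states this observation without any explicit proof and relies precisely on the fact you identify: Algorithm \ref{algo:1} receives only $\mathcal{I}^i$ and $\boldsymbol{\succ}^i$, never a bid or ask, so the quality set $\mathcal{Q}^i$ cannot depend on reported valuations and all manipulation incentives are deferred to the allocation and pricing rule analyzed in Lemma \ref{lemma:1d}. Your extra care about the indirect channel (the option value of passing the filter) and your explicit scoping-out of rank-list manipulation only make watertight what the paper leaves implicit.
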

\begin{lemma}
\label{lemma:1d}
QUAD is truthful or DSIC. 
\end{lemma}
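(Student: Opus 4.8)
The plan is to first reduce the claim to the \emph{allocation and pricing rule}, and then to exploit the cross-arena structure of Algorithms \ref{algo:truthful1}, \ref{algo:truthful2_algo} and \ref{algo:truthful3} to decouple every agent's trading price from its own report. By the Observation preceding this lemma, the subroutine IoT-QDBC (Algorithm \ref{algo:1}) ranks the IoT devices purely from the peer rank profile $\boldsymbol{\succ}^i$ and never reads the reported bids or asks; likewise the random split in Algorithm \ref{algo:truthful1} is independent of the reported valuations. Hence neither quality determination nor the partition can be influenced by a valuation misreport, and it suffices to show that no agent can increase its utility (in the sense of Definition \ref{def:2}) within the allocation and pricing rule.

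The crux is the following independence property. Consider an agent placed in LMCA (the symmetric statement holds for RMCA). The price $p_R$ at which this agent trades is produced by Algorithm \ref{algo:truthful1} run on the agents of RMCA \emph{only}; the reports of LMCA agents enter that computation only through the opposite arena. I would therefore fix an arbitrary agent $a$, fix the reports of all other agents, and fix the realized partition. Under this conditioning the price $a$ faces is a constant that does not depend on $a$'s reported valuation. This is precisely what prevents $a$ from shading its bid or ask to move the price in its favour, and it is the structural reason the double auction can be truthful even though agents on both sides are strategic.

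Given the fixed external price $p$, I would then argue that truthful reporting is a best response. For a task requester $r_j^i$, reporting its true valuation induces the demand $\boldsymbol{d}_j^i(p)=\argmax_{f\in[0,\mathcal{Q}_j^i]} u_j^i(f,p)$, which by definition is the quantity maximising $u_j^i(\cdot,p)$ at the per-unit price $p$, and the mechanism then awards it (up to) this many units at price $p$. Because the valuation is DMR, the marginal values $\boldsymbol{\nu}_{j,f}^i$ are weakly decreasing, so any misreport can only leave the awarded quantity unchanged (no gain), inflate it (forcing the agent to win units whose true marginal value lies below $p$, contributing negative marginal utility), or deflate it (forfeiting units whose true marginal value exceeds $p$). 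In every case the utility weakly drops, giving $\hat{u}_j^i(f,p)\le u_j^i(f,p)$; the identical argument applied to $\boldsymbol{s}_j^i(p)=\argmax_{f\in[0,\mathcal{Q}_j^i]} z_j^i(f,p)$ and the seller utility $z_j^i$ yields $\hat{z}_j^i(f,p)\le z_j^i(f,p)$ for the IoT devices.

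The main obstacle is the rationing that occurs when the cross-price does not clear the arena, i.e.\ the $\boldsymbol{d}^L>\boldsymbol{s}^L$ and $\boldsymbol{d}^L<\boldsymbol{s}^L$ branches of Algorithm \ref{algo:truthful3}. When $\boldsymbol{d}^L=\boldsymbol{s}^L$ every demanded and supplied virtual agent trades its full optimal quantity at $p$, and the argument above closes immediately. When the long side must be trimmed, I would show that the trimming is \emph{monotone} in the reported marginal value (raising one's reported value never removes a winning virtual agent and never lowers the quantity won) and that each winning virtual agent pays exactly $p$, which never exceeds its reported marginal value on its own side. Monotone allocation together with a uniform threshold payment is the Myerson-style condition guaranteeing dominant-strategy incentive compatibility, so truthfulness survives the rationing; combined with the price-independence of the second paragraph, this closes the proof. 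Verifying that the value-sorted selection in the two rationing branches is genuinely monotone, and that it cannot be gamed by an agent who simultaneously perturbs several of its virtual marginal values, is the delicate point I expect to absorb most of the work, and the DMR assumption, which keeps each agent's reported marginals sorted, is what I would lean on to rule out such multi-unit deviations. Finally, since every category $w_i$ is processed independently in Algorithm \ref{algo:0} and agents report only within their own category, category-wise truthfulness lifts to truthfulness of QUAD.
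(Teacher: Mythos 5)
Your route is genuinely different from the paper's. The paper proves Lemma \ref{lemma:1d} by an ascending-auction case analysis (``early exit'' vs.\ ``late exit''): underbidding can only make an agent drop out of the price-formation process and forfeit nonnegative utility, while overbidding can only make it win at terms no better than truthful reporting or still lose. You instead lean on the cross-arena structure (the price an LMCA agent trades at, $p_R$, is computed entirely from RMCA reports), argue that at an exogenous price the truthful demand/supply is a best response under DMR, and then try to close the rationing branches of Algorithm \ref{algo:truthful3} with a Myerson-style monotonicity argument. Your structural observation is faithful to the mechanism --- it is exactly the argument of the cited MUDA paper --- and the paper's own proof never uses this price-exogeneity at all.

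However, your closing step for the rationing branches has a genuine gap, and it is exactly the point you flagged as delicate. Myerson's condition is monotone allocation \emph{plus payment equal to the critical value}; in the branch $\boldsymbol{d}^L > \boldsymbol{s}^L$ the critical value of a winning virtual task requester is the reported marginal value of the first excluded virtual requester, which is in general strictly above $p_R$, yet the mechanism charges only $p_R$. With value-sorted rationing at a uniform price below that threshold, a virtual requester whose true marginal value lies above $p_R$ but below the last winner's reported value loses under truthful reporting, while overbidding moves it up the sorted order and wins it a unit at price $p_R$, a strict gain of $\boldsymbol{\nu}_{j,l}^i - p_R > 0$. So ``monotone allocation and every winner pays $p_R$'' does not yield DSIC here, and your argument does not rule out this deviation. (For what it is worth, the paper's proof does not rule it out either: its exit-based case analysis only tracks how a misreport changes participation at the ascending price, and never analyzes the value-sorted selection when the cross price fails to clear the arena.) In the original MUDA mechanism this is resolved by rationing the long side \emph{randomly}, so that the allocation among virtual agents who clear $p_R$ is independent of the magnitude of their reports; a correct proof along your lines must either invoke (or impose) such random rationing, or replace the payment in the rationed branch by the critical value.
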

\begin{proof}
Fix category $w_i$ and task requester $r_j^i$. To prove that QUAD is truthful, the two cases are considered: (1) \textbf{early exit}, and (2) \textbf{late exit}. In the first case, it may happen that task requester $r_j^i$ bids less than his true valuation and can come out of the market early, even if he can participate in further iterations. In late exit, it may happen that the task requester $r_j^i$ can bid more than his true valuation and can stay for longer time in the market. The proof is carried out considering the task requesters. However, in the similar fashion the truthfulness of the QUAD can be proved for the IoT devices. Let us discuss the two cases one-by-one. 
\begin{itemize}
\item [-] \textbf{Early exit:} Let us suppose that a task requester $r_j^i$ misreports his bid value for buying $t$ units of completed tasks such that $\hat{\boldsymbol{\nu}}_j^i(t) < \boldsymbol{\nu}_j^i(t)$. Now, in this situation the two scenarios can happen. One scenario could be that the $i^{th}$ iteration is the last iteration of the mechanism $i.e.$ demand and supply becomes equal and the mechanism terminates. In this case, the task requester $r_j^i$ wins and his utility will be same as his utility when he would have reported truthfully $i.e.$ $\hat{u}_j^i(t,p) = \boldsymbol{\nu}_j^i(t) - p \cdot t = u_j^i(t,p)$. Another scenario could be the mechanism can go further for $(i+1)^{th}$ iteration and many more. Now, in this scenario as the task requester has misreported (lowered his bid value) and exited from the market, so he is a loser and his utility will be 0. 
\item [-] \textbf{Late exit:} Let us suppose that a task requester $r_j^i$ misreports his bid value for buying $t$ units of completed tasks such that $\hat{\boldsymbol{\nu}}_j^i(t) > \boldsymbol{\nu}_j^i(t)$. Now, in this situation the two scenarios can happen. One scenario could be that the $i^{th}$ iteration is the last iteration of the mechanism and after that the mechanism terminates. In this case, the task requester $r_j^i$ wins and his utility will be same as his utility when he would have reported truthfully $i.e.$ $\hat{u}_j^i(t,p) = \boldsymbol{\nu}_j^i(t) - p \cdot t = u_j^i(t,p)$. Another scenario could be the mechanism can go further for $(i+1)^{th}$ iteration and many more. Now, in this scenario again the two cases can happen. The task requester $r_j^i$ can participate in further iterations by misreporting his bid value $i.e.$ $\hat{\boldsymbol{\nu}}_j^i(t) > \boldsymbol{\nu}_j^i(t)$ and he wins. In this case, his utility will be $\hat{u}_j^i(t,p) = \boldsymbol{\nu}_j^i(t) - p \cdot t < u_j^i(t,p)$. Another case could be with the increase in his bid value the task requester $r_j^i$ made his presence for some additional round but dropped from the market in the later rounds. In this case he loses and his utility will be 0.      

\end{itemize}
So, from above discussion it can be concluded that the task requester cannot gain by misreporting his bid value. In the similar line it can be shown that the IoT devices cannot gain by misreporting its bid value. Hence, QUAD is truthful for $w_i$ category and so is for all the $k$ categories.  
\end{proof}
\begin{lemma}
\label{lemma:IR}
QUAD satisfies individual rationality. 
\end{lemma}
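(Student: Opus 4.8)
The plan is to verify that every agent who trades ends up with non-negative utility, arguing separately for a winning task requester and a winning IoT device and then aggregating over the two arenas and the $k$ categories (losers trivially have utility $0$ by the ``otherwise'' branch of the utility functions, so only winners need attention). Fix a category $w_i$ and, without loss of generality, consider the winners of the left arena (LMCA), which trade at the price $p_R$ returned by Algorithm \ref{algo:truthful1}; the right arena is symmetric with the roles of $p_L$ and $p_R$ exchanged. Because every agent has a DMR valuation, each agent is decomposed into single-unit virtual agents, and the whole IR claim reduces to controlling, for each virtual agent that trades, the sign of its marginal value relative to $p_R$. Concretely, I would use the two facts baked into the demand and supply definitions: a virtual task requester contributes to $\boldsymbol{d}^L$ only when $\boldsymbol{\nu}_{j,l}^i > p_R$, and a virtual IoT device contributes to $\boldsymbol{s}^L$ only when $\boldsymbol{\nu}_{k,v}^i < p_R$.

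For a winning task requester $r_j^i$ who receives $f$ units I would telescope the marginal values: since $\boldsymbol{\nu}_j^i(0)=0$ and $\boldsymbol{\nu}_{j,l}^i=\boldsymbol{\nu}_j^i(l)-\boldsymbol{\nu}_j^i(l-1)$, we have $\boldsymbol{\nu}_j^i(f)=\sum_{l=1}^{f}\boldsymbol{\nu}_{j,l}^i$, and under DMR the $f$ allocated units are the $f$ highest-valued ones, each with $\boldsymbol{\nu}_{j,l}^i>p_R$, so
\begin{equation}
u_j^i(f,p_R)=\boldsymbol{\nu}_j^i(f)-f\cdot p_R=\sum_{l=1}^{f}\left(\boldsymbol{\nu}_{j,l}^i-p_R\right)\geq 0 ,
\end{equation}
meaning no winning requester pays more than its bid. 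Symmetrically, for a winning IoT device $\mathcal{I}_k^i$ that sells $f$ units I would rewrite its cost as the sum of its $f$ cheapest (highest-index) marginal values, $\boldsymbol{\nu}_k^i(\mathcal{Q}_k^i)-\boldsymbol{\nu}_k^i(\mathcal{Q}_k^i-f)=\sum_{v=\mathcal{Q}_k^i-f+1}^{\mathcal{Q}_k^i}\boldsymbol{\nu}_{k,v}^i$, which by the supply condition all satisfy $\boldsymbol{\nu}_{k,v}^i<p_R$, yielding
\begin{equation}
z_k^i(f,p_R)=f\cdot p_R-\sum_{v=\mathcal{Q}_k^i-f+1}^{\mathcal{Q}_k^i}\boldsymbol{\nu}_{k,v}^i=\sum_{v=\mathcal{Q}_k^i-f+1}^{\mathcal{Q}_k^i}\left(p_R-\boldsymbol{\nu}_{k,v}^i\right)\geq 0 ,
\end{equation}
so no winning device is paid less than its ask.

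It then remains to confirm that the winner-determination rule never allocates a unit that violates these threshold conditions in any of the three branches of Algorithm \ref{algo:truthful3}. When $\boldsymbol{d}^L=\boldsymbol{s}^L$ every demanded and every supplied virtual agent trades, so the thresholds hold by the definitions of $\boldsymbol{d}^L$ and $\boldsymbol{s}^L$. When $\boldsymbol{d}^L>\boldsymbol{s}^L$ the requesters are sorted by decreasing marginal value and only the top $\boldsymbol{s}^L$ are retained; since all $\boldsymbol{d}^L\geq\boldsymbol{s}^L$ demanding units already exceed $p_R$, the retained subset does too, while every selling device still has ask below $p_R$, and the case $\boldsymbol{d}^L<\boldsymbol{s}^L$ is the mirror image keeping the $\boldsymbol{d}^L$ cheapest devices. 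I expect the main obstacle to be exactly this bookkeeping: showing that the units the algorithm actually assigns to an agent coincide with the extreme (highest- or lowest-marginal-value) units used in the telescoping identities, which is precisely where the DMR assumption does the work. Once that alignment is pinned down, IR holds in each arena of each category, and since QUAD is the union of these independent sub-auctions over all $k$ categories, QUAD is individually rational.
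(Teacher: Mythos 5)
Your proposal is correct and follows essentially the same route as the paper: both reduce individual rationality to the observation that an agent trades only when the price is on the favorable side of its (marginal) valuation, while losers retain zero utility. The paper's own proof is a much looser rendering of this idea --- it is phrased as an ascending-price dialogue (``are you ready to trade at price $p$?'') at the level of whole agents, without the per-unit telescoping identities or the three-branch verification of Algorithm \ref{algo:truthful3} that you carry out, so your write-up is a more rigorous version of the same argument.
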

\begin{proof}
 Fix a category $w_i$. In QUAD each time a price $p$ is set and then it is asked from the agents that \emph{whether you are ready to trade at price $p$ or not?} Now, if for any task requester $r_j^i$ the bid value is more than or equal to the current price then his answer will be \textbf{YES}, otherwise his answer will be \textbf{NO}. It is obvious that the task requester $r_j^i$ will reply \textbf{YES} only when $\boldsymbol{\nu}_j^i(t) > p$. In such case $r_j^i$ has to pay less than his valuation and achieves positive utility. Else if $\boldsymbol{\nu}_j^i(t) < p$ then $r_j^i$ will drop from the crowdsensing market and in that case his utility will be \emph{zero}.  So, the task requester will be having a non negative utility. Similar argument can be given for the IoT devices. As the utility of the participating agents are non negative, so from the definition of individual rationality (see Definition \ref{def:IR}) it can be said that QUAD is individually rational for $w_i$ category. Hence, QUAD is individually rational for all the $k$ categories.  
\end{proof}
\begin{lemma}
\label{lemma:BB}
The allocation resulted by QUAD is weakly budget balanced. 
\end{lemma}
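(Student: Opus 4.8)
The plan is to establish that, inside every mobile crowdsensing arena, the total money collected from the winning task requesters is exactly the total money handed to the winning IoT devices, so that the platform runs no deficit; since this per-arena balance is non-negative, summing over the two arenas and the $k$ categories yields weak budget balance. First I would fix a category $w_i$ and argue for the left arena (LMCA), the right arena (RMCA) being identical after swapping the roles of $p_R$ and $p_L$. The crucial structural observation is that, by the return value of Algorithm~\ref{algo:truthful3}, every unit traded in LMCA changes hands at the \emph{single} per-unit price $p_R$ computed from RMCA: each winning virtual task requester pays $p_R$ and each winning virtual IoT device receives $p_R$. Consequently the net contribution of LMCA to the platform's balance is $p_R\cdot(B_L - S_L)$, where $B_L$ and $S_L$ denote the numbers of winning buying units and winning selling units, and it suffices to prove $B_L = S_L$.

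The main step is therefore the counting claim that $B_L = S_L = \min(\boldsymbol{d}^L,\boldsymbol{s}^L)$, which I would verify by inspecting the three branches of Algorithm~\ref{algo:truthful3}. When $\boldsymbol{d}^L = \boldsymbol{s}^L$, lines~1--11 admit every virtual task requester and every virtual IoT device that is active at $p_R$, so $B_L = \boldsymbol{d}^L = \boldsymbol{s}^L = S_L$. When $\boldsymbol{d}^L > \boldsymbol{s}^L$, the \emph{while} loop stops precisely when $|r^{w(i)}| = \boldsymbol{s}^L$, and because strictly more than $\boldsymbol{s}^L$ virtual requesters clear the price $p_R$ there are always enough to fill the quota, giving $B_L = \boldsymbol{s}^L$; the remaining loop admits all $\boldsymbol{s}^L$ supplying IoT units, so $S_L = \boldsymbol{s}^L$ and the two agree. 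The case $\boldsymbol{d}^L < \boldsymbol{s}^L$ is symmetric with the roles of requesters and devices exchanged, yielding $B_L = S_L = \boldsymbol{d}^L$.

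With the counting claim in hand the conclusion is immediate: the LMCA balance is $p_R\cdot(B_L - S_L) = 0$, and likewise the RMCA balance is $p_L\cdot(B_R - S_R) = 0$. Individual rationality (Lemma~\ref{lemma:IR}) guarantees that every participating unit trades voluntarily, so no further transfers occur; summing the non-negative per-arena balances over both arenas and over all $k$ categories shows that the total payment of the agents is $0 \geq 0$, hence QUAD is weakly budget balanced. I would note in passing that this argument in fact certifies the stronger \emph{strongly budget-balanced} property of Definition~\ref{def:3}. I expect the main obstacle to be the counting claim in the two trade-reduction branches: one must read Algorithm~\ref{algo:truthful3} together with the filtering performed in Algorithm~\ref{algo:truthful2_algo} (the sets $\tilde r_L^i$ and $\tilde{\mathcal{I}}_L^i$ of agents with strictly positive demand or supply at $p_R$) to be sure that the lines admitting ``all virtual IoT devices'' or ``all virtual task requesters'' enumerate only the units counted in $\boldsymbol{s}^L$ or $\boldsymbol{d}^L$, and not every agent merely assigned to the arena.
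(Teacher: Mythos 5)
Your counting argument (equal numbers of winning buying and selling units in each arena, every unit trading at that arena's imported price) is a literal reading of Algorithm~\ref{algo:truthful3}, but it is not the argument the paper makes, and the difference is substantive. The paper's proof of Lemma~\ref{lemma:BB} hinges on a payment component that your proof assumes away: in the trade-reduction branches, a winning task requester does not simply pay $p_R$ per unit, it also pays a \emph{trading fee} equal to the utility that the displaced virtual task requesters (those pushed out of the winning set by its presence) would have obtained. Since by Lemma~\ref{lemma:IR} that foregone utility is non-negative (the paper argues non-zero), the platform collects a surplus; this fee is exactly what makes QUAD \emph{weakly} rather than \emph{strongly} budget balanced, in the style of the Vickrey-price variant of the MUDA mechanism \cite{SegalHalevi2018MUDAAT} on which QUAD is modeled. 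It is also what the experiments in Section~\ref{sec:ef} display: the platform's utility is reported as non-zero, which is impossible under your accounting, in which the platform's balance in every arena is identically $p_R\cdot(B_L-S_L)=0$.

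This leads to the second, definitional problem: your final step passes from ``net payment $=0$'' to weak budget balance via ``$0\geq 0$''. Under the paper's own definitions that inference fails. Definition~\ref{def:3} (SBB) is ``total payment exactly zero'', whereas Definition~\ref{def:4} (WBB) is ``total payment \emph{positive}, i.e., the mechanism incurs a surplus''; a net payment of exactly zero therefore certifies SBB but not WBB as this paper defines it. So even granting your (quite reasonable) counting claim $B_L=S_L=\min(\boldsymbol{d}^L,\boldsymbol{s}^L)$, your proof establishes a property different from, and under the paper's literal definitions incompatible with, the one claimed in the lemma, because it omits the fee on which the paper's surplus rests. To align with the paper you would keep your quantity-matching observation, then add the fee to the ledger: winning buyers pay the uniform price plus the externality fee, winning sellers receive the uniform price, the price components cancel by your counting claim, and the platform's balance equals the sum of the fees, which is non-negative (positive, per the paper) by individual rationality --- that one-line accounting is essentially the paper's entire proof.
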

\begin{proof}
Fix a category $w_i$. Let us say some virtual task requesters of $r_{j}^i$ is present in the winning set. Now, from the construction of QUAD the trading fee paid by any task requester $r_j^i$ is the utility that would have been achieved by the virtual task requesters that are not in the winning set because of the presence of virtual task requesters of task requester $r_j^i$ in the winning set. In Lemma \ref{lemma:IR} it is already shown that the agents in the winning set is having a non-zero utility, so it can be said that the trading fees paid by any task requester $r_j^i$ will be a non-zero value. So, the platform will always receive non zero fees from the task requesters. Similar argument can be given for the IoT devices. QUAD is weakly budget balanced for $w_i$ category. Hence, QUAD is weakly budget balance for all the $k$ different categories.         
\end{proof}

\subsection{Probabilistic Analysis of QUAD}
In this section, the probabilistic analysis of QUAD is carried out. Lemma \ref{lemma:1l} provide an estimate that, on an average how many tasks are executed for any task requester? It is shown that for any task requester $r_j^i$ the expected number of his endowed tasks will be executed by the quality IoT devices will be $\boldsymbol{\Lambda}_i \cdot \bigg(\frac{1}{\lg \boldsymbol{\Lambda}_i}\bigg)$. Here, $\boldsymbol{\Lambda}_i$ is the total number of tasks carried by any task requester $r_j^i$. Further in Lemma \ref{lemma:1101}, it is estimated that at least $0.9\lg \boldsymbol{\Lambda}_i$ tasks of any task requester $r_j^i$ will be considered out of $\boldsymbol{\Lambda}_i$ tasks is at most $\frac{10}{9\lg \boldsymbol{\Lambda}_i}$.    
 
\begin{lemma}
\label{lemma:1l}
In category $w_i$, for any task requester $r_j^i$ the expected number of his endowed tasks executed by the quality IoT devices out of $\boldsymbol{\Lambda}_i$ is given as:
\begin{equation*}
E[\boldsymbol{\mathcal{Z}}] =  \boldsymbol{\Lambda}_i \cdot \bigg(\frac{1}{\lg \boldsymbol{\Lambda}_i}\bigg)
\end{equation*}   
where, $\boldsymbol{\Lambda}_i$ is the number of tasks held by task requester $r_j^i$. $\boldsymbol{\mathcal{Z}}$ is the indicator random variable measuring the number of tasks executed by the quality IoT devices out of $\boldsymbol{\Lambda}_i$ tasks.  
\end{lemma}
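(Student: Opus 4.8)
The plan is to write $\boldsymbol{\mathcal{Z}}$ as a sum of per-task indicator variables and then invoke linearity of expectation, so that the whole lemma collapses to computing a single per-task probability. Concretely, for the fixed task requester $r_j^i$ I would index his $\boldsymbol{\Lambda}_i$ endowed tasks by $t \in \{1,\dots,\boldsymbol{\Lambda}_i\}$ and introduce, for each $t$, an indicator
\begin{equation*}
X_t = \begin{cases} 1, & \text{if task } t \text{ is executed by a quality IoT device},\\ 0, & \text{otherwise}.\end{cases}
\end{equation*}
Since each endowed task is either executed or not, the counting variable satisfies $\boldsymbol{\mathcal{Z}} = \sum_{t=1}^{\boldsymbol{\Lambda}_i} X_t$, whence $E[\boldsymbol{\mathcal{Z}}] = \sum_{t=1}^{\boldsymbol{\Lambda}_i} E[X_t] = \sum_{t=1}^{\boldsymbol{\Lambda}_i} \Pr[X_t = 1]$. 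This reduces the statement to establishing a single quantity, namely the probability that an arbitrary endowed task of $r_j^i$ is ultimately executed.

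The core of the argument is therefore to show that $\Pr[X_t = 1] = 1/\lg \boldsymbol{\Lambda}_i$ for every $t$, and I would trace where this probability is generated inside QUAD. A task of $r_j^i$ is executed only if the corresponding virtual task requester lands in the winning set produced by Algorithm \ref{algo:truthful3}, which depends on two sources of structure: the random assignment of $r_j^i$ and the IoT devices to LMCA/RMCA (each placed independently with probability $1/2$ in Algorithm \ref{algo:truthful1}), and whether the task's marginal value clears the equilibrium price computed in the opposite arena. Because all valuations obey the DMR assumption, the surviving virtual task requesters are exactly those whose marginal value exceeds the equilibrium price, so the event $\{X_t = 1\}$ is well defined and can be analyzed by combining the random split with the winner-selection rule.

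I expect the main obstacle to be the rigorous derivation of $\Pr[X_t = 1] = 1/\lg \boldsymbol{\Lambda}_i$, rather than the bookkeeping around linearity of expectation, which is routine. The delicate point is to argue that, once $r_j^i$'s tasks are randomly split and matched against the equilibrium price of the complementary arena, the marginal-value ordering forced by DMR lets only a $1/\lg \boldsymbol{\Lambda}_i$ fraction of his tasks clear the price in expectation; this is where the DMR property and the $\epsilon$-incremented price search of Algorithm \ref{algo:truthful1} must be combined with care, and where I would spend most of the effort. Having established the per-task probability, substituting it into the linearity-of-expectation expression yields $E[\boldsymbol{\mathcal{Z}}] = \boldsymbol{\Lambda}_i \cdot \big(1/\lg \boldsymbol{\Lambda}_i\big)$, which is precisely the form needed as the expectation input to the Markov-type tail bound of Lemma \ref{lemma:1101}.
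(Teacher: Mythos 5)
Your skeleton is exactly the paper's proof: the paper also writes $\boldsymbol{\mathcal{Z}} = \sum_{i=1}^{\boldsymbol{\Lambda}_i} \boldsymbol{\mathcal{Z}}_i$ with per-task indicators, applies linearity of expectation to get $E[\boldsymbol{\mathcal{Z}}] = \boldsymbol{\Lambda}_i \cdot \boldsymbol{q}$, where $\boldsymbol{q}$ is the probability that a single task gets executed. Up to that point your proposal and the paper coincide line for line.

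The divergence is in the step you correctly identify as ``the core of the argument.'' You propose to \emph{derive} $\Pr[X_t = 1] = 1/\lg \boldsymbol{\Lambda}_i$ from the internals of QUAD --- the probability-$1/2$ split into the two arenas, the $\epsilon$-incremented equilibrium price, the DMR ordering of marginal values, and the winner-selection rule. That derivation does not exist in the paper, and it cannot be carried out as you envision: nothing in the mechanism ties the per-task execution probability to the specific quantity $1/\lg \boldsymbol{\Lambda}_i$. (Indeed, the arena split alone already makes the probability depend on the realized bid profile of the other agents, not on $\boldsymbol{\Lambda}_i$ through a logarithm.) The paper sidesteps this entirely: after reaching $E[\boldsymbol{\mathcal{Z}}] = \boldsymbol{\Lambda}_i \cdot \boldsymbol{q}$ it says ``\emph{if} the probability that the $i^{th}$ task \ldots will get executed \emph{is taken as} $\frac{1}{\lg \boldsymbol{\Lambda}_i}$,'' i.e., the value of $\boldsymbol{q}$ is postulated, not proved, and the lemma is really a conditional statement under that assumption. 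So, judged as a complete proof, your proposal has a genuine gap at exactly the step you deferred --- and it is a gap you cannot close by the route you sketch. The fix is not more careful analysis of the price search but recognizing (as the paper implicitly does) that $\boldsymbol{q} = 1/\lg \boldsymbol{\Lambda}_i$ must be introduced as a hypothesis; once it is, everything you wrote before that point already constitutes the entire proof.
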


\begin{proof}
Fix category $w_i$ and the task requester $r_j^i$. In this lemma, we are trying to show that, in expectation, how many tasks of the task requester $r_j^i$ gets executed by the quality IoT devices out of $\boldsymbol{\Lambda}_i$? It is represented as $E[\boldsymbol{\mathcal{Z}}]$. Before moving forward, it is important to decide the sample space associated with the task requester $r_j^i$ and is given as:
\begin{center}
S = $\{i^{th}$ task of $r_j^i$ gets executed, $i^{th}$ task of $r_j^i$ does not get executed$\}$
\end{center} 
Now, let us determine the probability that $i^{th}$ task of $r_j^i$ gets executed and is given as $\boldsymbol{q}$. So, the probability that $i^{th}$ task of $r_j^i$ do not get executed is given as $1 - \boldsymbol{q}$. Let us define an indicator random variable $\boldsymbol{\mathcal{Z}}_i$, associated with $i^{th}$ task of $r_j^i$ gets executed, which is the event $R$. The variable $\boldsymbol{\mathcal{Z}}_i$ captures that the $i^{th}$ task of $r_j^i$ gets executed or not. If it gets executed then $\boldsymbol{\mathcal{Z}}_i$ is 1 otherwise 0. We write

\begin{equation}
\label{equ:123}
\boldsymbol{\mathcal{Z}}_i = I\{R\} 
\end{equation}
\begin{equation*}
\hspace*{32mm} = 
\begin{cases}
1, & {if~R~happens},\\
0, & {otherwise} 
\end{cases} 
\end{equation*}
The expected value of the random variable $\boldsymbol{\mathcal{Z}}_i$ gives us the expected value that the $i^{th}$ task of $r_j^i$ gets executed. Taking expectation both side of equation \ref{equ:123}, we get
\begin{equation*}
E\[\boldsymbol{\mathcal{Z}}_i\] = E\[I\{R\}\] 
\end{equation*}
\begin{equation*}
\hspace*{26mm} =  1 \cdot \boldsymbol{q} + 0 \cdot (1-\boldsymbol{q})   
\end{equation*}
\begin{equation*}
\hspace*{0mm} =  \boldsymbol{q}   
\end{equation*}
Now, let $\boldsymbol{\mathcal{Z}}$ be the random variable denoting the total number of tasks executed by the quality IoT devices out of $\boldsymbol{\Lambda}_i$ tasks. It is formulated as
\begin{equation}
\label{equ:1234}
\boldsymbol{\mathcal{Z}} = \sum_{i=1}^{\boldsymbol{\Lambda}_i} \boldsymbol{\mathcal{Z}}_i
\end{equation}
 As our aim is to compute the expected number of tasks executed by the quality IoT devices out of $\boldsymbol{\Lambda}_i$ tasks and can be obtained by taking expectation both side of the equation \ref{equ:1234}. So, we get
 \begin{equation}
 \label{equ:12345}
E[\boldsymbol{\mathcal{Z}}] = E\bigg[\sum_{i=1}^{\boldsymbol{\Lambda}_i} \boldsymbol{\mathcal{Z}}_i\bigg]
\end{equation}
By linearity of expectation and then substituting the value of $E[\boldsymbol{\mathcal{Z}}_i]$, we get
 \begin{equation}
 \label{equ:123456}
E[\boldsymbol{\mathcal{Z}}] = \sum_{i=1}^{\boldsymbol{\Lambda}_i} E[\boldsymbol{\mathcal{Z}}_i] = \sum_{i=1}^{\boldsymbol{\Lambda}_i} \boldsymbol{q}
\end{equation}     
 \begin{equation}
 \label{equ:1234567}
\hspace*{-11mm} = \boldsymbol{\Lambda}_i \cdot \boldsymbol{q}
\end{equation}   
Now, if the probability that $i^{th}$ task of $r_j^i$ task requester will get executed is taken as $\frac{1}{\lg \boldsymbol{\Lambda}_i}$ then the value $E[\boldsymbol{\mathcal{Z}}]$ boils down to $\boldsymbol{\Lambda}_i \cdot \bigg(\frac{1}{\lg \boldsymbol{\Lambda}_i}\bigg)$. It can be written as:
\begin{equation*}
 E[\boldsymbol{\mathcal{Z}}] = \boldsymbol{\Lambda}_i \cdot \bigg(\frac{1}{\lg \boldsymbol{\Lambda}_i}\bigg)
\end{equation*}
Hence proved.
\end{proof}

\begin{observation}
If we consider the value of $\boldsymbol{\Lambda}_i$ as $100$ then $E[\boldsymbol{\mathcal{Z}}] = \frac{\boldsymbol{\Lambda}_i}{\lg \boldsymbol{\Lambda}_i} = \frac{100}{\lg 100} = \frac{100}{2} = 50$. It means that, for task requester $r_j^i$ on an average 50 of his tasks will be executed by the IoT devices out of his 100 sensing tasks.  
\end{observation}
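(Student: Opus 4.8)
The plan is to treat this observation as a direct numerical instantiation of the closed form established in Lemma \ref{lemma:1l}, so no new probabilistic argument is required. First I would recall that Lemma \ref{lemma:1l} gives, for any task requester $r_j^i$ in category $w_i$, the expression $E[\boldsymbol{\mathcal{Z}}] = \boldsymbol{\Lambda}_i \cdot \big(\frac{1}{\lg \boldsymbol{\Lambda}_i}\big)$, where $\boldsymbol{\Lambda}_i$ is the number of tasks held by $r_j^i$ and $\boldsymbol{\mathcal{Z}}$ counts how many of them are executed by the quality IoT devices. The entire content of the observation is the evaluation of this expression at the particular value $\boldsymbol{\Lambda}_i = 100$, so the proof reduces to a substitution followed by an arithmetic simplification.

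Next I would substitute $\boldsymbol{\Lambda}_i = 100$ to obtain $E[\boldsymbol{\mathcal{Z}}] = \frac{100}{\lg 100}$. The one point that genuinely needs care is the interpretation of $\lg$: as fixed in the analysis of Lemma \ref{lemma:1a}, the symbol $\lg$ denotes the base-$10$ logarithm $\log_{10}$, not the binary logarithm. This convention is exactly what makes the denominator collapse cleanly, since $\lg 100 = \log_{10} 100 = 2$. I would therefore record this step explicitly, so that the reader does not inadvertently read $\lg$ as $\log_2$, which would instead give $\log_2 100 \approx 6.64$ and a quite different numeric value.

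Finally, with $\lg 100 = 2$ in hand, the arithmetic reduces to $E[\boldsymbol{\mathcal{Z}}] = \frac{100}{2} = 50$, which is precisely the claimed value, and the interpretive sentence follows at once: since $\boldsymbol{\mathcal{Z}}$ is the random variable counting how many of the $\boldsymbol{\Lambda}_i$ tasks of $r_j^i$ are executed by the quality IoT devices, an expectation of $50$ means that, on average, $50$ of the requester's $100$ sensing tasks are executed. The main obstacle here is cosmetic rather than mathematical: there is no estimation or bounding to carry out, and the only thing to guard against is the logarithm-base ambiguity flagged above, which is resolved by appealing to the paper's own convention.
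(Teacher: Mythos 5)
Your proposal is correct and matches the paper exactly: the observation is nothing more than the substitution $\boldsymbol{\Lambda}_i = 100$ into the closed form $E[\boldsymbol{\mathcal{Z}}] = \boldsymbol{\Lambda}_i \cdot \bigl(\frac{1}{\lg \boldsymbol{\Lambda}_i}\bigr)$ from Lemma \ref{lemma:1l}, followed by the arithmetic $\frac{100}{2} = 50$. Your explicit appeal to the paper's stated convention that $\lg$ denotes $\log_{10}$ (fixed in the proof of Lemma \ref{lemma:1a}) is exactly the right safeguard, since reading $\lg$ as $\log_2$ would yield $\approx 15$ instead of $50$, and the paper's companion observation after Lemma \ref{lemma:1101} (where $\lg 1000 = 3$) confirms the base-$10$ reading.
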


\begin{lemma}
\label{lemma:1101}
The probability that at least $0.9 \boldsymbol{\Lambda}_i$ tasks of any task requester $r_j^i$ will be executed out of $\boldsymbol{\Lambda}_i$ by the quality IoT devices is less than or equals to $\bigg(\frac{10}{9\lg \boldsymbol{\Lambda}_i}\bigg)$. More formally, we can write

\begin{equation*}
P\bigg\{\boldsymbol{\mathcal{Z}} \geq 0.9 \boldsymbol{\Lambda}_i\bigg\} \leq \bigg(\frac{10}{9\lg \boldsymbol{\Lambda}_i}\bigg)
\end{equation*}

\begin{proof}
Fix a category $w_i$ and a task requester $r_j^i$. Here, we are trying to prove that at least $0.9 \boldsymbol{\Lambda}_i$ tasks of task requester $r_j^i$ will be executed out of $\boldsymbol{\Lambda}_i$ tasks is bounded above by $\bigg(\frac{10}{9\lg \boldsymbol{\Lambda}_i}\bigg)$ quantity. For this purpose we will take the help of the random variable $\boldsymbol{\mathcal{Z}}$ defined in Lemma \ref{lemma:1l}. Following Lemma \ref{lemma:1l}, we can write

\begin{equation}
\label{equ:98}
I = 
\begin{cases}
1, & {if~\boldsymbol{\mathcal{Z}} \geq 0.9\boldsymbol{\Lambda}_i},\\
0, & {otherwise} 
\end{cases} 
\end{equation}
Now, considering the case we are interested in from the above case representation form $i.e.$ $\boldsymbol{\mathcal{Z}} \geq 0.9\boldsymbol{\Lambda}_i$. We can rewrite it as $\frac{\boldsymbol{Z}}{0.9\boldsymbol{\Lambda}_i} \geq 1$. From equation \ref{equ:98}, $I$ is same as 1, so we get
\begin{equation}
\label{equ:99}
 \frac{\boldsymbol{\mathcal{Z}}}{0.9\boldsymbol{\Lambda}_i} \geq I
\end{equation}   
Taking expectation both side in equation \ref{equ:99}, we get
\begin{equation*}
E \bigg[\frac{\boldsymbol{\mathcal{Z}}}{0.9\boldsymbol{\Lambda}_i}\bigg] \geq E[I]
\end{equation*}
or
\begin{equation}
\label{equ:100}
E[I] \leq E \bigg[\frac{\boldsymbol{\mathcal{Z}}}{0.9\boldsymbol{\Lambda}_i}\bigg] = \frac{1}{0.9\boldsymbol{\Lambda}_i}E[\boldsymbol{\mathcal{Z}}]  
\end{equation} 

\begin{equation}
\label{equ:101}
\hspace*{12mm}= \frac{10}{9 \boldsymbol{\Lambda}_i} \cdot E [\boldsymbol{\mathcal{Z}}] 
\end{equation}   
From the definition of expectation, equation \ref{equ:101} can be rewritten as
\begin{equation}
\label{equ:101}
P\bigg\{\boldsymbol{\mathcal{Z}} \geq 0.9\boldsymbol{\Lambda}_i\bigg\} \cdot 1 \leq \bigg(\frac{10}{9 \boldsymbol{\Lambda}_i}\bigg) \cdot E [\boldsymbol{\mathcal{Z}}] 
\end{equation}
Substituting the value of $E [\boldsymbol{\mathcal{Z}}]$ in equation \ref{equ:101} from Lemma \ref{lemma:1l}, we get
\begin{equation*}
P\bigg\{\boldsymbol{\mathcal{Z}} \geq 0.9\boldsymbol{\Lambda}_i\bigg\} \leq \bigg(\frac{10}{9 \boldsymbol{\Lambda}_i}\bigg) \cdot \bigg(\frac{\boldsymbol{\Lambda}_i}{\lg \boldsymbol{\Lambda}_i}\bigg)
\end{equation*}

\begin{equation*}
\hspace*{12mm} =  \frac{10}{9\lg \boldsymbol{\Lambda}_i}
\end{equation*}

Hence proved.
\end{proof}  
\end{lemma}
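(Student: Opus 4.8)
The plan is to recognize this bound as a direct application of Markov's inequality to the non-negative, integer-valued random variable $\boldsymbol{\mathcal{Z}}$ introduced in Lemma \ref{lemma:1l}, whose expectation has already been computed there to be $E[\boldsymbol{\mathcal{Z}}] = \boldsymbol{\Lambda}_i / \lg \boldsymbol{\Lambda}_i$. Since $\boldsymbol{\mathcal{Z}}$ counts how many of the $\boldsymbol{\Lambda}_i$ tasks actually get executed, it is certainly non-negative, which is the only hypothesis Markov requires; everything else reduces to substitution and simplification, so I would reuse Lemma \ref{lemma:1l} as a black box and add no new probabilistic input.

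First I would introduce the indicator $I$ of the event $\{\boldsymbol{\mathcal{Z}} \geq 0.9\boldsymbol{\Lambda}_i\}$, exactly as in equation \ref{equ:98}. The core observation is a pointwise inequality: on the event where $I = 1$ we have $\boldsymbol{\mathcal{Z}} \geq 0.9\boldsymbol{\Lambda}_i$, hence $\frac{\boldsymbol{\mathcal{Z}}}{0.9\boldsymbol{\Lambda}_i} \geq 1 = I$, while on the complementary event $I = 0 \leq \frac{\boldsymbol{\mathcal{Z}}}{0.9\boldsymbol{\Lambda}_i}$ simply because the ratio is non-negative. Therefore
\begin{equation*}
\frac{\boldsymbol{\mathcal{Z}}}{0.9\boldsymbol{\Lambda}_i} \geq I
\end{equation*}
holds on the whole sample space, which is precisely equation \ref{equ:99}.

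Next I would take expectations of both sides, using monotonicity and linearity together with $E[I] = P\{\boldsymbol{\mathcal{Z}} \geq 0.9\boldsymbol{\Lambda}_i\}$, to obtain
\begin{equation*}
P\{\boldsymbol{\mathcal{Z}} \geq 0.9\boldsymbol{\Lambda}_i\} \leq \frac{1}{0.9\boldsymbol{\Lambda}_i}\, E[\boldsymbol{\mathcal{Z}}] = \frac{10}{9\boldsymbol{\Lambda}_i}\, E[\boldsymbol{\mathcal{Z}}].
\end{equation*}
Substituting $E[\boldsymbol{\mathcal{Z}}] = \boldsymbol{\Lambda}_i / \lg \boldsymbol{\Lambda}_i$ from Lemma \ref{lemma:1l}, the factors of $\boldsymbol{\Lambda}_i$ cancel and the bound collapses to $\frac{10}{9\lg \boldsymbol{\Lambda}_i}$, as claimed.

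There is essentially no hard step here, since the argument is Markov's inequality in disguise; the whole derivation is linear. The only point requiring care is the justification of the pointwise inequality on the event $I = 0$, which relies on the non-negativity of $\boldsymbol{\mathcal{Z}}$ (and of $\boldsymbol{\Lambda}_i$), so I would state that hypothesis explicitly rather than leave it implicit. The one substantive caveat worth flagging is that the bound inherits whatever Lemma \ref{lemma:1l} gives for $E[\boldsymbol{\mathcal{Z}}]$: the phrasing ``at least $0.9\boldsymbol{\Lambda}_i$'' is a genuine tail event, so the reader should keep in mind that the estimate is only as meaningful as the underlying per-task execution probability $1/\lg\boldsymbol{\Lambda}_i$ assumed there.
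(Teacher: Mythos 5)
Your proposal is correct and follows essentially the same route as the paper: both are Markov's inequality in explicit form, introducing the indicator $I$ of the event $\{\boldsymbol{\mathcal{Z}} \geq 0.9\boldsymbol{\Lambda}_i\}$, using the pointwise bound $\boldsymbol{\mathcal{Z}}/(0.9\boldsymbol{\Lambda}_i) \geq I$, taking expectations, and substituting $E[\boldsymbol{\mathcal{Z}}] = \boldsymbol{\Lambda}_i/\lg\boldsymbol{\Lambda}_i$ from Lemma \ref{lemma:1l}. If anything, your treatment is slightly more careful than the paper's, since you justify the pointwise inequality on the event $I = 0$ via non-negativity of $\boldsymbol{\mathcal{Z}}$, a case the paper's argument silently skips.
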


\begin{observation}
If we consider the value of $\boldsymbol{\Lambda}_i$ as 1000, then $P\{\boldsymbol{\mathcal{Z}} \geq 900\} \leq  \frac{10}{9\lg 1000} = \frac{10}{27} = 0.37$. So, the probability that at least 900 tasks of $r_j^i$ will be executed out of 1000 will be less than or equal to $0.37$, which is quite low.
\end{observation}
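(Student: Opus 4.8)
The plan is to recognize the claimed bound as a direct instance of Markov's inequality applied to the non-negative random variable $\boldsymbol{\mathcal{Z}}$, whose expectation has already been pinned down in Lemma \ref{lemma:1l}. Since $\boldsymbol{\mathcal{Z}} = \sum_{i=1}^{\boldsymbol{\Lambda}_i} \boldsymbol{\mathcal{Z}}_i$ is a sum of $\{0,1\}$-valued indicator variables, it is automatically non-negative, which is the only structural hypothesis Markov's inequality needs. So the whole statement reduces to choosing the right threshold and substituting the known mean $E[\boldsymbol{\mathcal{Z}}] = \boldsymbol{\Lambda}_i / \lg \boldsymbol{\Lambda}_i$.

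First I would fix the threshold $a = 0.9\,\boldsymbol{\Lambda}_i > 0$ and make the argument self-contained rather than merely citing Markov by name. I introduce the indicator $I$ of the event of interest, namely $I = 1$ when $\boldsymbol{\mathcal{Z}} \geq 0.9\,\boldsymbol{\Lambda}_i$ and $I = 0$ otherwise. The key pointwise observation is that $\boldsymbol{\mathcal{Z}}/(0.9\,\boldsymbol{\Lambda}_i) \geq I$ holds deterministically: on the event $\{I=1\}$ the ratio is at least $1 = I$ by the defining inequality, and on $\{I=0\}$ the ratio is at least $0 = I$ by non-negativity of $\boldsymbol{\mathcal{Z}}$. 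Taking expectations of both sides and using $E[I] = P\{\boldsymbol{\mathcal{Z}} \geq 0.9\,\boldsymbol{\Lambda}_i\}$ together with linearity yields
\begin{equation*}
P\{\boldsymbol{\mathcal{Z}} \geq 0.9\,\boldsymbol{\Lambda}_i\} = E[I] \leq E\!\left[\frac{\boldsymbol{\mathcal{Z}}}{0.9\,\boldsymbol{\Lambda}_i}\right] = \frac{1}{0.9\,\boldsymbol{\Lambda}_i}\,E[\boldsymbol{\mathcal{Z}}].
\end{equation*}

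The last step is to substitute $E[\boldsymbol{\mathcal{Z}}] = \boldsymbol{\Lambda}_i / \lg \boldsymbol{\Lambda}_i$ from Lemma \ref{lemma:1l} and simplify, the factor $\boldsymbol{\Lambda}_i$ cancelling and $1/0.9 = 10/9$ producing the stated constant:
\begin{equation*}
P\{\boldsymbol{\mathcal{Z}} \geq 0.9\,\boldsymbol{\Lambda}_i\} \leq \frac{1}{0.9\,\boldsymbol{\Lambda}_i}\cdot\frac{\boldsymbol{\Lambda}_i}{\lg \boldsymbol{\Lambda}_i} = \frac{10}{9\,\lg \boldsymbol{\Lambda}_i},
\end{equation*}
which is exactly the claim. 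I do not expect any genuine obstacle here, since the result is essentially one line once Lemma \ref{lemma:1l} is in hand; the only points that deserve explicit mention are the non-negativity of $\boldsymbol{\mathcal{Z}}$ (immediate, as it counts executed tasks) and the bookkeeping of the constant $1/0.9 = 10/9$. Writing out the indicator inequality $\boldsymbol{\mathcal{Z}}/(0.9\,\boldsymbol{\Lambda}_i) \geq I$ explicitly is the cleanest way to keep the proof self-contained, as it simply reproduces the standard derivation of Markov's inequality in this concrete setting.
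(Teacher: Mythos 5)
Your derivation is correct and is essentially identical to the paper's own proof of Lemma \ref{lemma:1101}, which this observation merely instantiates: the same indicator variable $I$, the same pointwise inequality $\boldsymbol{\mathcal{Z}}/(0.9\,\boldsymbol{\Lambda}_i) \geq I$, taking expectations, and substituting $E[\boldsymbol{\mathcal{Z}}]$ from Lemma \ref{lemma:1l}. The only step you leave implicit is the final numerical substitution $\boldsymbol{\Lambda}_i = 1000$ under the paper's convention that $\lg$ denotes $\log_{10}$ (so $\lg 1000 = 3$, giving $\frac{10}{27} \approx 0.37$), which is trivial arithmetic.
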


\section{Experiments and Results}\label{sec:ef}
To complement our theoretical analysis, the QUAD is compared with the benchmark mechanisms namely PPM \cite{T.roughgarden_20141} and McAfee double auction (DA) \cite{RPM_The_1992, Jbre_INte_2005}. In order to measure the performance of QUAD in terms of \emph{truthfulness} and \emph{budget balanced} properties, it is compared with PPM on the following evaluation metrics: (1) Utility of agents, and (2) Utility of platform. 
The evaluation metric mentioned in point 1 above will be helpful in showing that QUAD is not vulnerable to manipulation whereas in case of PPM the participating agents gain by misreporting their true values. On the other hand the metric mentioned in point 2 will help us to show that QUAD is weakly budget balanced. It means that platform will have non-zero utility. Another direction of performance measurement of QUAD is in terms of satisfaction level of the IoT devices. Higher the incentives given to the IoT devices more satisfied and motivated the IoT devices will be in MCS market. In order to measure the performance of QUAD on the basis of satisfaction level, it is compared with McAfee DA on the ground of total charge. Higher the value of charges paid to the IoT devices, higher will be the satisfaction level of the IoT devices. The unit of bid values of the task requesters and IoT devices is taken as \$. The experiments are carried out in Java.
\subsection{Simulation Set-up}
For the simulation purpose, the task requesters and IoT devices are considered from 5 different categories. In each category, the number of task requesters, the number of IoT devices, and the bid value ranges of the task requesters and the IoT devices are considered as depicted in Table \ref{table_example3} below. In Table \ref{table_example3} the symbol $\mu$ represents the mean and symbol $\sigma$ represents the standard deviation of the bid values generated using normal distribution. In the experiments, the valuation of the agents are distributed among their respective virtual agents in such a way that it follows \emph{decreasing marginal return}. In our case, the experiment is executed for all the 5 categories for 100 times and the required values are plotted by taking the average over these 100 times.  In order to strengthen our claim the simulation is carried out for two different probability distributions, namely \emph{random distribution} (RanD) and \emph{normal distribution} (NanD). 

\begin{table}[H]
\renewcommand{\arraystretch}{0.99}  
\caption{Data set utilized for Simulation}
\label{table_example3}
\centering
  \begin{tabular}{c||c|c}
\hline
\textbf{Parameters} & \textbf{Values} & \textbf{Description}\\
\hline
$m_i$ & $\{5,~10,~15,~20,~25,~30\}$ & Number of task requesters in each category.\\
\hline
$n_i$ & $\{15,~30,~45,~60,~75,~90\}$ & Number of IoT devices in each category.\\
\hline
$\boldsymbol{\nu}^r$ & $[8,~30]$ & Bid value range of task requesters for RanD.\\
\hline
$\boldsymbol{\nu}^\mathcal{I}$ & $[5,~25]$ & Bid value range of IoT devices for RanD.\\
\hline
$\boldsymbol{\nu}^r$ (for NanD) & [$\mu = 15$, ~$\sigma = 4$] & Bid value determination of task requesters for NanD.\\
\hline
$\boldsymbol{\nu}^\mathcal{I}$ (for NanD) & [$\mu = 16$, ~$\sigma = 5$] & Bid value determination of IoT devices for NanD.\\

\hline
\end{tabular}
\end{table}
For comparing the QUAD and PPM on the ground of truthfulness, it is considered that $50\%$ of the agents misreport their true bid values in case of PPM and is represented as PPM-D in the simulation results. By misreporting the bid value it is meant that the agents will report the increased bid price so as to stay for longer duration of time in the MCS market.     
\subsection{Results and Comparison}
\label{ref:rc}
In this section, the discussion will mainly circumvent around the evaluation metrics depicted above. Figure \ref{fig:sim1} compares the two mechanisms namely QUAD and PPM on the ground of truthfulness of the agents for RanD and NanD cases. From the construction of QUAD and PPM, the utility of the participating agents in case of QUAD can be more or less than the utility of the agents in case of PPM as depicted in Figure \ref{fig:simA} and \ref{fig:simB}. However, in case of PPM, when $50\%$ of the agents misreport their true bid values then the utility of the agents is more as compared to the case when all the agents reports truthfully. It is natural as PPM is vulnerable to manipulation, so the participating agents gain by misreporting their true bid values. 
\begin{figure}[H]
\begin{subfigure}[b]{0.49\textwidth}
                \includegraphics[scale=0.65]{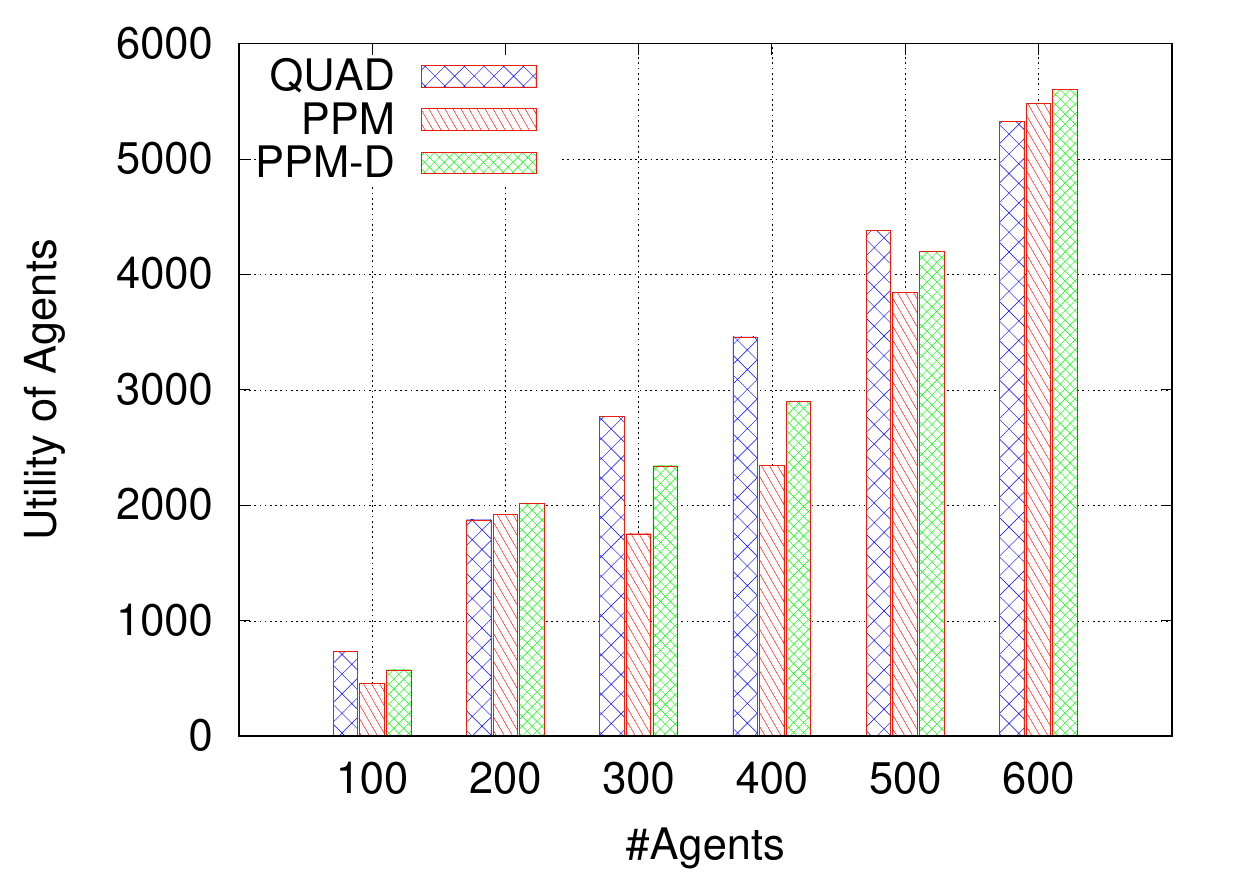}
                \subcaption{Utility of Agents (RanD)}
                \label{fig:simA}
        \end{subfigure}%
        \begin{subfigure}[b]{0.49\textwidth}
                \includegraphics[scale=0.65]{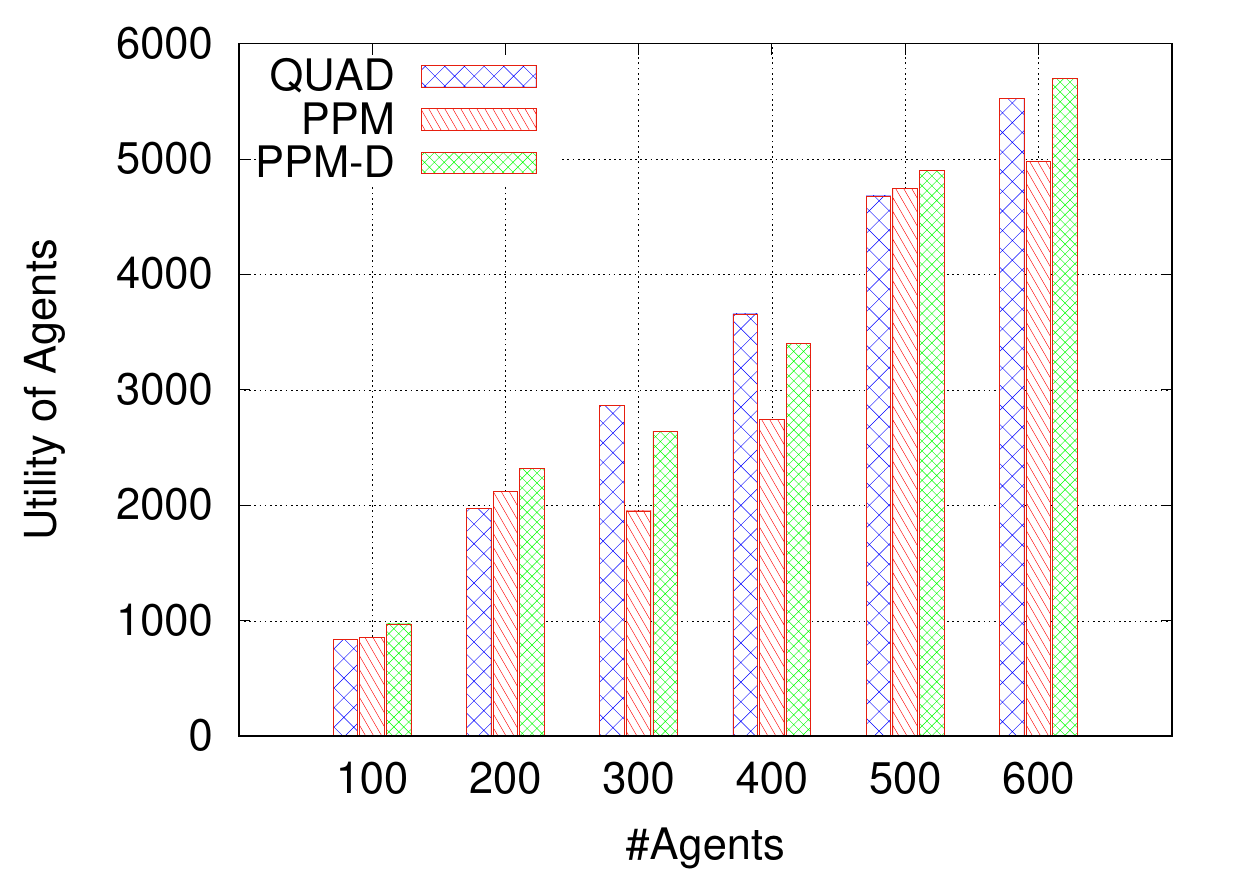}
        \subcaption{Utility of Agents (NanD)}\label{fig:simB}
        \end{subfigure}
        \caption{Comparison of Utility of Agents for RanD and NanD cases}
        \label{fig:sim1}
\end{figure}
 \indent  In Figure \ref{fig:sim1}, in some cases it can be seen that the utility of the agents in case of PPM with $50\%$ deviation is even more than the utility of the agents in case of QUAD. On the other hand, as QUAD works on the principle of Vickery auction \cite{WVic_Fin_1961} and so is not vulnerable to manipulation. It means that the participating agents can maximize their utility by only reporting truthfully in case of QUAD and is seen in Figure \ref{fig:simA} and \ref{fig:simB}.\\
 \indent Considering the second evaluation metric $i.e.$ utility of platform. In Figure \ref{fig:simC} and \ref{fig:simD} it can be seen that the utility of platform in case of QUAD could be more or less as compared to the utility of platform in case of PPM. However, for both the mechanisms $i.e.$ QUAD and PPM the utility of the platform will be non-negative irrespective of the type of distribution followed by the bid values of the agents. In such case it can be inferred from Figure \ref{fig:simC} and Figure \ref{fig:simD} that the two mechanisms is weakly budget balanced (see Definition \ref{def:4}). From the above discussion it can be inferred that the QUAD is truthful and both the mechanism satisfies one of the economic properties called budget balance.           
\begin{figure}[H]
\begin{subfigure}[b]{0.49\textwidth}
                \includegraphics[scale=0.65]{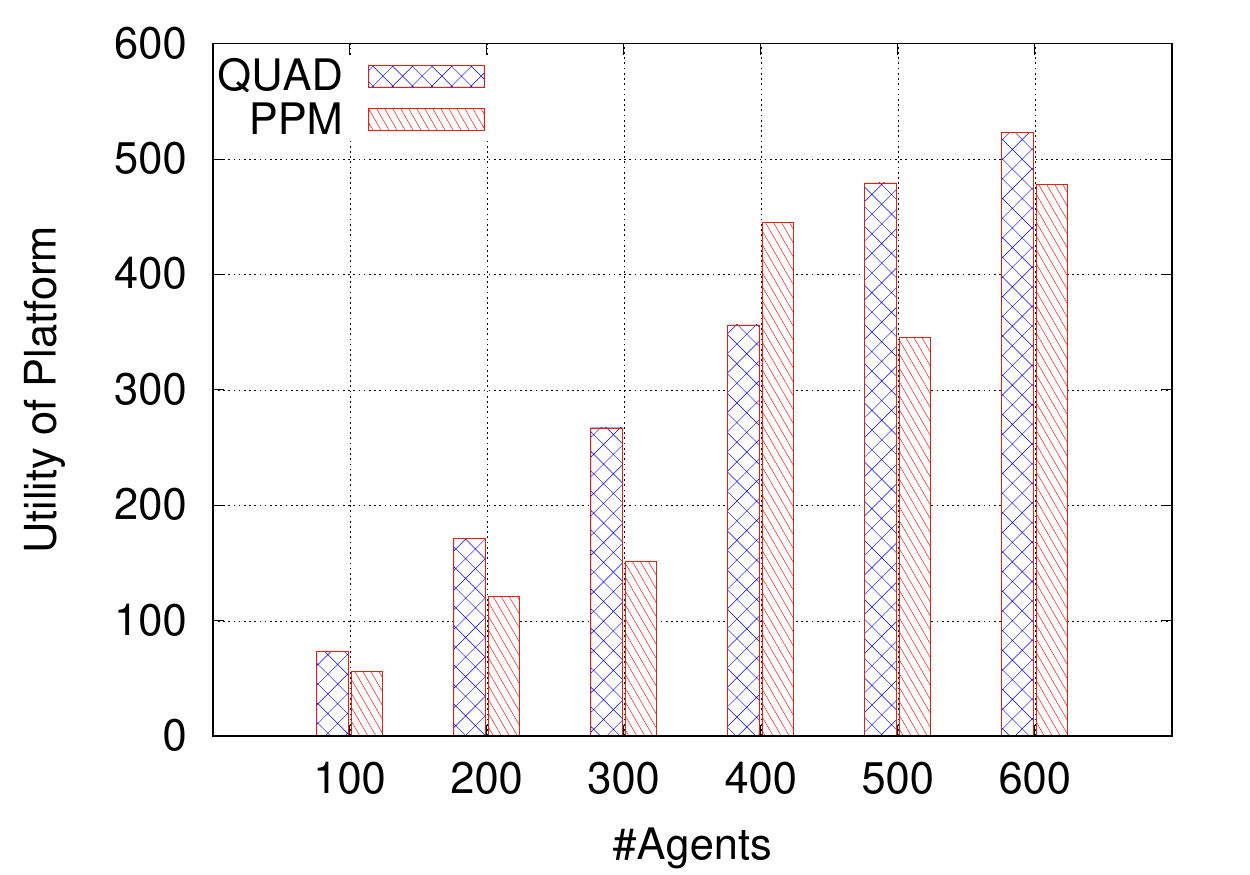}
                \subcaption{Utility of Platform (RanD)}
                \label{fig:simC}
        \end{subfigure}%
        \begin{subfigure}[b]{0.49\textwidth}
                \includegraphics[scale=0.65]{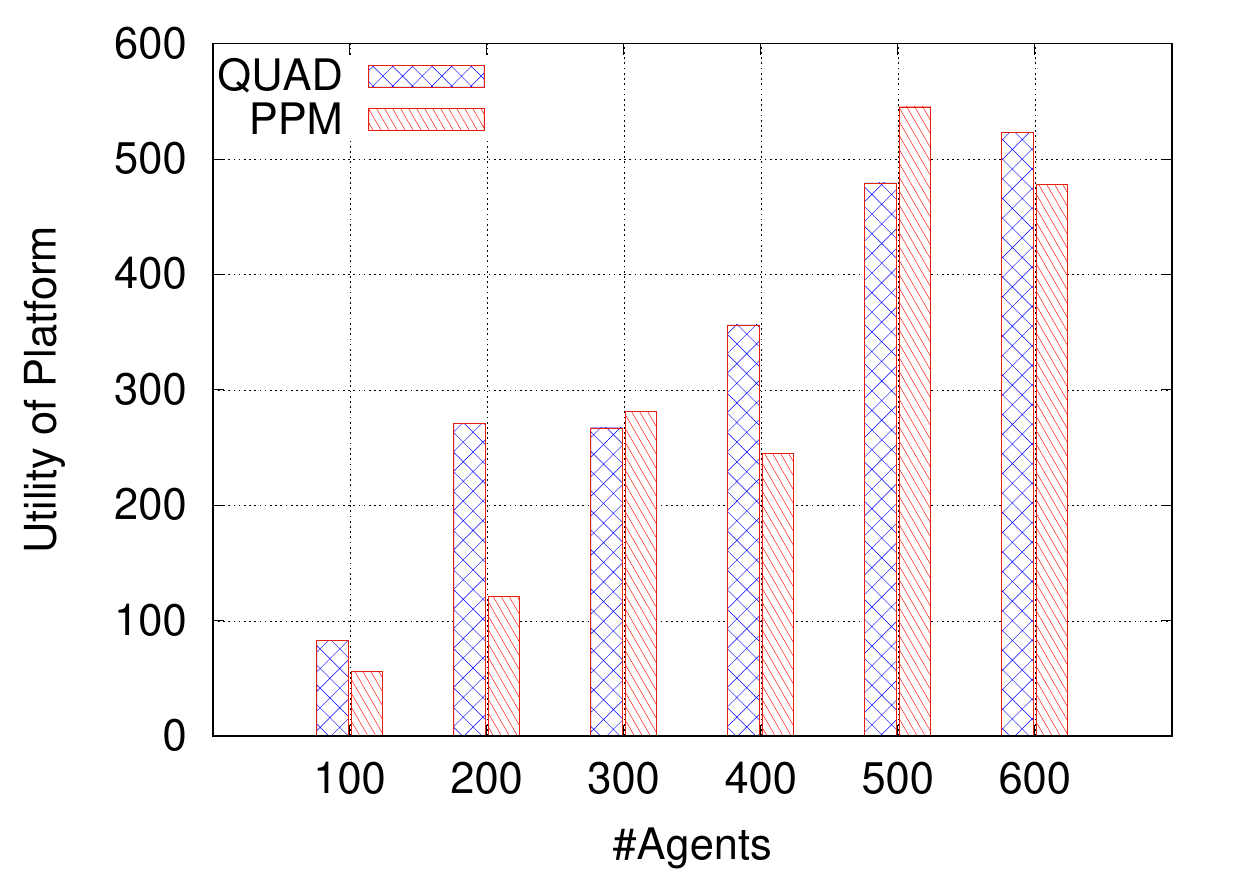}
        \subcaption{Utility of Platform (NanD)}
        \label{fig:simD}
        \end{subfigure}
        \caption{Comparison of Utility of Platform for RanD and NanD cases}
        \label{fig:sim22}
\end{figure}
   
\begin{figure}[H]
\begin{subfigure}[b]{0.49\textwidth}
                \includegraphics[scale=0.38]{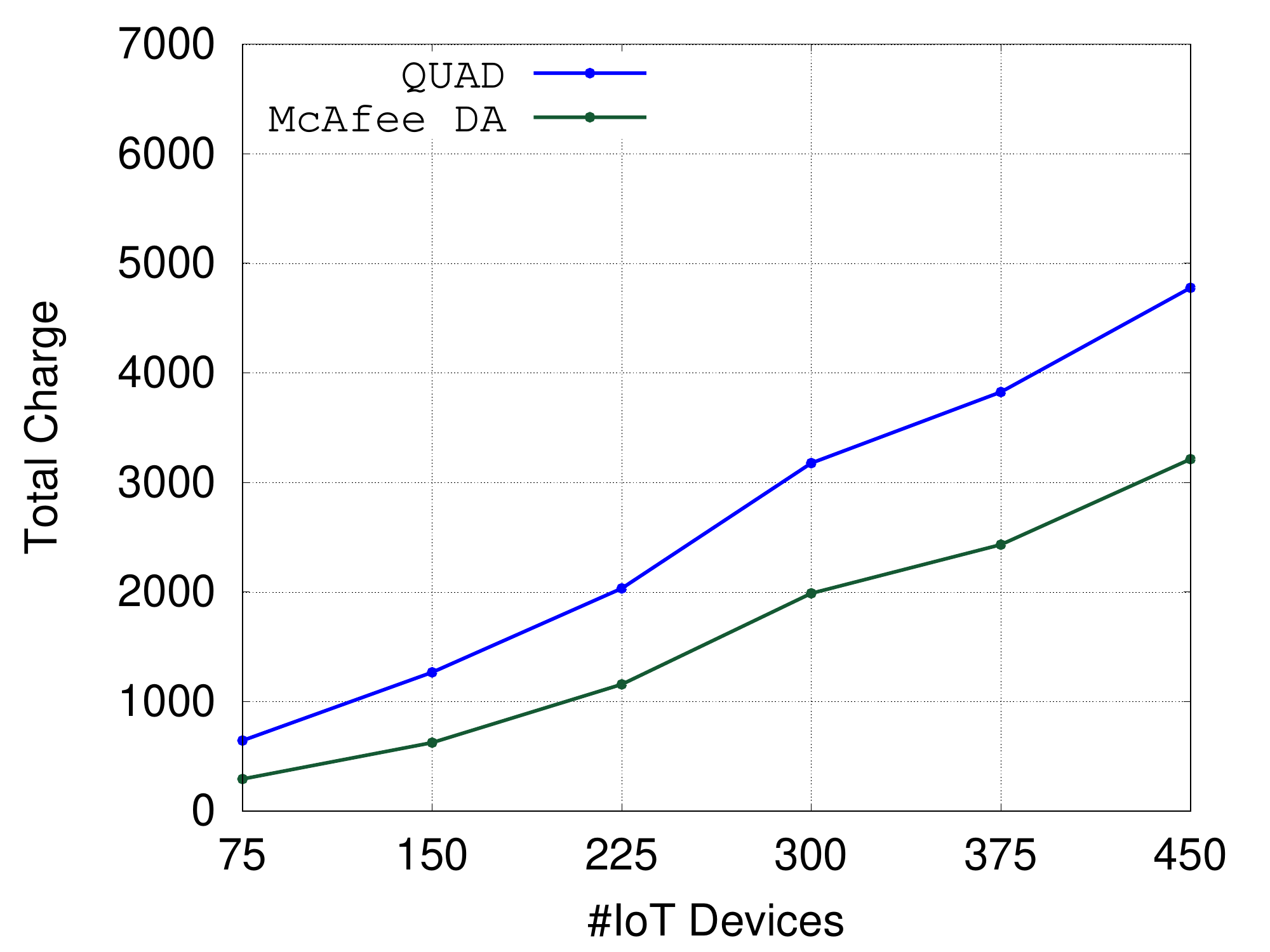}
                \subcaption{Total Charge Received by IoT Devices (RanD)}
                \label{fig:simE2}
        \end{subfigure}%
        \begin{subfigure}[b]{0.49\textwidth}
                \includegraphics[scale=0.38]{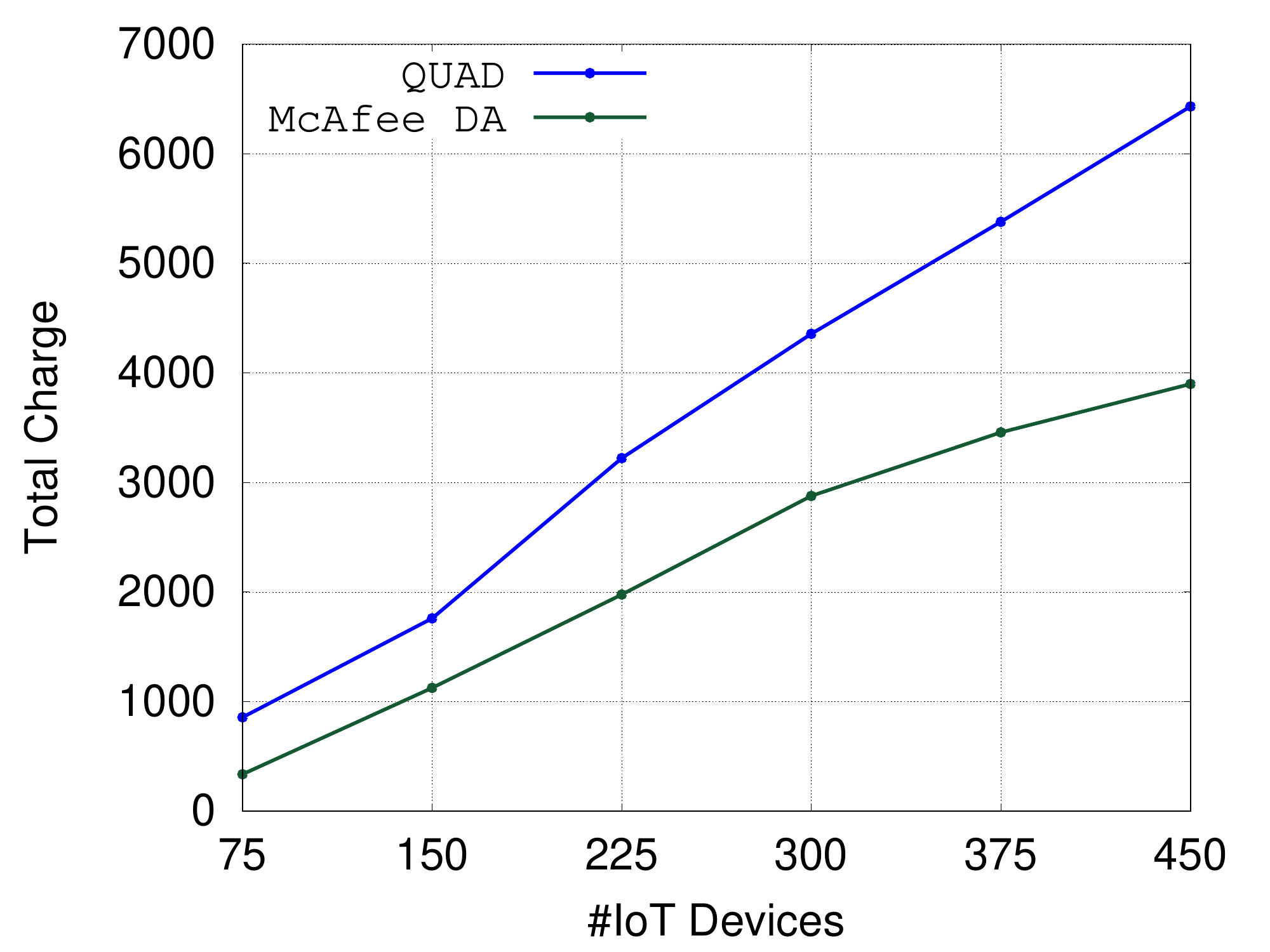}
        \subcaption{Total Charge Received by IoT Devices (NanD)}
        \label{fig:simF2}
        \end{subfigure}
        \caption{Comparison of Charge received by IoT Devices for RanD and NanD cases}
        \label{fig:sim33}
\end{figure}
\indent Figure \ref{fig:simE2} and \ref{fig:simF2} compares QUAD and McAfee double auction (DA) on the basis of total charges made to the IoT devices for both RanD and NanD. It can be seen that the total charge of the IoT devices is high in case of QUAD a compared to McAfee DA in both RanD and NanD. As it is obvious because in case of posted price ascending auctions the selling price is generally high as compared to the sealed bid auction and so the IoT devices are paid high. Due to this reason the satisfaction level of winning IoT devices in case QUAD will be more than the satisfaction level of the winning task requesters in case of McAfee DA. So, if it is seen from the perspective of IoT devices then QUAD will be more viable mechanism as compared to McAfee DA.
\begin{figure}[H]
\begin{subfigure}[b]{0.49\textwidth}
                \includegraphics[scale=0.38]{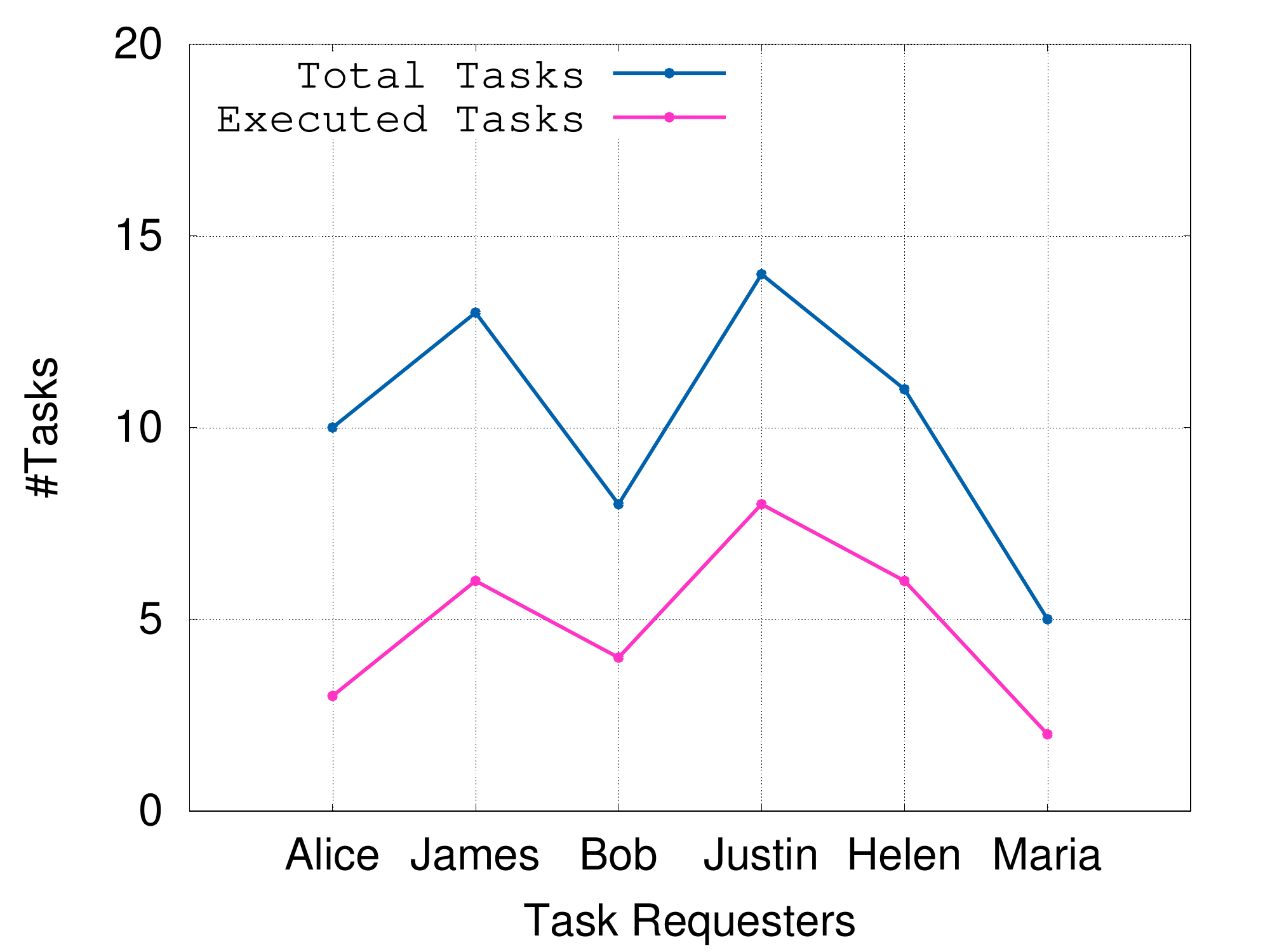}
                \subcaption{$\#$Tasks Executed by IoT Devices (RanD)}
                \label{fig:simE}
        \end{subfigure}%
        \begin{subfigure}[b]{0.49\textwidth}
                \includegraphics[scale=0.38]{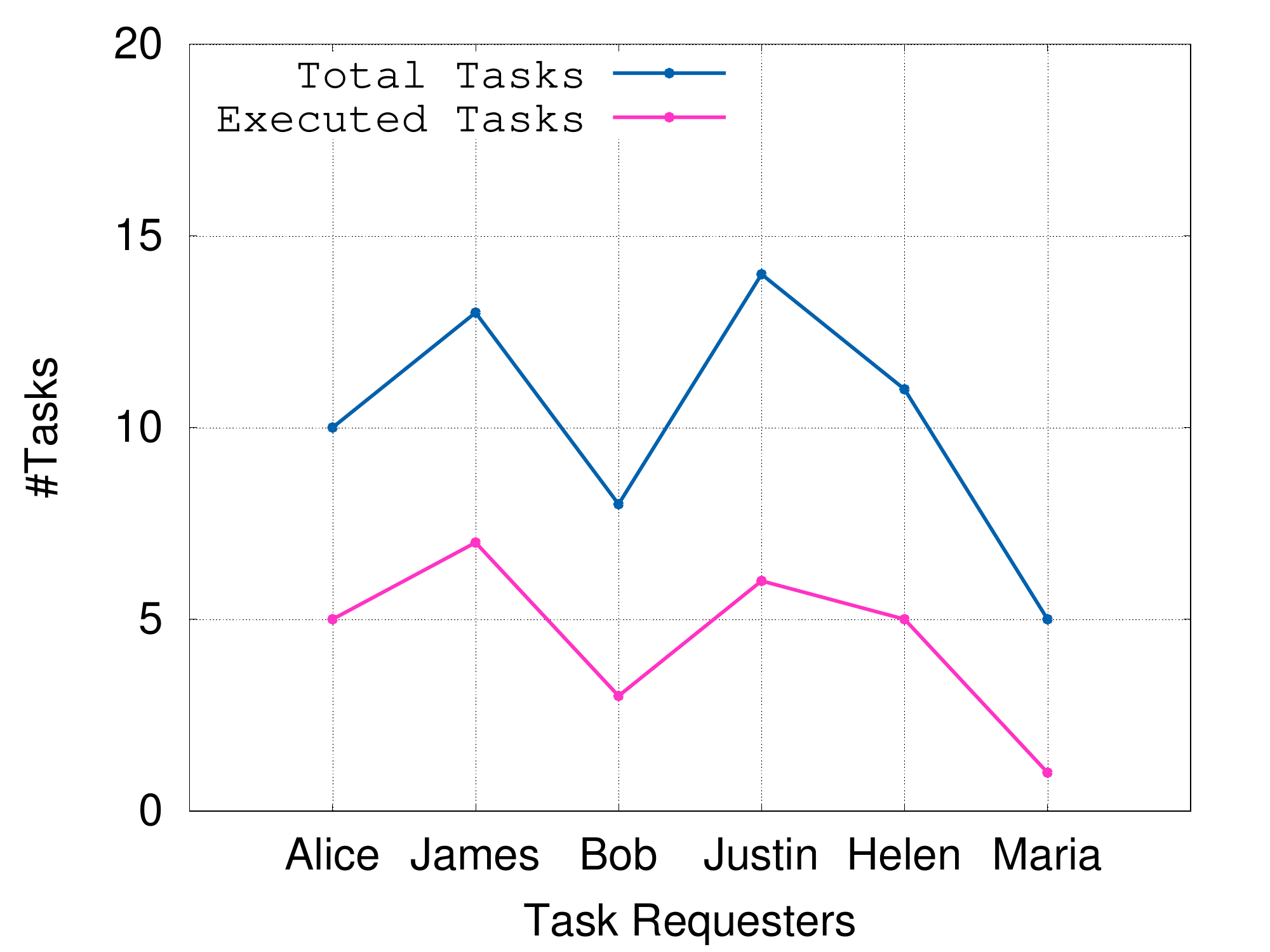}
        \subcaption{$\#$Tasks Executed by IoT Devices (NanD)}
        \label{fig:simF}
        \end{subfigure}
        \caption{Comparison of $\#$Tasks Executed by IoT Devices for RanD and NanD cases}
        \label{fig:sim33}
\end{figure}

\indent The simulation results in Figure \ref{fig:simE} and \ref{fig:simF} shows that in case of QUAD the average number of tasks executed by the quality IoT devices is almost half the total number of available tasks. The x-axis of the graphs shown in Figure \ref{fig:simE} and \ref{fig:simF} represents the task requesters and y-axis represents the number of tasks. For all the 5 task requesters the number of tasks that got executed is almost half the number of tasks endowed by the respective task requesters irrespective of the distribution of data. The similar results we obtained from the probabilistic analysis carried out in Lemma \ref{lemma:1l}.

\section{Conclusion and Future Directions}
\label{se:conc}
In this paper, for the discussed scenario in IoT-based mobile crowdsensing, a truthful mechanism is proposed that along with satisfying other economic properties such as \emph{individual rationality}, and \emph{budget balance} keeps into account the quality of IoT devices. By analysis it is shown that QUAD is \emph{computationally efficient}, \emph{truthful}, \emph{individual rational}, and \emph{prior free}. Further the probabilistic analysis is carried out to have an estimate on the number of tasks of any task requester got executed by the quality IoT devices among the available ones. The simulation results compares QUAD with the benchmark mechanisms on the ground of truthfulness, budget balance, and satisfaction level of the IoT devices.\\
\indent In future, one of the challenges could be to design a \emph{truthful} mechanism for the set-up discussed in this paper for general valuation (not DMR). Another direction could be, considering the set-up with multiple task providers and multiple IoT devices, where the task providers will provide the heterogeneous tasks. Here, the challenge will be to design a \emph{truthful} mechanism for the cases when the participating agents have DMR valuation or general valuation.

\bibliographystyle{acmsmall}
\bibliography{phd}

\end{document}